\documentclass{article}
\usepackage{amsmath,inputenc}

\title{Online Learning for Equilibrium Pricing in Markets under Incomplete Information}
\author{Devansh Jalota$^{*\dagger}$, Haoyuan Sun$^{*\ddagger}$, and Navid Azizan$^\ddagger$\\
\thanks{$^*$Equal Contribution}
\\
$^\dagger$Stanford University\\
$^\ddagger$Massachusetts Institute of Technology
}
\date{}

\usepackage{graphicx}
\usepackage{csvsimple}
\usepackage{rotating}
\usepackage[letterpaper, portrait, margin=1in]{geometry}
\usepackage{hyperref}
\hypersetup{
    colorlinks=true,
    linkcolor=blue,
    citecolor=red,
    filecolor=magenta,      
    urlcolor=cyan
    }

\usepackage[nocomma]{optidef}
\usepackage{listings}
\usepackage{comment}
\usepackage{appendix}
\usepackage[ruled,vlined]{algorithm2e}
\usepackage{amsfonts} 
\usepackage{amssymb}
\usepackage{bm}
\usepackage[square,comma,numbers,sort]{natbib}

\usepackage{tikz}
\usepackage{pgfplots}
\DeclareUnicodeCharacter{2212}{−}
\usepgfplotslibrary{groupplots,dateplot}
\usetikzlibrary{patterns,shapes.arrows}
\pgfplotsset{compat=1.17}

\usepackage{caption}
\usepackage{subcaption}

\usepackage{color} \definecolor{mygreen}{RGB}{28,172,0} \definecolor{mylilas}{RGB}{170,55,241}
\DeclareFixedFont{\ttb}{T1}{txtt}{bx}{n}{12} \DeclareFixedFont{\ttm}{T1}{txtt}{m}{n}{12}  \usepackage{bbm}
\usepackage{amsmath,amsthm}

\newtheorem{theorem}{Theorem}
\newtheorem{corollary}{Corollary}
\newtheorem{proposition}{Proposition}

\newtheorem{lemma}{Lemma}

\newtheorem*{claim}{Claim}

\theoremstyle{definition}
\newtheorem{definition}{Definition}
\newtheorem{remark}{Remark}
\newtheorem{example}{Example}

\newcommand{\norm}[1]{\left\lVert#1\right\rVert}

\usepackage{color}
\definecolor{deepblue}{rgb}{0,0,0.5}
\definecolor{deepred}{rgb}{0.6,0,0}
\definecolor{deepgreen}{rgb}{0,0.5,0}

\usepackage{listings}

\renewcommand\footnotemark{}

\def\S{\mathcal{S}}
\def\ppi{\boldsymbol{\pi}}

\newcommand{\RR}{\mathbb{R}}
\newcommand{\EE}{\mathbb{E}}

\newcommand{\est}{\textsc{Est}_{\mathrm{sq}}}
\newcommand{\preg}{\widetilde{\textsc{Reg}}}

\DeclareMathOperator*{\argmin}{\mathrm{arg\,min}}
\DeclareMathOperator*{\argmax}{\mathrm{arg\,max}} 

\begin{document}

\maketitle

\begin{abstract}
The computation of \emph{equilibrium} prices at which the supply of goods matches their demand typically relies on complete information on agents’ private attributes, e.g., suppliers' cost functions, which are often unavailable in practice. Motivated by this practical consideration, we consider the problem of learning equilibrium prices over a horizon of $T$ periods in the incomplete information setting wherein a market operator seeks to satisfy the customer demand for a commodity by purchasing it from competing suppliers with cost functions unknown to the operator. We first consider the setting when suppliers' cost functions are fixed and develop algorithms that, on three pertinent regret metrics, simultaneously achieve a regret of \ $O(1)$ \ when the customer demand is constant over time, and \ $O(\sqrt{T})$ \ when the demand varies over time. In the setting when the suppliers' cost functions vary over time, we demonstrate that, in general, no online algorithm can achieve sublinear regret on all three metrics. Thus, we consider an augmented setting wherein the operator has access to hints/contexts that reflect the variation in the cost functions and propose an algorithm with sublinear regret in this augmented setting. \ Finally, we present numerical experiments that validate our results and discuss various model extensions.  \

\end{abstract}

\section{Introduction}

The study of market mechanisms for efficiently allocating scarce resources traces back to the seminal work of~\citet{walras1954elements}. In his work, Walras investigated the design of pricing schemes to mediate the allocation of scarce resources such that the economy operates at an \emph{equilibrium}, i.e., the supply of each good matches its demand. Market equilibria exist under mild conditions on agents' preferences~\citep{arrow-debreu} and, under convexity assumptions on their preferences, can often be computed by solving a large-scale centralized optimization problem. As a case in point, in electricity markets with convex supplier cost functions, the equilibrium prices correspond to the shadow prices~\citep{beato,turvey} of a convex optimization problem that minimizes the sum of the supplier costs subject to a market-clearing (or load-balance) constraint~\citep{azizan2020optimal}.

While methods such as convex programming provide computationally tractable approaches to computing market equilibria, the efficacy of such centralized optimization approaches for equilibrium computation suffers from several inherent limitations. First, centralized optimization approaches rely on complete information on agents' utilities and cost functions that are often unavailable to a market operator. For instance, with the deregulation of electricity markets, suppliers' cost functions are private information, which has led to strategic bidding practices by suppliers seeking to maximize their profits~\citep{azizan2017opportunities,liberopoulos2016critical,VENTOSA2005897,david2000strategic} and has been associated with tens of millions of dollars of over-payments to suppliers~\citep{strategic-bidding-jp-morgan}. Moreover, even if a market operator has access to some information on agents' utilities and cost functions, such information can typically only provide a noisy or imperfect estimate of their preferences due to inadequate information or uncertainty~\citep{weitzman-pvq}. In the context of electricity markets, the advent of renewables and distributed energy resources has accompanied a high degree of uncertainty in the supply of energy to meet customer demands at different times of the day and year, as these energy sources are sensitive to weather conditions. To further compound these challenges, agents' preferences in markets such as electricity markets may also be time-varying, e.g., in electricity markets, customer demands may change over time and the cost functions of suppliers may depend on fluctuating weather conditions. Thus, a market operator may need to periodically collect agents' preferences and solve a large-scale centralized optimization at each time period to set equilibrium prices, which may be computationally challenging.

Motivated by these practical considerations that limit the applicability of centralized optimization approaches to computing equilibrium prices, in this work, we study the problem of setting equilibrium prices in the incomplete information setting where a market operator seeks to satisfy customer demand for a commodity by purchasing the required amount from competing suppliers with privately known cost functions. We investigate this problem under several informational settings regarding the time-varying nature of the customer demands and supplier cost functions and develop online learning \emph{posted-price} algorithms for each of these settings that iteratively adjust the prices in the market over time. Our proposed algorithms employ the observation that a market operator can effectively learn information on suppliers' costs and equilibrium prices through observations of their cumulative production relative to the customer demand given different market prices. To analyze the performance of our algorithms, we combine techniques from online learning and parametric optimization as we seek to simultaneously optimize multiple, often competing, performance metrics pertinent in the context of equilibrium pricing.

\subsection{Contributions}

In this work, we study the problem of setting equilibrium prices faced by a market operator that seeks to satisfy an inelastic customer demand for a commodity by purchasing the required amount from $n$ competing suppliers. Crucially, we study this problem in the incomplete information setting when the cost functions of suppliers are private information and thus unknown to the market operator. Since traditional centralized methods of setting equilibrium prices are typically not conducive in this incomplete information setting, we consider the problem of learning equilibrium prices over $T$ periods to minimize three performance (regret) metrics: (i) \emph{unmet demand}, (ii) \emph{cost regret}, and (iii) \emph{payment regret}. Here unmet demand refers to the cumulative difference between the demand and the total production of the commodity corresponding to an online pricing policy. Furthermore, cost regret (payment regret) refers to the difference between the total cost of all suppliers (payment made to all suppliers) corresponding to the online allocation and that of the offline oracle with complete information on suppliers' cost functions. For a more thorough discussion of these regret metrics, we refer to Section~\ref{sec:perf-measures}.

In this incomplete information setting, we investigate the design of posted-price online algorithms to set a sequence of prices that achieve sub-linear regret, in the number of periods $T$, on the above three performance metrics. To this end, we first consider the setting when suppliers' cost functions are fixed over the $T$ periods and develop algorithms that achieve a regret of \ $O(1)$ \ when the customer demand is constant over time (Section~\ref{sec:fixed-setting}), and \ $O(\sqrt{T})$ \ when the demand is variable over time (Section~\ref{sec:vary-demand}), for strongly convex cost functions.  Notably, our algorithms do not rely on any information on the time horizon $T$ to achieve these regret guarantees. To establish these regret guarantees for the three performance metrics, we leverage and combine techniques from parametric optimization and online learning. We further demonstrate through an example that if the strong convexity condition
on suppliers' cost functions is relaxed, then, in general, no online algorithm can achieve a sub-linear regret guarantee on all three regret metrics.

Then, we consider the setting when suppliers' cost functions can vary across the $T$ periods (Section~\ref{sec:time-vary-cost-main}) and show that if the operator does not know the process that governs the variation of the cost functions, no online algorithm can achieve sub-linear regret on all three regret metrics. Thus, in alignment with real-world markets, e.g., electricity markets, we consider an augmented setting, wherein the market operator has access to some hints (contexts) that, without revealing the complete specification of the cost functions, reflect the change of the cost functions over time. In this setting, we propose a posted-price algorithm that achieves sub-linear regret on all three performance metrics, where the exact dependence of the regret guarantee on $T$ relies on the statistical properties of the function class that suppliers' cost functions belong to.

\
Next, we present numerical experiments that further corroborate our regret guarantees (Section~\ref{sec:experiments}) based on two market instances inspired
by the electricity markets literature. In particular, we validate the regret bounds for Algorithms~\ref{alg:fixed-demand-constant-reg} and~\ref{alg:time-varying-demand-new-sqrt} on these two market instances for the settings with fixed costs and demands and fixed cost but variable demands, respectively. 
\
We further validate our impossibility result on developing an algorithm with sub-linear regret in the variable cost function setting (without access to contexts) by numerically showing that a widely studied primal-dual algorithm in the online learning literature achieves linear regret on at least one of the three performance metrics in our studied problem setting.

\
Finally, in Section~\ref{sec:discussion}, we outline several natural extensions of the model studied in this work that incorporate key practical considerations and offer promising directions for further research. To this end, we first present an illustrative example that highlights the challenges in considering forward-looking suppliers who strategically misreport production in certain periods to induce more favorable prices in subsequent periods. Such behavior can lead to incorrect learning of equilibrium prices, underscoring the need for algorithms robust to strategic manipulation. We then explore relaxations of the unmet demand regret metric, e.g., allowing for storage, and demonstrate that even under some of these relaxed formulations, the impossibility results established in this work continue to hold.
\

\section{Literature Review}

The design of market mechanisms to efficiently allocate resources under incomplete information on agents' preferences and costs has received considerable attention in the operations research, economics, and computer science communities. For instance, mechanism design has enabled designing optimal resource allocation strategies even in settings when certain information is privately known to agents~\citep{akbarpour2020redistributive,NIPS2004_fc03d482,heydaribeni2018distributed,pmlr-v119-deng20d}. Furthermore, inverse game theory~\citep{data-inverse-opt} and revealed preference based approaches~\citep{balcan2014learning,bei2016learning,beigman2006learning,zadimoghaddam2012efficiently} have emerged as methods to learn the underlying utilities and costs of agents given past observations of their actions. While, in line with these works, we consider an incomplete information setting wherein suppliers' cost functions are private information, we do not directly learn or elicit suppliers' cost functions to make pricing decisions as in these works and instead study the problem of learning equilibrium prices as an online decision-making problem.

The paradigm of online decision making has enabled the allocation of scarce resources in settings with incomplete information where data is revealed sequentially to an algorithm designer and has found several applications~\citep{msvv,manshadi2021fair,lien2014sequential}. Two of the most well-studied classes of online decision-making problems include online linear programming (OLP) and online convex optimization (OCO). While OLP has been studied extensively under different informational settings, including the adversarial~\citep{msvv,TCS-057}, random permutation~\citep{online-agrawal,devanur-adwords}, and stochastic input models~\citep{li2020simple,li2021symmetry,chen2021linear,li2021online}, in this work, we consider the setting when suppliers' cost functions are convex and, in general, non-linear. Given the prevalence of non-linear objectives in various resource allocation settings, there has been a growing interest in OCO~\citep{OPT-013}, wherein several works~\citep{AgrawalD15,balseiro2022best,balseiro2021regularized} have investigated the design of algorithms with near-optimal regret guarantees under the adversarial, random permutation, and stochastic input models. 
Additionally, there have been many works on a smoothed variant of OCO where the agent must pay switching costs for changing decisions between rounds~\citep{lin2012online,goel2019beyond,chen2018smoothed,bansal20152}.
As in the works on OCO, we also consider general non-linear convex objectives; however, as opposed to the resource constraints that only need to be satisfied in aggregate over the entire time horizon in these works, we adopt a stronger performance metric where we accumulate regret at each period when the customer demand is not satisfied (see Section~\ref{sec:perf-measures} for more details on our performance metrics).
Thus, our algorithms are considerably different from the algorithms based on applying dual sub-gradient descent developed in these works on OCO~\citep{AgrawalD15,balseiro2022best,balseiro2021regularized}.

Our algorithms are inspired by the multi-armed bandit literature and involve a tradeoff between exploration and exploitation in an unknown environment~\citep{freund1999adaptive,auer2002nonstochastic}.
In a typical multi-armed bandit (MAB) setting, a decision-making agent performs sequential trials on a set of permissible actions (arms), observes the outcome of the actions, and maximizes its rewards.
Several extensions of MAB have been proposed over the years, including bandits with partial observations~\citep{bartok2014partial}, contextual bandits~\citep{langford2007epoch,slivkins2011contextual}, Lipschitz bandits~\citep{mab-ms-2008}, and bandits with constrained resources~\citep{agrawal2016efficient, badanidiyuru2018bandits}.
These results have contributed to many applications, such as online posted-price auctions~\citep{oppa}, dynamic pricing with limited supply~\citep{babaioff2015dynamic}, and dark pools in the stock market~\citep{agarwal2010optimal}.
In typical bandit frameworks, the decision maker's objective is to optimize a single reward function, wherein the rewards are revealed sequentially as part of the observation to the decision maker.
However, in our setting, we seek to jointly optimize multiple performance metrics where suppliers' cost functions are not revealed to the market operator (see Section~\ref{sec:perf-measures} for details).
 
\section{Model}
\label{sec:model}

In this section, we present the offline model of a market operator seeking to set equilibrium prices in the market to satisfy a customer demand for a commodity (Section~\ref{sec:market-model}) and performance metrics used to evaluate the efficacy of a pricing policy in the online setting (Section~\ref{sec:perf-measures}).

\subsection{Market Model and Equilibrium Pricing} \label{sec:market-model}

We study a market run by an operator seeking to meet the customer demand $d>0$ for a commodity, e.g., energy, by purchasing the required amount from $n$ competing suppliers. Each supplier $i \in [n]$ has a cost function $c_i: \mathbb{R}_{\geq 0} \rightarrow \mathbb{R}_{\geq 0}$, where $c_i(x_i)$ represents the cost incurred by supplier $i$ for producing $x_i$ units of the commodity. Furthermore, to meet the customer demand, the market operator posts a price $p$ in the market, which represents the payment made by the market operator for each unit of the commodity produced by a given supplier. In particular, for producing $x_i$ units of the commodity, a supplier $i$ receives a payment of $p x_i$ from the market operator. Then, given a posted price $p$ for the commodity and a cost function $c_i(\cdot)$, each supplier makes an individual decision on the optimal production quantity $x_i^*(p)$ to maximize their total profit, as described through the following optimization problem
\begin{maxi}|s|[2]                   {x_i \geq 0}                               {p x_i - c_i(x_i). \label{eq:supObj}}   {}             {}                                \end{maxi}

The posted price that suppliers best respond to is set by a market operator that seeks to determine an equilibrium price $p^*$ that satisfies the following three desirable properties: 

\begin{enumerate}
    \item Market Clearing: The total supply equals the total demand, i.e., $\sum_{i = 1}^n x_i^*(p^*) = d$.
    \item Minimal Supplier Production Cost: The total production cost of all suppliers, given by $\sum_{i = 1}^n c_i(x_i^*(p^*))$, is minimal among all feasible production quantities $x_i \geq 0$ for all suppliers $i \in [n]$ satisfying the customer demand, i.e., $\sum_{i = 1}^n x_i = d$.
    \item Minimal Payment: The total payment made to all suppliers, given by $\sum_{i = 1}^n p^* x_i^*(p^*)$, is minimal among all feasible production quantities $x_i \geq 0$ for all suppliers $i \in [n]$ satisfying the customer demand. \end{enumerate}

While these properties are, in general, not possible to achieve simultaneously, e.g., in markets where the supplier cost functions are non-convex~\citep{azizan2020optimal}, in markets where the cost functions $c_i(\cdot)$ of all suppliers are convex, there exists an equilibrium price $p^*$ that satisfies the above three properties. Moreover, in markets with convex cost functions, the equilibrium price can be computed through the dual variables of the market-clearing constraint of the following convex optimization problem
\begin{mini!}|s|[2]                   {x_{i} \geq 0, \forall i \in [n]}                               {\sum_{i = 1}^n c_i(x_{i}), \label{eq:supObj2}}   {\label{eq:minCost}}             {C^* = }                                \addConstraint{\sum_{i = 1}^n x_{i}}{= d, \label{eq:demand-con}} 
\end{mini!}
where~\eqref{eq:supObj2} is the minimum supplier production cost objective and~\eqref{eq:demand-con} is the market clearing constraint. 
In particular, from the KKT condition, the optimal solution to Problem~\eqref{eq:supObj2}-\eqref{eq:demand-con} satisfies
\[
\begin{cases}
    \sum_{i=1}^n x^*_i = d, x_i^* \geq 0, \forall \,  i = 1, \dots, n, \\
    \frac{\partial c_i}{\partial x_i}(x^*_i) \ge p^*, \forall \,  i = 1, \dots, n, \\
    \frac{\partial c_i}{\partial x_i}(x^*_i) = p^*, \forall \,  i \text{ s.t. } x_i^*>0,
\end{cases}
\]
so that the optimal dual variable $p^*$ satisfies all conditions of equilibrium pricing.
While the equilibrium price $p^*$ has several desirable properties, such an equilibrium price typically cannot be directly computed by solving Problem~\eqref{eq:supObj2}-\eqref{eq:demand-con} as the cost functions of suppliers are, in general, unknown to the market operator. Furthermore, both the cost functions of the suppliers and the customer demands tend to be time-varying and thus would involve the market operator periodically re-solving Problem~\eqref{eq:supObj2}-\eqref{eq:demand-con} to determine equilibrium prices at short time intervals, which may be computationally prohibitive. To overcome these challenges, in this work, we propose online learning algorithms to learn equilibrium prices over multiple periods in the incomplete information setting when the cost functions of the suppliers are unknown (or only partially known) to the market operator.

\subsection{Performance Metrics to Set Equilibrium Prices in Online Setting} \label{sec:perf-measures}

We now introduce the online learning setting, wherein the market operator sets prices for the commodity over multiple periods, and present the performance metrics to evaluate the efficacy of an online pricing policy. In particular, we consider the setting when the market operator seeks to satisfy the customer demand over multiple periods $t = 1, \ldots, T$. At each period $t \in [T]$, the customer demand for the commodity is given by $d_t$ and each supplier $i \in [n]$ has a private cost function $c_{it}(\cdot)$ that is increasing, continuously differentiable, $\mu$-strongly convex, $\ell$-Lipschitz smooth, and normalized to satisfy $c_{it}(0) = 0$. 
\
We note that these assumptions on supplier cost functions are indeed standard in the study of electricity markets, and strongly convex cost functions, e.g., quadratic or smooth piece-wise quadratic cost functions, often serve as analytically tractable approximations to cost functions observed in practice (see, e.g.,~\cite{5628271,TSAOUSOGLOU2022111890,1102918,7796713,4112652}). 
\

We assume that the demand at each period $t$ lies in a bounded interval, i.e., $d_t \in [\underline{d}, \Bar{d}]$ for all $t$ for some $\underline{d}, \Bar{d}>0$. Furthermore, for ease of exposition, we normalize the set of feasible prices corresponding to any customer demand and realization of supplier cost functions to be such that the corresponding optimal price of the commodity belongs to the normalized interval $[0, 1]$.
\
We highlight that we only consider strongly convex cost functions of suppliers, as opposed to general convex costs. As we shall explain in the next section, in general, there is a fundamental performance limitation of any online algorithm in the incomplete information setting studied in this work for non-strongly convex cost functions.
\

In this work, we begin by considering the informational setting wherein the cost functions of the suppliers are fixed over time (see Sections~\ref{sec:fixed-setting} and~\ref{sec:vary-demand}) and upon observing the customer demand $d_t$, the market operator makes a pricing decision  (i.e., posts a price) $p_t$ that depends on the past observations of supplier productions, i.e., revealed preference feedback in response to set prices as in~\citet{roth2016watch,ji2018social,pmlr-v206-jalota23a,ofm-2023}, and the realized customer demands. In particular, over the $T$ periods, the market operator sets a sequence of prices given by the pricing policy $\ppi = (\pi_1, \ldots, \pi_T)$, where $p_t = \pi_t(\{ (x_{it'}^*)_{1 = 1}^n, d_{t'} \}_{t'=1}^{t-1}, d_t)$, where $x_{it}^*$ represents the optimal production quantity corresponding to the solution of Problem~\eqref{eq:supObj} for supplier $i$ at period $t$. 
When the pricing policy is evident from the context, we will overload the notation and simply write $\ppi = (p_1, p_2, \dots, p_T)$.
We then consider the informational setting when suppliers' cost functions are time-varying (see Section~\ref{sec:time-vary-cost-main}), for which we introduce an augmented problem setting and the corresponding class of online pricing policies that we consider in Section~\ref{sec:contextual-bandit}.

We evaluate the efficacy of an online pricing policy $\ppi$ using three regret metrics: (i) \textit{unmet demand}, (ii) \textit{cost regret}, and (iii) \textit{payment regret}. These regret metrics represent the performance loss of the policy $\ppi$ relative to the optimal offline algorithm with complete information on the three desirable properties of equilibrium prices elucidated in Section~\ref{sec:market-model}. We also note that these performance metrics naturally generalize to the augmented problem setting we consider when suppliers' cost functions are time-varying and present the corresponding generalizations of the regret metrics in Appendix~\ref{sec:new-regret-defs} for completeness.

\paragraph{Unmet Demand:} We evaluate the unmet demand of an online pricing policy $\ppi$ as the sum of the differences between the demand and the total supplier productions corresponding to the pricing policy $\ppi$ at each period $t$. In particular, for an online pricing policy $\ppi$ that sets a sequence of prices $p_1, \ldots, p_T$, the cumulative unmet demand is given by \vspace{-5pt}
\begin{align*}
    U_T(\ppi) = \sum_{t = 1}^T \left( d_t - \sum_{i = 1}^n x_{it}^*(p_t) \right)_+,
\end{align*}
where $x_{it}^*(p_t)$ is the optimal solution of Problem~\eqref{eq:supObj} for supplier $i$ at period $t$. 

\paragraph{Cost Regret:} We evaluate the cost regret of an online pricing policy $\ppi$ through the difference between the total supplier production cost corresponding to algorithm $\ppi$ and the minimum total production cost, given complete information on the supplier cost functions. Specifically, the cost regret $C_T(\ppi)$ of a policy $\ppi$ is \vspace{-5pt}
\begin{align*}
    C_T(\ppi) = \sum_{t = 1}^T \sum_{i = 1}^n \left(c_{it}(x_{it}^*(p_t)) - c_{it}(x_{it}^*(p_t^*))\right),
\end{align*}
where the price $p_t^*$ for each period $t \in [T]$ is the optimal dual variable of the market clearing constraint of Problem~\eqref{eq:supObj2}-\eqref{eq:demand-con} given the demand $d_t$ and cost functions $c_{it}$ for all $i \in [n]$.

\paragraph{Payment Regret:} Finally, we evaluate the payment regret of online pricing policy $\ppi$ through the difference between the total payment made to all suppliers corresponding to algorithm $\ppi$ and the minimum total payment, given complete information on the supplier cost functions. In particular, the payment regret $P_T(\ppi)$ of an algorithm $\ppi$ is given by \vspace{-5pt}
\begin{align*}
    P_T(\ppi) = \sum_{t = 1}^T \sum_{i = 1}^n \left(p_t x_{it}^*(p_t) -  p^*_t x_{it}^*(p^*_t) \right).
\end{align*}

In this work, we focus on developing algorithms that jointly optimize these three regret metrics over $T$ periods. 
In particular, because it is desirable that the performance of the pricing policy improves as the market operator receives more information, we intend to design algorithms that guarantee all three regret metrics are sub-linear in $T$.

A few comments about our regret metrics are in order. First, our unmet demand metric aligns with real-world markets, e.g., electricity markets, where the demand needs to be satisfied at each period, and overproduction at particular periods cannot compensate for unmet demand at subsequent periods. 
Therefore, we define our unmet demand metric as a stronger benchmark than the typical constraint violation metrics in the literature of jointly optimizing multiple regret metrics~\citep{yu2017online,jenatton2016adaptive,mahdavi2012trading}, where resource constraints only need to be approximately satisfied in the aggregate. 
Formally, $U_T(\ppi) = \sum_{t = 1}^T \left( d_t - \sum_{i = 1}^n x_{it}^*(p_t) \right)_+ \geq \left[ \sum_{t = 1}^T \left( d_t - \sum_{i = 1}^n x_{it}^*(p_t) \right) \right]_+$, where the latter term corresponds to the setting when the customer demand only needs to be satisfied in the long-run. Further, since we obtain regret guarantees for the above unmet demand metric, using techniques from parametric optimization, our regret guarantees naturally extend for the corresponding stronger notions of the payment and cost regret metrics as well. However, we present our payment and cost regret metrics in alignment with the classical regret metrics in the literature, wherein lower payments (costs) at particular periods can compensate for excess payments (costs) at other periods. 

\section{Fixed Cost Functions and Demand} \label{sec:fixed-setting}

We now investigate the design of online pricing policies achieving good performance on the three regret metrics, i.e., sub-linear unmet demand, cost regret, and payment regret in the number of periods $T$. As a warm-up, we first consider the setting when the cost functions of the suppliers and customer demand are fixed over the $T$ periods. 
Formally, the supplier cost functions satisfy $c_{it}(\cdot) = c_{it'}(\cdot)$ for all $t, t' \in [T]$ and the demands satisfy $d_t = d_{t'}$ for all periods $t, t' \in [T]$. For ease of exposition, in this section, we drop the subscript $t$ in the notation for the customer demand (and supplier cost functions) and denote $d_t = d$ (and $c_{it}(\cdot) = c_i(\cdot)$ for all suppliers $i$) for all periods $t \in [T]$.
In this setting, we develop an algorithm that achieves a regret of \ $O(1)$ \ on the three regret metrics when the suppliers' cost functions are strongly convex (Sections~\ref{sec:fixed-algorithm} and~\ref{sec:proof-prop1}). We further present an example to demonstrate that if the strong convexity condition on suppliers' cost functions is relaxed, then no sub-linear regret guarantee on all three regret metrics is, in general, possible for any online algorithm (Section~\ref{sec:fixed-convex-limitations}).

\subsection{Algorithm with Sub-linear Regret for Strongly Convex Cost Functions}
\label{sec:fixed-algorithm}

In this section, we consider the setting of fixed supplier cost functions and customer demands and present an algorithm \ achieving constant regret \ on the three performance metrics when the suppliers' cost functions are strongly convex. 
To motivate our algorithm, we first note that since the customer demand and the supplier cost functions are fixed over time, the optimal price $p^* \in [0, 1]$ for all periods $t \in [T]$ is also fixed and given by the dual of the demand constraint of Problem~\eqref{eq:supObj2}-\eqref{eq:demand-con}. 
Furthermore, the cumulative production $x_t^*(p) = \sum_{i = 1}^n x_{it}^*(p)$ is monotonically non-decreasing in the price $p$ because suppliers' cost functions are increasing. Utilizing this monotonicity property, we note that if we set two prices $p_1, p_2 \in [0, 1]$ such that the cumulative production $\sum_{i = 1}^n x_{it}^*(p_1)>d$ and $\sum_{i = 1}^n x_{it}^*(p_2)<d$, then $p_1$ and $p_2$ respectively serve as upper and lower bounds on the optimal price $p^*$ when the supplier cost functions and customer demands are fixed over time.

\
Following these observations on the monotonicity of the cumulative production in the prices, we present Algorithm~\ref{alg:fixed-demand-constant-reg}, akin to performing binary search, which maintains a feasible interval for the optimal price $p^*$ and sets a price at the middle of that interval for each arriving user to continuously shrink this feasible price set. In particular, given a feasible price interval $[a, b]$, the algorithm offers a price $\frac{a+b}{2}$, and if the total supply exceeds (is less than) the demand at the offered price, then $[a, (a+b)/2]$ (resp. $[(a+b)/2, b]$) is set as the new feasible interval.
This process of shrinking the feasible interval is presented formally in Algorithm~\ref{alg:fixed-demand-constant-reg}.

\begin{algorithm} 
\SetAlgoLined
\SetKwInOut{Input}{Input}\SetKwInOut{Output}{Output}
Initialize feasible set of prices $\S_p := [a_1, b_1] = [0, 1]$\;
Set $p_t \gets \frac{a_1+b_1}{2} = \frac{1}{2}$ \;
 \For{$t = 1, 2, \dots$}{
 Offer price $p_t$ to the suppliers\;
 \uIf{$\sum_{i = 1}^n x_{it}^*(p_t) \ge d$}{
 \tcc{\scriptsize If production exceeds demand, narrow search interval to left half of interval}
 Set $\S_p^{t+1} \gets [a_t, p_t]$, $a_{t+1} \gets a_t$, and $b_{t+1} \gets p_t$\;}
 \uElse{
 Set $\S_p^{t+1} \gets [p_t, b_{t}]$, $a_{t+1} \gets p_t$, and $b_{t+1} \gets b_t$\;
 }
 Set $p_{t+1} \gets \frac{a_{t+1}+b_{t+1}}{2}$ \;
 }
\caption{Feasible Price Set Tracking under Fixed Demand and Costs} 
\label{alg:fixed-demand-constant-reg}
\end{algorithm}

We now establish the main result of this section, which shows that Algorithm~\ref{alg:fixed-demand-constant-reg} achieves $O(1)$ regret on all three performance metrics.

\begin{theorem} \label{thm:IdenticalResult}
The unmet demand, cost regret, and payment regret of Algorithm~\ref{alg:fixed-demand-constant-reg} are $O(1)$ if the cost functions of the suppliers are strongly convex.
\end{theorem}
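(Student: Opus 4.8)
The plan is to combine a sensitivity-analysis (parametric optimization) argument that controls how fast the aggregate supply responds to price with the classical \emph{doubling-the-exponent} search analysis of the bisection-style Algorithm~\ref{alg:fixed-demand}. Throughout, write $q(p) = \sum_{i=1}^n x_i^*(p)$ for the aggregate supply at price $p$ and $\Delta_t = q(p_t) - d$ for the excess production in period $t$, so the unmet demand in period $t$ is $(-\Delta_t)_+$. Since each supplier solves~\eqref{eq:supObj}, we have $x_i^*(p) = \max\{0, (c_i')^{-1}(p)\}$, and strong convexity of $c_i$ (i.e.\ $c_i'' \ge \mu > 0$) makes $(c_i')^{-1}$ Lipschitz with constant $1/\mu$; hence $q(\cdot)$ is non-decreasing and $L$-Lipschitz with $L = n/\mu$. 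This is the crucial structural input: if a posted price $p$ lies within $\delta$ of $p^*$, then $|q(p)-d| = |q(p)-q(p^*)| \le L\delta$, so price accuracy translates directly into supply (hence regret) accuracy. I would also record that, because $q$ is monotone with $q(p^*)=d$, a price with $q(p)<d$ certifies $p<p^*$ and one with $q(p)\ge d$ certifies $p\ge p^*$; this is exactly what keeps the invariant $p^*\in[a,b]$ valid throughout the while-loop.

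Second, I would analyze the search schedule. With $\varepsilon$ squared after every sub-phase and $\varepsilon=1/2$ initially, the interval length after sub-phase $k$ is $\ell_k = (1/2)^{2^{k-1}}$, so the loop terminates once $\ell_k < 1/T$, i.e.\ after $K = O(\log\log T)$ sub-phases. Within sub-phase $k$ the interval has length $\ell_{k-1}$ and the grid spacing is $\ell_k$ with $\ell_k=\ell_{k-1}^2$, so the number of posted prices is $O(1/\ell_{k-1})$. Every price posted in this sub-phase lies in the current interval containing $p^*$, so its distance to $p^*$ is at most $\ell_{k-1}$ and, by the Lipschitz bound, $|\Delta_t| \le L\ell_{k-1}$. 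Multiplying the query count by the per-query error gives a per-sub-phase budget of $O(1/\ell_{k-1})\cdot L\ell_{k-1} = O(1)$, which is the heart of the argument: squaring the exponent makes the query count and the per-query error scale inversely, so each sub-phase contributes only a constant, and the $O(\log\log T)$ sub-phases contribute $O(\log\log T)$ in total.

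Third, I would convert this budget into bounds on the three metrics. For unmet demand, each query contributes $(-\Delta_t)_+ \le L\ell_{k-1}$, so exploration contributes $\sum_{k\le K} O(1) = O(\log\log T)$; in the exploitation phase the interval has length $<1/T$, so $p^*-a<1/T$ and each of the $\le T$ remaining periods contributes $\le L/T$, i.e.\ $O(1)$ more. For cost regret I would use the identity $\sum_i c_i(x_i^*(p)) = \tilde C(q(p))$, where $\tilde C(q)=\min\{\sum_i c_i(x_i): \sum_i x_i = q\}$ is convex with $\tilde C'$ equal to the inducing price; the per-period cost regret is then $\int_d^{q(p_t)} \tilde C'(s)\,\dd{s}$. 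Underproduction periods contribute a negative amount of magnitude at most $p^*\cdot(\text{unmet demand})=O(\log\log T)$ (since $\tilde C'(s)\le p^*$ for $s\le d$), while overproduction occurs only at the single first price per sub-phase that exceeds $p^*$, where $\Delta_t \le L\ell_k=L\ell_{k-1}^2$, summing to $O(1)$. Payment regret I would split as $p_t q(p_t) - p^* d = q(p_t)(p_t-p^*) + p^*\Delta_t$: the second term is controlled exactly as above, and for the first, boundedness of $q(\cdot)$ on $[0,1]$ together with $|p_t-p^*|\le \ell_{k-1}$ gives a per-sub-phase contribution $\le Q_{\max}\,\ell_{k-1}\cdot O(1/\ell_{k-1}) = O(1)$, hence $O(\log\log T)$ over all sub-phases (and $O(1)$ in exploitation since $|a-p^*|<1/T$ over $\le T$ periods). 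Collecting the three bounds yields the claimed $O(\log\log T)$ on all metrics.

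The step I expect to be the main obstacle is the first one: making the sensitivity argument rigorous. I must show that strong convexity yields a genuinely Lipschitz aggregate supply despite the inequality constraints $x_i\ge 0$ (so suppliers may switch between producing zero and a positive amount as $p$ varies), and I must pin down $p^*$ and the monotone correspondence between $q(p)\lessgtr d$ and $p\lessgtr p^*$ cleanly enough to preserve the interval invariant. This is precisely where strong convexity is indispensable: without it, $p\mapsto q(p)$ may be flat and then jump, so a price within $\delta$ of $p^*$ need not produce a supply within $O(\delta)$ of $d$, the per-sub-phase budget is no longer $O(1)$, and the entire accounting breaks---matching the impossibility established in Section~\ref{sec:fixed-convex-limitations}.
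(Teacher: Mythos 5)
Your proposal is correct and follows essentially the same route as the paper's proof: the Lipschitzness of aggregate supply in price derived from strong convexity (the paper's Lemma~\ref{lem:ProductionLipschitz}), the count of $O(\log\log T)$ sub-phases from repeated squaring, the per-sub-phase $O(1)$ budget obtained by multiplying the $O(1/\sqrt{\varepsilon})$ query count by the $O(\sqrt{\varepsilon})$ per-query supply error, the observation that cost and payment regret are only positively accrued at the single overshoot price per sub-phase, and the final-phase bound from the interval being shorter than $1/T$. Your value-function treatment of the cost regret and the explicit payment decomposition are slightly more detailed than the paper's appeal to monotonicity plus boundedness, but they are refinements of the same argument rather than a different one.
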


The proof of Theorem~\ref{thm:IdenticalResult} relies on the following key intermediate result, which is of independent interest, that establishes a unified regret metric that upper bounds the three performance metrics in Section~\ref{sec:perf-measures}.

\begin{proposition}[Unified Regret Upper Bound] \label{prop:unified-regret-bound}
    Define a unified metric that corresponds to the cumulative differences between the operator's posted prices and the equilibrium price: \vspace{-5pt}
    \begin{equation}
    \label{equ:unified-metric-price}
        R_T(\ppi) := \sum_{t=1}^T |p_t - p^*|.
    \end{equation}
    Then, there exists a constant $C>0$ such that this regret metric upper bounds (up to constant factors) the unmet demand, payment regret and cost regret metrics, i.e., \vspace{-5pt}
    \[\max\left\{U_T(\ppi), P_T(\ppi), C_T(\ppi)\right\} \le C \cdot R_T(\ppi).\]
\end{proposition}

Proposition~\ref{prop:unified-regret-bound} highlights that to establish Theorem~\ref{thm:IdenticalResult} and obtain bounds on each of our studied regret metrics in Section~\ref{sec:perf-measures}, it suffices to upper bound a single unified regret metric $R_T(\ppi)$. Note that such a unified regret metric in Equation~\eqref{equ:unified-metric-price}, which simultaneously upper bounds all three performance metrics, helps streamline our analysis of Theorem~\ref{thm:IdenticalResult}. For more details on the proof of Theorem~\ref{thm:IdenticalResult} and Proposition~\ref{prop:unified-regret-bound}, see Section~\ref{sec:proof-prop1}. However, we note that the metric in Equation~\eqref{equ:unified-metric-price} cannot be directly applied to the more general time-varying demand and cost settings studied in Sections~\ref{sec:vary-demand} and~\ref{sec:time-vary-cost-main}, respectively, which requires us to address some additional technical challenges, which we highlight in detail in the respective sections. 

We conclude this section with some remarks comparing our framework and results obtained in this section to the literature (Remark~\ref{rmk:comp-KL}) and more abstract root-finding problems (Remark~\ref{rmk:general-framework}). 

\begin{remark}[Comparison to~\citet{oppa}] \label{rmk:comp-KL}
Algorithm~\ref{alg:fixed-demand-constant-reg} is akin to the prior work of~\citet{oppa} as we also learn a feasible price interval at each step based on the gap between the observed production and customer demand. However, our setting differs from~\citet{oppa} in a critical way resulting in different guarantees. Specifically, in \cite{oppa}, achieving a regret below $O(\log \log T)$ is impossible, which stems from the discontinuity of the revenue function: it is zero until the price exceeds a buyer’s value, then jumps to that value. In contrast, we leverage the structure of our problem setting to establish Proposition~\ref{prop:unified-regret-bound}, which relies on proving important Lipschitzness relations for our studied regret metrics (see Section~\ref{sec:proof-prop1}), which enable us to achieve constant regret using a binary search procedure, unlike in the setting of \cite{oppa}.
\hfill //
\end{remark}

\begin{remark}[Extending Framework to Abstract Root-Finding Problem] \label{rmk:general-framework}
While we study the problem of equilibrium pricing inspired by electricity markets, we note that the binary search procedure used in Algorithm~\ref{alg:fixed-demand-constant-reg} is more broadly applicable to abstract root-finding problems of the form $g(p) = d$, where the unknown $g$ is monotonically increasing. In this light, our framework, and in particular Theorem~\ref{thm:IdenticalResult}, naturally extends to a more abstract class of problems involving learning the root of a monotonic equation.

Specifically, consider a setting in which $p^*$ is the unique solution to the equation $g(p) = d$, and an algorithm $\ppi$ sets a sequence of prices $p_1, \ldots, p_T$. Under the regret metric $\sum_{t = 1}^T |g(p_t) - d|$, which captures the cumulative deviation from the target value $d$, we observe that if 
$g$ is Lipschitz, then this regret can be upper bounded by the unified regret metric in Equation~\eqref{equ:unified-metric-price}, up to constant factors. As a result, Theorem~\ref{thm:IdenticalResult} directly implies a sublinear regret bound in this more abstract setting.

We note that our equilibrium pricing model can thus be viewed as an instantiation of this abstract framework, where $g(p) = \sum_{i = 1}^n x_i^*(p)$ represents total supplier production at price $p$, and each $x_i^*(p)$ is the solution to Problem~\eqref{eq:supObj}. In this regard, a key component of our analysis of Algorithm~\ref{alg:fixed-demand-constant-reg} is to establish Proposition~\ref{prop:unified-regret-bound} for our studied equilibrium pricing setting.

Finally, we note that while the abstract root-finding framework offers a natural generalization of the fixed cost and demand setting, our focus on equilibrium pricing in this work is more natural in the time-varying cost and demand settings, where the interpretation of the abstract root-finding framework when $g$ and $d$ vary over time becomes less intuitive. Note that time-varying supply costs and demand are fundamental features of dynamic equilibrium pricing models and, in particular, real-world electricity markets, and our framework and results in Sections~\ref{sec:vary-demand} and~\ref{sec:time-vary-cost-main} aim to capture these dynamics meaningfully.
\hfill //

\end{remark}

\

\subsection{Proof Details of Theorem~\ref{thm:IdenticalResult} and Proposition~\ref{prop:unified-regret-bound}} \label{sec:proof-prop1}
\
We establish the regret guarantee in Theorem~\ref{thm:IdenticalResult} by first showing the unified regret upper bound in Proposition~\ref{prop:unified-regret-bound}. The proof of Proposition~\ref{prop:unified-regret-bound} relies on two intermediary results. First, we establish a Lipschitzness relation between the optimal solution to Problem~\eqref{eq:supObj} and the prices set by the market operator, as is elucidated by the following lemma.
\

\begin{lemma} [Lipschitzness of Production in Prices] \label{lem:ProductionLipschitz}
Suppose that the suppliers' cost functions $c_i(\cdot)$ are $\mu$-strongly convex. Then, at any period $t$, the optimal production quantity for supplier $i$ corresponding to the solution of Problem~\eqref{eq:supObj} is Lipschitz in the price $p$, i.e., for all $p_1, p_2 \in [0, 1]$, $|x_{it}^*(p_1) - x_{it}^*(p_2)| \leq L_0 |p_1 - p_2|$ for some constant $L_0>0$ which only depends on $\mu$.
\end{lemma}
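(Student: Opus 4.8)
The plan is to characterize the supplier's optimal production through the first-order optimality conditions of the concave profit-maximization problem~\eqref{eq:supObj} and then exploit the strong monotonicity of $c_i'$ implied by strong convexity. Since $p x_i - c_i(x_i)$ is concave in $x_i$ and the feasible set $\{x_i \geq 0\}$ is convex, the optimal production $x_i^*(p)$ is characterized by the variational inequality $(p - c_i'(x_i^*(p)))(y - x_i^*(p)) \leq 0$ for all $y \geq 0$; equivalently, by complementary slackness, $c_i'(x_i^*(p)) = p$ whenever $x_i^*(p) > 0$ and $c_i'(0) \geq p$ whenever $x_i^*(p) = 0$. Working with the variational-inequality form of optimality is the key device here, as it lets me treat the interior and the boundary cases in one stroke.

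First I would fix two prices $p_1, p_2 \in [0,1]$ with corresponding optimal productions $x_1 = x_i^*(p_1)$ and $x_2 = x_i^*(p_2)$, and instantiate the variational inequality for $p_1$ at the test point $y = x_2$ and for $p_2$ at $y = x_1$. Adding the two resulting inequalities and rearranging yields
\begin{align*}
(x_2 - x_1)(p_2 - p_1) \geq (x_2 - x_1)\bigl(c_i'(x_2) - c_i'(x_1)\bigr).
\end{align*}
Next I would invoke strong convexity: a $\mu_i$-strongly convex, continuously differentiable $c_i$ has a strongly monotone derivative, so $(x_2 - x_1)(c_i'(x_2) - c_i'(x_1)) \geq \mu_i (x_2 - x_1)^2$. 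Chaining this with the previous display and applying the Cauchy--Schwarz inequality to the left-hand side gives $\mu_i (x_2 - x_1)^2 \leq |x_2 - x_1|\,|p_2 - p_1|$, and dividing by $|x_2 - x_1|$ (the case $x_1 = x_2$ being trivial) produces the claim with $L = 1/\mu_i$.

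I expect the main subtlety to be the boundary regime in which one or both optimal productions are pinned at zero, since there the stationarity condition $c_i'(x_i^*) = p$ fails and must be replaced by the inequality $c_i'(0) \geq p$. This is precisely why I would phrase optimality as a variational inequality rather than a gradient equation: it sidesteps any case analysis on whether $p_1, p_2$ lie above or below the activation threshold $c_i'(0)$, and the two test-point inequalities remain valid regardless of which constraints are active. The only remaining point to check is that $x_i^*(p)$ is well-defined and finite for every $p \in [0,1]$, which follows because strong convexity forces $c_i'$ to be strictly increasing with slope at least $\mu_i$, so that the profit function is strictly concave and attains a unique maximizer on the relevant range of prices.
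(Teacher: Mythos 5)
Your proof is correct, but it takes a genuinely different route from the paper's. The paper characterizes the optimum by the interior first-order condition $p = c_i'(x_{it}^*(p))$, inverts to get $x_{it}^*(p) = (c_i')^{-1}(p)$, and applies the inverse function theorem to bound $(x_{it}^*)'(p) = 1/c_i''(x_{it}^*(p)) \le 1/\mu_i$. You instead use the variational-inequality form of optimality over $\{x_i \ge 0\}$ together with the strong monotonicity of $c_i'$, adding the two instantiated inequalities to get $(x_2 - x_1)(p_2 - p_1) \ge \mu_i (x_2 - x_1)^2$ and hence $L = 1/\mu_i$ directly. Your argument buys two things the paper's does not: it requires only one continuous derivative (the paper's proof implicitly invokes $c_i''$, which is not guaranteed by the stated assumption that $c_i$ is continuously differentiable and $\mu_i$-strongly convex), and it handles the boundary regime $x_i^*(p) = 0$ without case analysis, whereas the paper's identity $p = c_i'(x_{it}^*(p))$ silently assumes an interior solution. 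The paper's approach, in exchange, is shorter and yields the pointwise derivative bound rather than just the Lipschitz estimate. Both arrive at the same constant, and your treatment of existence and uniqueness of the maximizer (via $c_i'(x) \ge c_i'(0) + \mu_i x \to \infty$) closes the one remaining loose end.
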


Next, we use Lemma~\ref{lem:ProductionLipschitz} to show that both the payment and cost regret metrics are Lipschitz in the prices. 
\
That is, Lemma~\ref{lem:lipschitzPricesRegret} establishes that small changes in the prices result in only small changes in the cost and payment regret metrics.
The proof of both lemmas can be found in Appendix~\ref{sec:lipschitz-lemma}.

\begin{lemma} [Lipschitzness of Regret Metrics in Prices] \label{lem:lipschitzPricesRegret}
Suppose that the suppliers' cost functions $c_i(\cdot)$ are $\mu$-strongly convex.
The payment and cost regret of an online pricing policy $\ppi$ are upper bounded by the absolute difference in prices $p_t$ corresponding to $\ppi$ and equilibrium prices $p^*_t$.
Namely, there exists some constant $L_1, L_2 > 0$ which only depend on $\mu$ and $n$ so that 
\[P_T(\ppi) \le L_1 \sum_{t=1}^T |p_t - p^*_t|, \text{and }  C_T(\ppi) \le L_2 \sum_{t=1}^T |p_t - p^*_t|. \]
\end{lemma}

By employing Lemmas~\ref{lem:ProductionLipschitz} and \ref{lem:lipschitzPricesRegret}, we conclude that we can simultaneously upper bound the three performance metrics by the unified regret metric in Equation~\eqref{equ:unified-metric-price}, as is highlighted by the following corollary.

\begin{corollary}
    \label{thm:unified-metric-price}
    The regret metric \eqref{equ:unified-metric-price} is an upper bound (up to constant factors) for the unmet demand, payment regret and cost regret metrics, so that 
    \[\max\left\{U_T(\ppi), P_T(\ppi), C_T(\ppi)\right\} \le \max\{L_0, L_1, L_2\} \cdot R_T(\ppi).\]
\end{corollary}

Note that Corollary~\ref{thm:unified-metric-price} establishes Proposition~\ref{prop:unified-regret-bound}, where the constant $C = \max\{L_0, L_1, L_2\}$. Then, to establish Theorem~\ref{thm:IdenticalResult}, it suffices to show that an algorithm $\ppi$ achieves sublinear regret in $R_T(\ppi)$.
With this unified regret metric in mind, we now present the proof of Theorem~\ref{thm:IdenticalResult}.

\begin{proof}[Proof of Theorem~\ref{thm:IdenticalResult}]
Recall from Corollary~\ref{thm:unified-metric-price} that to bound the unmet demand, cost regret, and payment regret metrics, it suffices to bound the quantity: $\sum_{t = 1}^T |p_t - p^*|$, where $p^*$ is the market-clearing price.

Denote $|S_p^t| = b_t - a_t$ as the length of the feasible price interval at each period $t$. Then, by the monotonicity of the optimal production quantity in the prices, we have that:
\begin{align*}
    \sum_{t = 1}^T |p_t - p^*| \leq \sum_{t = 1}^T |S_p^t| = \sum_{t = 1}^T \left(\frac{1}{2} \right)^{t} \stackrel{(a)}{=} \frac{0.5(1 - 0.5^T)}{0.5} \leq 1,
\end{align*}
where (a) follows from the formula for the sum of a geometric series. This establishes our desired result.
\end{proof}
\

\subsection{Remarks on Strong Convexity of Cost Functions}
\label{sec:fixed-convex-limitations}

\
While Algorithm 1 achieves sub-linear regret on all three performance metrics in the incomplete information setting for strongly convex cost functions, the strong convexity condition is crucial, and this result does not generalize to the setting of general convex costs. 
In particular, in this section, we show that if the cost functions of the suppliers are allowed to be arbitrarily close to a linear function, then no online algorithm can achieve sub-linear regret guarantees for the unmet demand, cost regret, and payment regret metrics simultaneously.
\
This result highlights the difficulty of the incomplete information setting compared to that with complete information, where the equilibrium price satisfying the three desirable properties in Section \ref{sec:market-model} exists and can be computed through the dual variable of the market clearing constraint of Problem \eqref{eq:supObj2}-\eqref{eq:demand-con} when the cost functions of all suppliers are convex (and, not necessarily, strongly convex).

\
\begin{proposition} \label{prop:impossibility-strongconvex}
    There exists a class of convex supplier cost functions for which no online algorithm can achieve sub-linear regret on all three performance metrics.
\end{proposition}

\begin{proof}
    We consider a single supplier with a fixed cost function of the form $c(x) = \frac{\alpha}{2} x^2 + \beta x$, where $\alpha, \beta > 0$.
    Then, this supplier's production when given a price $p$ satisfies \[p = c'(x^*(p)) = \alpha x^*(p) + \beta.\]
    So, the supplier's production is given by $x^*(p) = \frac{p-\beta}{\alpha}$ and the equilibrium price $p^* = \alpha d + \beta$ at demand $d$.
    
    For any given algorithm, let $p_0$ be its initial price offering before observing any responses from the supplier.
    Then, if $p_0 > p^*$, the payment regret can be written as
    \[p_0 \left(d + \frac{p_0-p^*}{\alpha}\right) - p^* d > p^* \cdot \frac{p_0 - p^*}{\alpha} = d(p_0 - p^*) + \frac{\beta}{\alpha} (p_0 - p^*).\]
    If the algorithm's choice on $p_0$ is positive, then because the cost function is unknown, for any fixed $\alpha$, we can choose $\beta = p_0/4$ and $d = \min\{\bar{d}, p_0/(4\alpha)\}$ so that $p^* < p_0/2$.
    Then, the payment regret is on the order of $O(1/\alpha)$, which is unbounded in the limit as $\alpha\to 0$.

    It follows that $p_0$ must be 0, but then the total production is 0 for any convex supplier cost function.
    This means that the operator learns nothing about the unknown cost function and we can repeat the above argument to show that the posted price at any period must be 0 to avoid arbitrarily large payment regret.
    Consequently, our analysis implies that the unmet demand is linear in time if the price is zero at each period; hence, it is not possible to achieve sublinear regret on all three regret metrics in the limit as $\alpha \to 0$.
\end{proof}

Given Proposition~\ref{prop:impossibility-strongconvex}, we assume that the cost functions are $\mu$-strongly convex for a universal constant $\mu > 0$ in the remainder of this work, which closely aligns with cost functions often studied in the context of electricity markets~\citep{TSAOUSOGLOU2022111890,1102918,7796713,4112652}.
\

\section{Fixed Cost Functions and Time-Varying Demand} \label{sec:vary-demand}

Having studied the setting of fixed supplier cost functions and customer demands over time, in this section, we investigate a more general market setting when the suppliers' cost functions are static while customer demands can vary across the $T$ periods. In particular, we suppose that the customer demands for the commodity are time-varying and lie in a continuous but bounded interval, i.e., the customer demand at each period $t$ is some variable quantity $d_t \in [\underline{d}, \overline{d}]$. In this setting, we extend the algorithm developed for fixed supplier cost functions and customer demands (Algorithm~\ref{alg:fixed-demand-constant-reg}) and show that it achieves a regret of \ $O(\sqrt{T})$ \ on all three performance metrics for strongly convex cost functions.

Our approach for the time-varying demand setting builds upon the algorithmic ideas for the fixed demand setting in Section~\ref{sec:fixed-setting}. To address the challenge that the demands can vary between the interval $[\underline{d}, \Bar{d}]$, we first consider a direct extension of Algorithm~\ref{alg:fixed-demand-constant-reg} to the time-varying demand setting, wherein a feasible price set is maintained for each realized demand. However, as there may be up to $O(T)$ different demand realizations over the $T$ periods, the worst-case regret of such an algorithm is $O(T)$. To resolve this issue, we leverage the intuition that customer demands that are close to each other correspond to equilibrium prices that are also close together. Thus, we uniformly partition the demand interval $[\underline{d}, \Bar{d}]$ into sub-intervals of width $\gamma$ and consider any demand in the same sub-interval the same. In particular, any demand lying in a given sub-interval, i.e., $d_t \in [\underline{d} + k \gamma, \underline{d} + (k+1) \gamma]$ for some $k \in \mathbb{N}$, is considered as a demand equal to the lower bound of that interval. Note then that from the perspective of the algorithm, there are $O(\frac{1}{\gamma})$ distinct demands, as opposed to $O(T)$ possible demand realizations, as the feasible demand interval is partitioned into $O(\frac{1}{\gamma})$ sub-intervals. Finally, for these $O(\frac{1}{\gamma})$ distinct demands, corresponding to the lower ends of the $O(\frac{1}{\gamma})$ sub-intervals, we apply the aforementioned direct extension of Algorithm~\ref{alg:fixed-demand-constant-reg}. 
These ideas are formally implemented in Algorithm~\ref{alg:time-varying-demand-helper-sqrt}, which will be used as a subroutine for the eventual algorithm (Algorithm~\ref{alg:time-varying-demand-new-sqrt}) we develop for this time-varying demand setting.

\begin{algorithm}
\SetAlgoLined
\SetKwInOut{Input}{Input}\SetKwInOut{Output}{Output}
\Input{Demand interval width $\gamma$, Start and end time $t_0, t_1$
}
Discretize demand intervals into $I_1, \dots, I_K$ with $I_k = [\underline{d} + (k-1)\gamma, \underline{d} + k\gamma)$ such that $K\gamma = \overline{d} - \underline{d}$\;
 Initialize a feasible price set $\S_{k} = [a_k, b_k] = [0, 1]$ and current price $p_k = \frac{a_k+b_k}{2} = \frac{1}{2}$ for each demand interval $I_{k}$\;
 \For{$t = t_0, t_0+1, \dots, t_1-1$}{
 Determine $k_t$ such that $d_t \in I_{k_t} =: [r_{k_t}, r_{k_t+1})$\;
 Offer price $p_{k_t}$ to the suppliers\;
 \uIf{$\sum_{i = 1}^n x_{it}^*(p_{k_t}) \ge r_{k_t}$}{
 Set $\S_{k_t} \gets [a_{k_t}, p_{k_t}]$, $b_{k_t} \gets p_{k_t}$ \;
 }
 \uElse{
 Set $\S_{k_t} \gets [p_{k_t}, b_{k_t}]$, $a_{k_t} \gets p_{k_t}$ \;
 }
Set $p_{k_t} \gets \frac{1}{2}(a_{k_t} + b_{k_t})$
 }
\caption{Feasible Price Set Tracking for Time-Varying Demands (helper function)}
\label{alg:time-varying-demand-helper-sqrt}
\end{algorithm}

\
For each of the three performance metrics, the regret incurred by Algorithm~\ref{alg:time-varying-demand-helper-sqrt} can be broken down into two parts: (i) the regret incurred by the Algorithm~\ref{alg:fixed-demand-constant-reg} sub-routine on each of the demand sub-intervals, and (ii) the inaccuracies of considering all demands in a given sub-interval to be equal to the lower end of that sub-interval.
For our choice of the demand sub-interval width $\gamma$, a larger value of $\gamma$ leads to smaller regret for the first part, and a smaller value of $\gamma$ implies that we incur a smaller per-step discretization error for the second part.
Thus, in Algorithm~\ref{alg:time-varying-demand-helper-sqrt}, we want to choose an appropriate value of $\gamma$ that balances these two sources of regret. Then, to apply the procedure in Algorithm~\ref{alg:time-varying-demand-helper-sqrt} without pre-determined knowledge on the time horizon $T$, Algorithm~\ref{alg:time-varying-demand-new-sqrt} repeatedly invokes Algorithm~\ref{alg:time-varying-demand-helper-sqrt} with the value of $\gamma$ shrinking over time.
\

\begin{algorithm}
\SetAlgoLined
\SetKwInOut{Input}{Input}\SetKwInOut{Output}{Output}
 \For{$m = 0, 1, \dots$}{
    Execute Algorithm~\ref{alg:time-varying-demand-helper-sqrt} with $t_0 = 2^m, t_1 = 2^{m+1}$ and $\gamma = 1/\sqrt{t_1 - t_0} = 2^{-m/2}$.
 }
\caption{Feasible Price Set Tracking for Time-Varying Demands (main loop)}
\label{alg:time-varying-demand-new-sqrt}
\end{algorithm}
We now present the main result of this section, which establishes that Algorithm~\ref{alg:time-varying-demand-new-sqrt} achieves a regret of \ $O(\sqrt{T})$ \ for strongly convex cost functions of suppliers. 

\begin{theorem} \label{thm:VaryingDemandResult-sqrtT}
The unmet demand, cost regret, and payment regret of Algorithm~\ref{alg:time-varying-demand-new-sqrt} are $O(\sqrt{T})$ if the cost functions of the suppliers are strongly convex.
\end{theorem}

\begin{proof}[Proof (sketch)]
To better illuminate the key ideas behind the proof, we show a simpler claim that Algorithm~\ref{alg:time-varying-demand-helper-sqrt} incurs \ $O(\sqrt{T})$ \ regret when we choose $t_0 = 1, t_1 = T$, and $\gamma = 1/\sqrt{T}$.

We recall that the regret incurred by Algorithm~\ref{alg:time-varying-demand-helper-sqrt} has two parts: (i) the regret incurred by the Algorithm~\ref{alg:fixed-demand-constant-reg} sub-routine for each demand sub-interval, and (ii) the inaccuracies of considering all demands in a given sub-interval to be equal to the lower end of that sub-interval.
First, by invoking the same binary search argument as Theorem~\ref{thm:IdenticalResult}, the first part is of order $O(K)$ for all three regret metrics, where $K := \lceil(\overline{d} - \underline{d})/\gamma\rceil$.

For the second part, since all demands in a given sub-interval are treated as a demand equal to the lower bound of that sub-interval and the suppliers' optimal production is monotonic in the price, every price $p_t$ offered by Algorithm~\ref{alg:time-varying-demand-helper-sqrt} is an under-estimate to the equilibrium price for demand $d_t$.
Thus, the second part of the regret is only positive for the unmet demand and is at most $O(\gamma T)$, as the width of each demand sub-interval is $\gamma$, and regret is only accumulated over $T$ periods. 
Finally, choosing $\gamma = \frac{1}{\sqrt{T}}$ achieves an optimal balance between the two quantities above, i.e., $O(\gamma^{-1})$ and $O(\gamma T)$, establishing the $O(\sqrt{T})$ regret bound.

As for the regret of Algorithm~\ref{alg:time-varying-demand-new-sqrt}, we can apply the argument above to each invocation of Algorithm~\ref{alg:time-varying-demand-helper-sqrt} and sum up the regrets incurred by each of these episodes.
This results in an \ $O(\sqrt{T})$ \ regret bound in all three performance metrics.
\end{proof}

\
We note that, in our proof of Theorem~\ref{thm:VaryingDemandResult-sqrtT}, we directly analyzed each of the performance metrics instead of appealing to the unified metric \eqref{equ:unified-metric-price}.
If we directly apply the unified metric~\eqref{equ:unified-metric-price}, we seek to show that
\begin{equation}
\label{equ:varying-demand-unified}
    \sum_{t=t_0}^{t_1-1} |p_t - p^*(d_t)| \le O(\sqrt{T}),
\end{equation}
where $p^*(d)$ is the equilibrium price for demand $d$.
However, for each demand $d_t$, Algorithm~\ref{alg:time-varying-demand-helper-sqrt} searches for the price that matches a different demand $r_{k_t} \le d_t$.
Therefore, the binary search subroutine which we employ in Algorithm~\ref{alg:time-varying-demand-helper-sqrt} results in a bound that is different from \eqref{equ:varying-demand-unified}:
\begin{equation}
\label{equ:varying-demand-part1}
    \sum_{t=t_0}^{t_1-1} |p_t - p^*(r_{k_t})| \le O(K),
\end{equation}
where we discretize the demands into $K$ intervals.
To connect the two expressions \eqref{equ:varying-demand-unified} and \eqref{equ:varying-demand-part1}, we would need to determine the sensitivity of the equilibrium price $p^*(\cdot)$ as a function of the demands to bound the quantity $|p^*(d_t) - p^*(r_{k_t})|$.
But deriving this sensitivity bound would lead to suboptimal constant factors in our regret guarantees and requires an additional Lipschitz-smoothness assumption on the cost functions that is not required in the proof of Theorem~\ref{thm:VaryingDemandResult-sqrtT}. This additional assumption is used in Lemma~\ref{lem:PriceLipschitz}, which is exclusively utilized in the time-varying cost setting in Section~\ref{sec:time-vary-cost-main}. 
In the time-varying demand setting studied in this section, we do not introduce an additional Lipschitz smoothness assumption to present Theorem~\ref{thm:VaryingDemandResult-sqrtT} in its most general form.
Instead, we obtain tighter guarantees for Algorithm~\ref{alg:time-varying-demand-helper-sqrt} by directly relating each of the three performance metrics to the expression in \eqref{equ:varying-demand-part1}.
\

For a complete proof of Theorem~\ref{thm:VaryingDemandResult-sqrtT}, see Appendix~\ref{sec:vary-demand-pf}. We reiterate that Theorem~\ref{thm:VaryingDemandResult-sqrtT} applies to strongly convex cost functions as with Theorem~\ref{thm:IdenticalResult} and that extending this result to general convex cost functions, e.g., linear functions, is, in general, not possible (see Section~\ref{sec:fixed-convex-limitations}). Furthermore, compared to the regret guarantee obtained in Proposition~\ref{thm:IdenticalResult}, Theorem~\ref{thm:VaryingDemandResult-sqrtT} establishes that the time-varying nature of the customer demand incurs an additional factor of $O(\sqrt{T})$ in the regret guarantee as compared to the setting with fixed demands. However, we do note that if the set of demand realizations $D$ is known \emph{a priori} to be $o(\sqrt{T})$, then the regret guarantee in Theorem~\ref{alg:time-varying-demand-new-sqrt} can be improved to $O(|D|)$ by running the direct extension of Algorithm~\ref{alg:fixed-demand-constant-reg}, wherein a feasible price interval is maintained for each realized demand. Finally, we note that the regret guarantee obtained in Theorem~\ref{thm:VaryingDemandResult-sqrtT} compares favorably to classical $O(\sqrt{T})$ regret guarantees in the OCO or MAB literature~\citep{OPT-013}.

\section{Time-Varying Cost Functions} \label{sec:time-vary-cost-main}
In this section, we consider the general setting, where, in addition to customer demands changing over time, suppliers' cost functions can also vary across the $T$ periods. Formally, at each period $t$, each supplier $i$ has a privately known and time-varying cost function $c_{it}(\cdot)$. Compared to the setting with fixed cost functions, the fundamental ideas underlying the performance guarantees of Algorithms~\ref{alg:fixed-demand-constant-reg} and \ref{alg:time-varying-demand-new-sqrt} do not directly apply to this setting, as suppliers' production may not remain the same when the operator offers the same price at different periods due to the time-varying nature of their cost functions. In fact, in Section~\ref{sec:counter-example}, we show that if the operator does not know the process that governs the variation of the cost functions, then sub-linear regret is impossible to achieve on all three regret metrics.
To this end, in Section~\ref{sec:contextual-bandit}, in alignment with real-world markets, we consider an augmented problem setting wherein the market operator is provided with a hint (i.e., context) on the variation in suppliers' cost functions over time, e.g., due to weather conditions in electricity markets, while still keeping the full description of the costs away from the operator. In this setting, where the operator has access to additional hints on suppliers' cost functions, we then develop an algorithm with sub-linear regret on all three performance metrics (see Sections~\ref{sec:igw-alg-sol} and~\ref{sec:igw-alg-sketch}) through an adaptation of an algorithm in the contextual bandits literature~\citep{foster2020beyond}.

\subsection{Impossibility of Setting Equilibrium Price Under Time-Varying Costs}
\label{sec:counter-example}

We initiate our study of the setting of time-varying costs by presenting an example that illustrates the impossibility of setting equilibrium prices if the market operator has no information on how the cost functions of suppliers change over time. In particular, Proposition~\ref{prop:time-varying-cost-countereg} presents a counterexample establishing that even if suppliers' cost functions are drawn i.i.d. from a known distribution, no online algorithm can achieve sub-linear regret on all three regret metrics as long as the operator is not informed about the outcome of the random draws from the distribution.

\begin{proposition} [Impossibility of Sub-linear Regret for Time-varying Costs] \label{prop:time-varying-cost-countereg}
There exists an instance with fixed time-invariant demand and a single supplier whose cost functions are drawn i.i.d. from some (potentially known) distribution such that no online algorithm can achieve sub-linear regret on all three regret metrics.
\end{proposition}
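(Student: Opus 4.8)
The plan is to exhibit a single-supplier instance with linear, i.i.d.-drawn cost functions and to exploit the jump discontinuity in the supplier's best response that already drives Example~\ref{eg:linear-cost}. I would fix the demand $d \in [\underline{d}, \overline{d}]$, and at each period let the cost be $c_t(x) = c_t x$ with the slope $c_t$ drawn i.i.d. from, say, the uniform distribution on an interval $[c_L, c_H] \subset (0,1)$. As in Example~\ref{eg:linear-cost}, the supplier produces $0$ whenever $p_t < c_t$ and the maximum feasible quantity $\overline{x} > d$ whenever $p_t > c_t$, while the equilibrium price and production are $p_t^* = c_t$ and $x_t^*(p_t^*) = d$. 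The operator chooses $p_t$ from the observed history of past productions and must commit to it before the current slope $c_t$ is realized.

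The conceptual linchpin is that $p_t$ depends only on the history through period $t-1$ and is therefore independent of the fresh draw $c_t$; observing past productions cannot help predict an independent future draw, which is exactly what makes the i.i.d.\ setting hard and is why the lower bound persists even when the distribution is fully known to the operator. I would classify each period as \emph{under-production} ($c_t > p_t$, so the supplier makes nothing) or \emph{over-production} ($c_t < p_t$, so it makes $\overline{x}$); since the distribution is continuous the tie $c_t = p_t$ has probability zero, so every period is almost surely one or the other. Unmet demand accrues exactly on the under-production periods, each contributing $d$, giving $\EE[U_T] = d\,\EE[N_{\mathrm{under}}]$, where $N_{\mathrm{under}}$ counts under-production periods.

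From here I would argue by contradiction. If $\EE[U_T] = o(T)$, then since $d$ is a positive constant $\EE[N_{\mathrm{under}}] = o(T)$, and hence $\EE[N_{\mathrm{over}}] = T - \EE[N_{\mathrm{under}}] = \Omega(T)$. On each over-production period the supplier makes $\overline{x} > d$, so the per-period cost regret is $c_t(\overline{x}-d) \ge c_L(\overline{x}-d) > 0$; the only negative contributions to $C_T$ come from the under-production periods and have total expected magnitude at most $c_H d\,\EE[N_{\mathrm{under}}] = o(T)$. Therefore $\EE[C_T] \ge c_L(\overline{x}-d)\,\EE[N_{\mathrm{over}}] - o(T) = \Omega(T)$, so cost regret is linear. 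This shows that $U_T$ and $C_T$ cannot both be sub-linear, which already rules out sub-linear regret on all three metrics; the identical split applied to $P_T$ (using $p_t\overline{x} - c_t d > 0$ on over-production periods) yields the same conclusion for payment regret.

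The step I expect to require the most care is the sign handling in $C_T$ and $P_T$: because under-production makes these regrets negative, one cannot simply add the three metrics, and the under/over split is precisely what isolates the unavoidable positive regret on over-production periods from the benign negative contributions elsewhere. The remaining points to verify are routine: that the independence of $p_t$ from $c_t$ makes the expected-count identity for $U_T$ hold for every (possibly randomized, adaptive) online algorithm, and that the continuous draw removes any tie-breaking ambiguity at $p_t = c_t$ (a two-point distribution on $\{c_L,c_H\}$ conveys the same jump-discontinuity intuition but requires specifying the supplier's indifference behavior).
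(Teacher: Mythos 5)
Your construction has a genuine gap: it instantiates the counterexample with linear cost functions $c_t(x) = c_t x$, but the paper's standing model (Section~\ref{sec:perf-measures}) requires every cost function to be increasing, continuously differentiable, and \emph{strongly convex}, and the entire point of Proposition~\ref{prop:time-varying-cost-countereg} is to show that time variation alone defeats any online algorithm \emph{within} that strongly convex class---the class for which Theorems~\ref{thm:IdenticalResult} and~\ref{thm:VaryingDemandResult} give positive results. With linear costs, impossibility is already established by Example~\ref{eg:linear-cost} for a single \emph{fixed} cost function (a degenerate ``i.i.d.\ distribution''), so your instance adds nothing beyond that example: what drives your lower bound is not the i.i.d.\ randomness but the jump discontinuity in the supplier's best response, i.e., exactly the failure mode the strong-convexity assumption was introduced to exclude. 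Moreover, your under-/over-production dichotomy cannot be repaired for strongly convex costs: there the production is Lipschitz in the price (Lemma~\ref{lem:ProductionLipschitz}), it varies continuously with $p_t$, and no single mispriced period forces a constant amount of unmet demand or excess cost, so the counting argument collapses.

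The paper instead takes a two-point distribution over the strongly convex quadratics $c_1(x)=\frac{1}{8}x^2$ and $c_2(x)=\frac{1}{16}x^2$ (with equilibrium prices $1/4$ and $1/8$) and fixed demand $d=1$. It sums the three metrics into a single ``total regret,'' computes the expected per-period total regret as an explicit function of $p$ over the three ranges $p<1/8$, $1/8\le p\le 1/4$, and $p>1/4$, and shows it is bounded below by $7/64$ for every $p$; hence the cumulative sum is at least $\frac{7}{64}T$ and at least one metric must be linear. Your independence observation (that $p_t$ cannot depend on the fresh draw of $c_t$) and your care with the signs of $C_T$ and $P_T$ are sound and would carry over, but to salvage the proof you must replace the linear instance by a strongly convex one and replace the jump-discontinuity counting argument with a quantitative, pointwise-in-$p$ lower bound on the expected combined regret, which is exactly what the paper does.
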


\begin{proof}[Proof (sketch):]
We consider a setting with a fixed demand of $d=1$ at every period and a single supplier whose cost functions at each period are drawn from a distribution such that at each period $t$, its cost function could be either $c_1(x) = \frac{1}{8}x^2$ or $c_2(x) =\frac{1}{16} x^2$, each with probability $0.5$. We suppose that the market operator has knowledge of the distribution from which the supplier's cost function is sampled i.i.d. but does not know the outcome of the random draw at any period and show that any pricing strategy adopted by the operator must incur a linear regret on at least one of the three regret metrics for this instance.
To this end, we analyze the total regret, i.e., the sum of the unmet demand, payment regret, and cost regret and show that irrespective of the set price $p$ at any period $t$, the expected total regret at any period is at least $\frac{7}{64}$, i.e., the total regret is at least $\frac{7}{64} T$. Since the sum of the three regret metrics is linear in $T$, at least one of the three metrics must be linear in $T$, establishing our claim.
\end{proof}

For a complete proof of Proposition~\ref{prop:time-varying-cost-countereg}, see Appendix~\ref{sec:pf-prop-countereg}. While it was possible to achieve sub-linear regret in the setting with time-varying customer demands (see Theorem~\ref{thm:VaryingDemandResult-sqrtT}), Proposition~\ref{prop:time-varying-cost-countereg} establishes that such a result is, in general, not possible in the setting with time-varying cost functions. The setting with time-varying cost functions is more challenging because the market operator observes customer demands, which it can use to make pricing decisions, but does not observe the cost functions of suppliers.
Thus, Proposition~\ref{prop:time-varying-cost-countereg} shows that it is impossible to jointly optimize the three performance metrics, and illustrates the difficulty of balancing the three performance metrics, as decreasing the payment or cost regret causes an increase in the unmet demand and vice versa.

\subsection{Adding Contexts for Time-varying Costs}
\label{sec:contextual-bandit}

Proposition~\ref{prop:time-varying-cost-countereg} highlights that if the operator does not have any information on the change in suppliers' cost functions over time, it is impossible to achieve sub-linear regret on all three regret metrics. To this end, in this section, we consider a natural augmented problem setting wherein the market operator, without knowing the complete specification of cost functions of suppliers, additionally has access to a hint (i.e., context) that reflects the variation in cost functions of suppliers over time. We note that such a setting aligns with real-world markets, e.g., electricity markets, wherein the cost functions of suppliers are private information yet typically vary over time based on observed quantities, such as changes in the ambient weather conditions.  

To specify the augmented problem setting with contexts, we first introduce some notation regarding the cost functions of suppliers. In particular, we assume that each supplier's cost function is composed of two parts: (i) an unknown component that is time-invariant, and (ii) a time-varying component that is revealed to the market operator. More precisely, the cost function of each supplier $i$ is parameterized as follows:
\[c_{it}(\cdot) = c_i(\cdot; \phi_i, \theta_{it}),\]
where $\phi_i$ is private information and $\theta_{it}$ is the time-varying component of the cost function given to the operator as \textit{contexts}.
Note that for any fixed $\phi_i$, the context $\theta_{it}$ uniquely determines the cost function of supplier $i$ at time $t$.
We stress that we do not assume any structure on the parameterization of the cost functions, and so the time-varying and time-invariant components of the cost functions need not be separable.
Further, since $\phi_i$'s are unknown, the market operator cannot directly solve Problem~\eqref{eq:supObj2}-\eqref{eq:demand-con} to obtain the equilibrium prices in the market.

For the simplicity of exposition, for the remainder of this section, we aggregate all suppliers' cost functions into a combined cost $c_t(\cdot; \theta_t) = \sum_{i=1}^n c_{it}(\cdot; \phi_i, \theta_{it})$, where $\theta_t = (\theta_{1t}, \dots, \theta_{nt})$ is the time-varying context associated with the combined cost function. 
Note that doing so is without loss of generality, as all suppliers have convex costs and observe the same prices in the market.
Furthermore, we note that since the private information $\phi_1, \dots, \phi_n$ are unknown, the market operator cannot directly solve Problem~\eqref{eq:supObj2}-\eqref{eq:demand-con} to obtain the equilibrium prices in the market.

In this augmented problem setting, at each period $t$, in addition to receiving the customer demand $d_t$, the market operator observes a context $\theta_t$, which it can use along with the prior history of supplier production quantities, customer demands, and contexts, to set a price $p_t$. In particular, with access to sequentially arriving contexts, the market operator sets a sequence of prices given by the pricing policy $\ppi = (\pi_1, \ldots, \pi_T)$, where $p_t = \pi_t(\{ (x_{t'}^*)_{1 = 1}^n, d_{t'}, \theta_{t'} \}_{t'=1}^{t-1}, d_t, \theta_t)$, where $x_{t}^*$ represents the sum of optimal production quantity corresponding to the solution of Problem~\eqref{eq:supObj} for each supplier at period $t$. We then evaluate the performance of this class of pricing policies on three regret metrics introduced in Section~\ref{sec:perf-measures}.
Note that we can naturally extend these three metrics to the augmented setting with contexts by plugging in $c_{it}(\cdot) = c_i(\cdot; \phi_i, \theta_{it})$ and for completeness, we present the corresponding definitions explicitly in Appendix~\ref{sec:new-regret-defs}.

\subsection{Algorithm for Time-Varying Costs with Contexts}
\label{sec:igw-alg-sol}

We now present an algorithm that simultaneously achieves sub-linear regret for the unmet demand, payment regret, and cost regret metrics for the augmented problem setting introduced in Section~\ref{sec:contextual-bandit}. Our algorithmic approach is inspired by recent ideas in the contextual bandits literature (e.g.~\cite{agarwal2014taming,foster2020beyond}) and involves two building blocks. 
First, we seek to learn how to associate the arriving contexts with the relevant properties of suppliers' cost functions. Next, based on the information inferred from the arriving contexts, our algorithm offers prices to suppliers in the next period.
We note that the first step of our approach fundamentally differs from the fixed cost and time-varying demand setting in Section~\ref{sec:vary-demand}, as time-varying supplier costs, unlike time-varying demands, are not observed and thus unavailable to the operator making pricing decisions.
Therefore, the first step of our approach is crucial in learning a descriptive model on the supplier's response to various contexts and prices.

The task of learning to associate the arriving contexts with the relevant properties of the cost function is accomplished by an \textit{online regression oracle}. In particular, an online regression oracle performs real-valued online regression and achieves a prediction error guarantee, with a bound denoted $\est(T)$, relative to the best function in a class $\mathcal{F}$.

\begin{definition}[Online Regression Oracle]
Consider a function class $\mathcal{F} : \mathcal{A} \to \mathcal{B}$, at each time $t$, the online regression oracle receives an input $a_t$ and computes an estimate $\hat{b}_t = \hat{f}_t(a_t)$, where $\hat{f}$ depends on the past history $\mathcal{H}_{t-1} = (a_1, b_1), \dots, (a_{t-1}, b_{t-1}).$
Then, the oracle receives the true output $b_t$.

The predictors $\hat{f}_t$ of the oracle are almost as accurate as any function in $\mathcal{F}$ in the sense that:
\[\sum_{t=1}^T (\hat{b}_t - b_t)^2 - \inf_{f \in \mathcal{F}} \sum_{t=1}^T (f(a_t) - b_t)^2 \le \est(T).\]
\end{definition}

The prediction error $\est(T)$ of the online regression oracle scales with the ``size'' of the $\mathcal{F}$ in a statistical sense. As an example, if the function class $\mathcal{F}$ is finite, then the exponential weights update algorithm achieves $\est(T) \le \log |\mathcal{F}|$ (see e.g.~\cite{vovk1995game}).
Therefore, the function class $\mathcal{F}$ should be rich enough to capture the map between contexts and the variation in cost functions and also not be too large so that the estimation error could be small.

To construct a function class $\mathcal{F}$ appropriate for our problem setting, we first note that the market operator cannot observe the supplier's costs but can instead use the oracle to regress on the supplier's production $x^*(\cdot; \theta_t)$, which is directly observable.
Note that by the first-order optimality condition on Problem~\eqref{eq:supObj} that the production and price are related as follows:
\[c_t'(x^*(p; \theta_t); \theta_t) = p \implies x^*(p; \theta_t) = (c_t')^{-1}(p; \theta_t).\]
Note that when the cost functions are strongly convex (i.e., $c_t'$ is invertible), the production level $x^*(\cdot; \theta_t)$ is well-defined as a function of the price $p_t$ and context $\theta_t$.
Thus, we define the function class $\mathcal{F}$ as the possible mappings from the price-context tuple $(p, \theta)$ to the production $x^*$, i.e., the oracle tries to determine the amount of the supplier's production given a price $p_t$ and context $\theta_t$.

With our choice of the online regression oracle, we now present our algorithm for the setting of time-varying costs with contexts based on the inverse gap weighing method introduced in~\cite{foster2020beyond}. 
\
Because the set of permissible prices is infinite, we pick a finite set of $K$ prices that are uniformly spaced on the interval $[0, 1]$, so that our algorithm only chooses between these $K$ discrete prices.
The performance of our algorithm will depend on the choice of $K$, and we shall discuss the choice of an appropriate value of $K$ later in this section.
\
Given the oracle's output $\hat{f}_t$ that estimates the production quantity $x^*(p_t; \theta_t)$, the greedy choice at each period $t$ is to match the requested demand $d_t$ as closely as possible, i.e. choose a price $\hat{p}_t$ such that the quantity $|\hat{f}_t(\hat{p}_t; \theta_t) - d_t|$ is minimized.
To balance between both exploration and exploitation, Algorithm~\ref{alg:time-varying-cost} instead samples each price $p_t$ from the set of $K$ discrete prices according to a probability distribution $\Delta_t$. 
In effect, Algorithm~\ref{alg:time-varying-cost} achieves good exploration by choosing any of the $K$ prices with some positive probability, which ensures that the online regression oracle has access to a wide-ranging history $\mathcal{H}_{t-1}$, so it can achieve a low prediction error even when the market operator receives a new context or demand. Furthermore, to minimize the penalty for exploration, the probability distribution $\Delta_t$ is chosen such that it assigns the highest probability to the greedy choice $\Hat{p}_t$ under the current oracle estimate $\hat{f}_t$ while assigning a probability to every other price that is roughly inversely proportional to the gap between its unmet or excess demand and that of the greedy choice $\Hat{p}_t$. Then, for each period $t$, given this choice of $\Delta_t$, a price $p_t$ is sampled from this distribution, following which suppliers produce an optimal quantity $x^*(p_t; \theta_t)$ of the commodity as given by the solution of Problem~\eqref{eq:supObj}. Finally, the oracle is updated with the new context $\theta_t$, customer demand $d_t$, and optimal supplier production to generate a new estimator $\Hat{f}_{t+1}$ for the next period. This process is presented formally in Algorithm~\ref{alg:time-varying-cost}.

\begin{algorithm}
\SetAlgoLined
\SetKwInOut{Input}{Input}\SetKwInOut{Output}{Output}
\Input{Online regression oracle $\mathcal{O}$ with input pairs $(\theta_t, p_t)$ and output $x_t$ and exploration parameter $\gamma > 0$}
Define uniform $K$-cover of possible prices $0 = p_1 < p_2 < \cdots < p_K = 1$\;
 \For{$t = 1, \dots, T$}{
 Query the oracle $\mathcal{O}$ for an estimator $\hat{f}_t$\;
 Receive context $\theta_t$ and demand $d_t$\;
 Sample price $p_t$ from the probability distribution 
    \[\Delta_t(p_i) = \frac{1}{\lambda + 2 \gamma \left(|\hat{f}_t(p_i; \theta_t) - d_t| - |\hat{f}_t( \hat{p}_t; \theta_t) - d_t|\right)},\]
    with $\hat{p}_t = \argmin_{p \in \{p_1, \dots, p_K\}} |\hat{f}_t(p; \theta_t) - d_t|$ and $\lambda \in (0, K)$ as the normalization constant\;
Commit $p_t$ and observe the production $x_t = x^*(p_t; \theta_t)$ corresponding to the solution of Problem~\eqref{eq:supObj} given the price $p_t$\; 
Update the oracle $\mathcal{O}$ with $((\theta_t, p_t), x_t)$\;
}
\caption{Online Equilibrium Pricing for Time-Varying Costs}
\label{alg:time-varying-cost}
\end{algorithm}

We now present the main result of this section, which establishes that for an appropriate choice of the discretization $K$ of the price set, Algorithm~\ref{alg:time-varying-cost} can achieve sub-linear regret on all three regret metrics, as is elucidated through the following theorem. 
We highlight that this theorem holds for any sequence of contexts $\theta_t$, which means that $\theta_t$ can be derived from some physical dynamics, drawn from a probability distribution, or even chosen adversarially.

\begin{theorem}[Informal]
    \label{thm:igw-bound-informal}
    With high probability, for any sequence of contexts $\theta_t$ and customer demands $d_t$ over $T$ periods, the unmet demand, payment regret, and cost regret satisfy:
    \begin{equation} 
    \label{equ:informal-regret-bound}
        \EE_{p_t \sim \Delta_t, t \in [T]}[U_T(p_1, \dots, p_T)] \le O\left(\sqrt{KT \cdot \est(T)} + \frac{T}{K}\right),
    \end{equation}
    where $K$ is the number of prices in the uniformly discretized price set.
    \
    If we choose $K = \sqrt[3]{T / \est(T)}$, then we have
        \[ \EE_{p_t \sim \Delta_t, t \in [T]}[U_T(p_1, \dots, p_T)] \le O\left(T^{2/3} \cdot \sqrt[3]{\est(T)}\right).\]
    \
    And similar conclusions hold for $\EE_{p_t \sim \Delta_t, t \in [T]}[P_T(p_1, \dots, p_T)]$ and $\EE_{p_t \sim \Delta_t, t \in [T]}[C_T(p_1, \dots, p_T)]$.
\end{theorem}

We note that because the pricing policy in Algorithm~\ref{alg:time-varying-cost} is probabilistic, we present the regret bounds in expectation with respect to the distributions $\Delta_t$.
For the simplicity of exposition, we only present Theorem~\ref{thm:igw-bound-informal} as an informal statement and present the analysis of a rigorous theorem statement, which involves introducing additional notation, e.g., quantifying the high probability bound, in Appendix~\ref{sec:igw-alg-proof}. Furthermore, we present a sketch of the main ideas used to analyze Algorithm~\ref{alg:time-varying-cost} in Section~\ref{sec:igw-alg-sketch}.

\
The regret bound in Equation~\eqref{equ:informal-regret-bound} has two parts --- the first part corresponds to the regret bound of our algorithm against the offline optimal over a finite decision space, and the second part corresponds to the discretization error from only considering a finite set of prices from the continuous interval of permissible prices $p_t \in [0, 1]$.
The first part depends on the quantity $\est(T)$, which captures the ``size'' of the cost function class $\mathcal{F}$, so that a richer $\mathcal{F}$ is more difficult to estimate and results in weaker regret bounds.
For the second part of the regret bound, we note that the space of permissible actions (i.e., the prices $p_t$) is continuous, but most bandit algorithms can only keep track of finitely many distinct actions.
To address this difficulty, we restrict Algorithm~\ref{alg:time-varying-cost}'s actions to $K$ different prices.
This discretization of the prices leads to an error on the order of $1/K$.
Then, to get the final regret guarantee, we pick an appropriate value of $K$ to balance the regret incurred by the bandit algorithms on the finite set of prices and the discretization error.

This discretization technique is common in the literature for bandit problems with infinite action spaces.
For instance, in Lipschitz bandits (without contexts) where the action space is continuous and the reward function is Lipschitz in the actions, after restricting to a set of $K$ uniformly spaced actions, the regret incurred by a bandit algorithm, e.g., EXP3, is on the order of $\sqrt{TK}$ and the discretization error is on the order of $\frac{T}{K}$. After picking the optimal choice of $K = T^{1/3}$, the overall regret for a one-dimensional Lipschitz bandit is on the order of $T^{2/3}$, which is known to be optimal for the Lipschitz bandit setting~\citep{slivkins2011contextual,foster2021statistical}.

\begin{remark} \label{rmk:t^2/3}
When the function class $\mathcal{F}$ is finite, e.g., there are only finitely many possible cost functions, the multiplicative weights update algorithm can achieve $\est(T) = \log |\mathcal{F}|$.
Therefore, picking $K = \sqrt[3]{T / \log |\mathcal{F}|}$ achieves a sublinear regret bound of $O(T^{2/3} \sqrt[3]{\log |\mathcal{F}|})$.

Recall that, as previously discussed, the optimal regret for the Lipschitz bandit problem is on the order of $T^{2/3}$.
Therefore, the $O(T^{2/3})$ regret established in Theorem~\ref{thm:igw-bound-informal} for finite function classes (see Remark~\ref{rmk:t^2/3}) is consistent with known lower bounds in bandit settings with infinite action spaces, indicating that our result matches the expected regret scaling in such settings.
\hfill //
\end{remark}
\

Additionally, in Appendix~\ref{sec:igw-alg-proof}, we discuss several other function classes $\mathcal{F}$, including parametric classes, bounded Lipschitz functions, and neural networks with bounded spectral norm, for which Algorithm~\ref{alg:time-varying-cost} achieves a sub-linear regret guarantee by leveraging results from the statistical learning literature (see, e.g., \cite{wainwright2019high,rakhlin2014online}).

\subsection{Sketch of Main Ideas to Analyze Algorithm~\ref{alg:time-varying-cost}}
\label{sec:igw-alg-sketch}

\
In this section, we describe the key ideas in the analysis of Algorithm~\ref{alg:time-varying-cost}.
First, we introduce the following proxy regret metric and show that it is an upper bound for all three performance metrics.
\begin{equation}
\label{equ:unified-metric-prod}
    \textsc{Reg}(T) := \sum_{t=1}^T \EE_{p_t \sim \Delta_t} \left[\left|x^*(p_t; \theta_t) - d_t\right| \mid \mathcal{H}_{t-1} \right],
\end{equation}
where $x^*(p_t; \theta_t)$ is the optimal production quantity of the suppliers and $p_t$ is chosen according to the distribution $\Delta_t$ as defined in Algorithm~\ref{alg:time-varying-cost}. 
Note that the choice of distribution $\Delta_t$ depends on the past history $\mathcal{H}_{t-1} = ((p_1, \theta_1), x_1), \dots ((p_{t-1}, \theta_{t-1}), x_{t-1})$.
Then, the corresponding bandit reward function takes the form of $p  \mapsto |x^*(p; \theta_t) - d_t|$.

To show that the metric \eqref{equ:unified-metric-prod} upper bounds all three performance metrics, we establish the following lemma, which studies the solution of Problem~\eqref{eq:supObj2}-\eqref{eq:demand-con} and establishes the Lipschitzness of the optimal prices as a function of customer demands.
The proof of this lemma can be found in Appendix~\ref{sec:lipschitz-lemma}.

\begin{lemma} [Lipschitzness of Prices in Demands] \label{lem:PriceLipschitz}
Suppose that the suppliers’ cost functions $c_t(\cdot)$ are $\mu$-strongly convex and $\ell$-Lipschitz smooth.
The optimal prices corresponding to the dual variables of the market clearing constraint of Problem~\eqref{eq:supObj2}-\eqref{eq:demand-con} are Lipschitz in the demand $d$, i.e., for all $d_1, d_2 \in [\underline{d}, \Bar{d}]$, $|p^*(d_1) - p^*(d_2)| \leq L_3 |d_1 - d_2|$ for some constant $L_3>0$ which only depends on $\mu$ and $\ell$. Here, $p^*(d)$ is the optimal price corresponding to the dual variable of the market-clearing constraint of Problem~\eqref{eq:supObj2}-\eqref{eq:demand-con} with a customer demand of $d$.
\end{lemma}

We note that, unlike Lemmas~\ref{lem:ProductionLipschitz} and~\ref{lem:lipschitzPricesRegret}, Lemma~\ref{lem:PriceLipschitz} requires the additional assumption that supplier cost functions are Lipschitz smooth. This stronger assumption is crucial in establishing that the proxy regret metric defined in~\eqref{equ:unified-metric-prod} upper bounds the three regret metrics studied in this work (see discussion below). To present our results in their most general form in the fixed cost settings considered in Sections~\ref{sec:fixed-setting} and~\ref{sec:vary-demand}, where such Lipschitz smoothness assumptions are not imposed, we do not use the metric~\eqref{equ:unified-metric-prod} and instead present our results under more general cost function assumptions in those sections.

By employing Corollary~\ref{thm:unified-metric-price} and Lemma~\ref{lem:PriceLipschitz}, we can show that the proxy regret \eqref{equ:unified-metric-prod} upper bounds the unmet demand, payment regret, and cost regret metrics.
To see this, first note that this proxy regret upper bounds the unmet demand because
\[\left|x^*(p_t; \theta_t) - d_t\right| \ge \left(x^*(p_t; \theta_t) - d_t\right)_+.\]
Next, let $p^*_t$ be the equilibrium price for demand $d_t$, and note that $p_t$ is the equilibrium price when demand is equal to $x^*(p_t; \theta_t)$.
So, by Lemma~\ref{lem:PriceLipschitz}, we have that
\[ |p_t  - p^*_t| \le L_3 \left|x^*(p_t; \theta_t) - d_t\right|. \] 
And by applying Corollary~\ref{thm:unified-metric-price} to the previous inequality, we conclude that 
\[\EE_{p_t \sim \Delta_t, t \in [T]}[P_T] \le L_1 \sum_{t=1}^T \EE_{p_t \sim \Delta_t} \left|p_t - p^*_t\right| \le L_1 L_3 \sum_{t=1}^T \EE_{p_t \sim \Delta_t} \left|x^*(p_t; \theta_t) - d_t\right| = L_1 L_3 \textsc{Reg}(T).\]
Using a similar line of reasoning, we can also show that the proxy regret serves as an upper bound, up to constants, for the cost regret $\EE_{p_t \sim \Delta_t, t \in [T]}[C_T]$.

We note that the proxy regret metric \eqref{equ:unified-metric-prod} is different from the unified regret metric \eqref{equ:unified-metric-price} from Section~\ref{sec:fixed-setting} in that \eqref{equ:unified-metric-prod} measures the mismatch between the actual productions and demands, whereas \eqref{equ:unified-metric-price} measures the difference between our algorithm's prices and the equilibrium price.
In the current setting of time-varying cost functions, we note that the metric \eqref{equ:unified-metric-prod} is exactly equal to the regret incurred by a contextual bandit with a reward function $|f_t(p_t; \theta_t) - d_t|$.
In contrast, we cannot directly derive a regret bound in the form of the unified regret metric \eqref{equ:unified-metric-price} because, given that the equilibrium prices $p^*(d_t)$ are not observed by the algorithm, \eqref{equ:unified-metric-price} cannot be written as the sum of some bandit reward functions over time. Hence, we use the proxy regret metric \eqref{equ:unified-metric-prod} for the analysis of Algorithm~\ref{alg:time-varying-cost} in the time-varying cost setting.
\

Given the above observation that the proxy regret \eqref{equ:unified-metric-prod} is an upper bound on the three regret metrics, it suffices to show that Algorithm~\ref{alg:time-varying-cost} achieves a sub-linear regret, as is elucidated by the following proposition.
\begin{proposition}[informal]
\label{thm:igw-regret-informal}
    With high probability, for any sequence of context $\theta_t$ and demand $d_t$, the proxy regret is bounded by
    \[ \textsc{Reg}(T) \le O\left(\sqrt{KT \cdot \est(T)} + \frac{T}{K}\right).\]
\end{proposition}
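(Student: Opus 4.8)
The plan is to adapt the inverse gap weighting (IGW) analysis of~\cite{foster2020beyond} to our pricing setting, with two modifications tailored to the structure here. Throughout, write $\ell_t(p) := |x^*(p;\theta_t) - d_t|$ for the true per-round loss and $\hat\ell_t(p) := |\hat f_t(p;\theta_t) - d_t|$ for the loss under the oracle estimate, so that $\textsc{Reg}(T) = \sum_{t=1}^T \EE_{p_t\sim\Delta_t}[\ell_t(p_t)\mid\mathcal H_{t-1}]$ and $\Delta_t$ is precisely the IGW distribution built from $\hat\ell_t$ with greedy anchor $\hat p_t = \argmin_p \hat\ell_t(p)$. The first step is to fix a benchmark and split off the discretization error: let $p^\star_t$ be the grid price nearest the equilibrium price $p^*_t$ for demand $d_t$. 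Since $x^*(p^*_t;\theta_t)=d_t$ and production is $L$-Lipschitz in price (Lemma~\ref{lem:ProductionLipschitz}), the $O(1/K)$ grid spacing yields $\ell_t(p^\star_t)\le L/K$, so that $\sum_t \ell_t(p^\star_t) \le O(T/K)$. It therefore remains to bound the IGW regret $\sum_t\big(\EE_{p_t\sim\Delta_t}[\ell_t(p_t)] - \ell_t(p^\star_t)\big)$ by $O(\sqrt{KT\cdot\est(T)})$.

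For the IGW regret I would invoke the core per-round inequality of inverse gap weighting, which for the distribution $\Delta_t$ and any fixed grid price $q$ decomposes the instantaneous regret into an exploration term and a prediction-error term,
\[\EE_{p_t\sim\Delta_t}[\ell_t(p_t)] - \ell_t(q) \;\le\; \frac{K}{2\gamma} \;+\; 2\gamma\,\EE_{p_t\sim\Delta_t}\big[(\hat\ell_t(p_t) - \ell_t(p_t))^2\big].\]
The observation specific to our setting is that the oracle regresses on the production $x^*$ rather than on the loss directly, so I would bridge the two via the reverse triangle inequality, $|\hat\ell_t(p)-\ell_t(p)| \le |\hat f_t(p;\theta_t) - x^*(p;\theta_t)|$, whence $(\hat\ell_t(p)-\ell_t(p))^2 \le (\hat f_t(p;\theta_t) - x^*(p;\theta_t))^2$. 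Summing the per-round inequality over $t$ with $q=p^\star_t$ then gives
\[\sum_{t=1}^T\big(\EE_{p_t\sim\Delta_t}[\ell_t(p_t)] - \ell_t(p^\star_t)\big) \;\le\; \frac{KT}{2\gamma} \;+\; 2\gamma\sum_{t=1}^T\EE_{p_t\sim\Delta_t}\big[(\hat f_t(p_t;\theta_t) - x^*(p_t;\theta_t))^2\big].\]

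The remaining task is to control the cumulative conditional-expected squared prediction error using the online regression oracle. Because the realized production is observed noiselessly ($x_t = x^*(p_t;\theta_t)$) and the true production map lies in $\mathcal F$, the comparator term in the oracle guarantee vanishes, giving $\sum_t (\hat f_t(p_t;\theta_t) - x^*(p_t;\theta_t))^2 \le \est(T)$ along the realized trajectory. To pass from this realized bound to the conditional-expected quantity above, I would apply a variance-adaptive martingale concentration inequality (Freedman's inequality) to the realized squared errors: since productions and estimates are bounded, the conditional variance of each realized squared error is controlled by its conditional mean, and this self-bounding structure yields $\sum_t \EE_{p_t\sim\Delta_t}[(\hat f_t - x^*)^2] \le O(\est(T) + \log(1/\delta))$ with probability $1-\delta$, i.e.\ the same order as the realized bound up to lower-order logarithmic terms. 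Substituting and optimizing the exploration parameter as $\gamma = \sqrt{KT/\est(T)}$ balances $KT/\gamma$ against $\gamma\,\est(T)$, both of order $\sqrt{KT\cdot\est(T)}$; adding back the discretization term $O(T/K)$ yields $\textsc{Reg}(T)\le O(\sqrt{KT\cdot\est(T)} + T/K)$.

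I expect the main obstacle to be the per-round IGW inequality together with the passage from the oracle's \emph{realized} square-error guarantee to the \emph{conditional-expected} error that actually drives $\textsc{Reg}(T)$. The IGW inequality is the analytic heart of the method and must exploit the precise inverse-gap form of $\Delta_t$ to trade exploration against estimation. The concentration step is subtler than it first appears: a naive Azuma bound would inject an $O(\sqrt{T})$ term that could dominate $\est(T)$ (which is merely $\log|\mathcal F|$ for finite classes), so the argument must use the variance-adaptive, self-bounding form of Freedman's inequality to keep the concentration penalty lower-order. By contrast, the reverse-triangle bridge from loss-error to production-error and the Lipschitz discretization bound are comparatively routine, but they are exactly what let a standard regression-oracle guarantee on $x^*$ drive a regret bound on the demand-matching loss.
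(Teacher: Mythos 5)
Your proposal is correct and follows essentially the same route as the paper's proof of Theorem~\ref{thm:igw-regret}: the same Lipschitz discretization yielding the $O(T/K)$ term, the same per-round inverse-gap-weighting decomposition (which the paper derives explicitly via a three-term split rather than citing it from~\cite{foster2020beyond}), the same reverse-triangle bridge from the loss error $(\hat\ell_t-\ell_t)^2$ to the production error $(\hat f_t - x^*)^2$, and the same Freedman-based passage from the oracle's realized square-error guarantee to the conditional expectation before optimizing $\gamma \asymp \sqrt{KT/\est(T)}$. The only substantive difference is that you exploit well-specification and noiseless feedback to make the oracle's comparator term vanish and collapse the concentration step, whereas the paper's formal version carries an $\varepsilon$-miss-specification error through an additional cross-term martingale argument; for the well-specified informal statement your shortcut is valid.
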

\
Theorem~\ref{thm:igw-bound-informal} follows from Proposition~\ref{thm:igw-regret-informal} as the ``proxy'' regret serves as an upper bound on the unmet demand, payment regret, and cost regret due to the derived Lipschitz relations.
\
For a complete proof of Proposition~\ref{thm:igw-regret-informal}, see Appendix~\ref{sec:igw-alg-proof}.

\section{Numerical Experiments} \label{sec:experiments}

In this section, we present numerical experiments to corroborate several of the theoretical results obtained in the paper. In particular, we first validate our obtained regret bounds for Algorithms~\ref{alg:fixed-demand-constant-reg} and~\ref{alg:time-varying-demand-new-sqrt} in the setting with fixed costs and fixed demands and fixed cost but variable demands settings, respectively, in Section~\ref{sec:validate-algos1-2}. We then validate Proposition~\ref{prop:time-varying-cost-countereg}, which establishes the impossibility of developing an algorithm with sub-linear regret in the variable cost function setting, by numerically showing that a widely-studied primal-dual algorithm used in online learning literature, based on performing \emph{online dual sub-gradient descent}, incurs linear regret on at least one of the three performance metrics in our studied problem setting.

\subsection{Validating Theoretical Bounds of Algorithms~\ref{alg:fixed-demand-constant-reg} and~\ref{alg:time-varying-demand-new-sqrt}} \label{sec:validate-algos1-2}

In this section, we validate our regret bounds obtained in Theorems~\ref{thm:IdenticalResult} and~\ref{thm:VaryingDemandResult-sqrtT} for the fixed cost and fixed demands and fixed cost but variable demands settings, respectively.

\
\emph{Experiment Setup:} 
For the experiments presented in this section, we apply Algorithms~\ref{alg:fixed-demand-constant-reg} and~\ref{alg:time-varying-demand-new-sqrt} to two market instances, described below.

\emph{Instance 1:} We consider a market instance inspired from~\cite{6039082}, which considers settings where the supplier cost functions are smooth piecewise quadratic functions. Specifically, for both the fixed and variable demand settings, we consider one supplier with a cost function, which is fixed across the time horizon, given by:
\begin{equation}
c(x) = \begin{cases}
1/32 x^2 + 5/16x & 0\leq x < 1 \\
1/16 x^2 + 1/4x + 1/32 & 1 < x < 2 \\
1/8 x^2 + 9/32  & x \geq 2
\end{cases}
\end{equation}
In the fixed demand setting, we consider a demand of $d = 1.5$. In the variable demand setting, the demand is uniformly drawn at random between the range $[0.1, 4]$ at each period and we average the different regret metrics across 100 instances of demand realizations.

\emph{Instance 2:} We consider a modification of a market instance from~\cite{azizan2020optimal}, which considers a market based on the well-studied markets in~\cite{hogan2003minimum,scarf-example}. For both the fixed and variable demand settings, we consider three suppliers with fixed cost functions $c_1(x_1) = 3/16 x_1^2$, $c_2(x_2) = 2/7 x_2^2$, and $c_3(x_3) = 7/6 x_1^2$ across the time horizon, which correspond to the strongly convex components of the supplier cost functions presented in~\cite{azizan2020optimal}. In the fixed demand setting, we consider a demand of $d = 1$. In the variable demand setting, as in instance 1, the demand is uniformly drawn at random between the range $[0.1, 4]$ at each period and we average the different regret metrics across 100 instances of demand realizations.

We note that our chosen market instances are representative of much of the work on electricity markets that has focused on quadratic or piecewise quadratic supplier cost functions~\citep{TSAOUSOGLOU2022111890,1102918,7796713,4112652}.

\paragraph{Results:}

Figures~\ref{fig:regret-fixed_cost_demand_inst1}-\ref{fig:regret-fixed_cost_variable_demand} depict the unmet demand, cost regret, and payment regret corresponding to Algorithms~\ref{alg:fixed-demand-constant-reg} and~\ref{alg:time-varying-demand-new-sqrt} for the above-defined problem instances for the respective settings of fixed and variable demands. From Figures~\ref{fig:regret-fixed_cost_demand_inst1} and~\ref{fig:regret-fixed_cost_demand}, which correspond to the fixed demand setting for instances 1 and 2, respectively, we observe that all three regret metrics are either a small positive or negative constant, aligning with our obtained regret bound in Theorem~\ref{thm:IdenticalResult}. From the log-log plots of the three regret metrics across time periods depicted in Figures~\ref{fig:regret-fixed_cost_variable_demand_inst1} and~\ref{fig:regret-fixed_cost_variable_demand}, which correspond to the variable demand setting for instances 1 and 2, respectively, we observe that the slopes of the of the respective curves is roughly $0.5$, which aligns with the $O(\sqrt{T})$ regret bound of Algorithm~\ref{alg:time-varying-demand-new-sqrt} obtained in Theorem~\ref{thm:VaryingDemandResult-sqrtT}. Moreover, from the center of Figure~\ref{fig:regret-fixed_cost_variable_demand_inst1}, we observe that the cost regret is negative, which also aligns with the $O(\sqrt{T})$ regret bound of Algorithm~\ref{alg:time-varying-demand-new-sqrt}. Thus, our obtained numerical results validate our developed theoretical bounds for Algorithms~\ref{alg:fixed-demand-constant-reg} and~\ref{alg:time-varying-demand-new-sqrt}. 
\

\begin{figure}
    \centering \begin{subfigure}[t] {0.3\columnwidth}
    \begin{tikzpicture}

\definecolor{darkgray176}{RGB}{176,176,176}
\definecolor{steelblue31119180}{RGB}{31,119,180}

\begin{axis}[
width=2in,
height=2in,
tick align=outside,
tick pos=left,
ylabel={Unmet Demand},
xlabel = {Time (thousands of steps)},
x grid style={darkgray176},
xmin=-0.8, xmax=20.8,
xtick style={color=black},
xticklabels={-1, 0, 5, 10, 15, 20}, y grid style={darkgray176},
ymin=0.5, ymax=6.5,
ytick style={color=black}
]
\addplot [line width = 2pt, steelblue31119180]
table {0.100 3.4999999999999996
0.200 3.4999999999999996
0.300 3.4999999999999996
0.400 3.4999999999999996
0.500 3.4999999999999996
0.750 3.4999999999999996
1.000 3.4999999999999996
2.000 3.4999999999999996
3.000 3.4999999999999996
4.000 3.4999999999999996
5.000 3.4999999999999996
10.000 3.4999999999999996
20.000 3.4999999999999996
};
\end{axis}

\end{tikzpicture} \end{subfigure} \begin{subfigure}[t] {0.3\columnwidth} 
    \begin{tikzpicture}

\definecolor{darkgray176}{RGB}{176,176,176}
\definecolor{steelblue31119180}{RGB}{31,119,180}

\begin{axis}[
width=2in,
height=2in,
tick align=outside,
tick pos=left,
ylabel={Cost Regret},
xlabel = {Time (thousands of steps)},
x grid style={darkgray176},
xmin=-0.8, xmax=20.8,
xtick style={color=black},
xticklabels={-1, 0, 5, 10, 15, 20}, y grid style={darkgray176},
ymin=-2.5, ymax=0,
ytick style={color=black}
]
\addplot [line width = 2pt, steelblue31119180]
table {0.100 -1.2604166666666667
0.200 -1.2604166666666667
0.300 -1.2604166666666667
0.400 -1.2604166666666667
0.500 -1.2604166666666667
0.750 -1.2604166666666667
1.000 -1.2604166666666667
2.000 -1.2604166666666667
3.000 -1.2604166666666667
4.000 -1.2604166666666667
5.000 -1.2604166666666667
10.000 -1.2604166666666667
20.000 -1.2604166666666667
};
\end{axis}

\end{tikzpicture} \end{subfigure} \begin{subfigure}[t] {0.3\columnwidth}
    \begin{tikzpicture}

\definecolor{darkgray176}{RGB}{176,176,176}
\definecolor{steelblue31119180}{RGB}{31,119,180}

\begin{axis}[
    width=2in,
    height=2in,
    tick align=outside,
    tick pos=left,
    ylabel={Payment Regret},
    xlabel={Time (thousands of steps)},
    x grid style={darkgray176},
xmin=-0.8, xmax=20.8,
xtick style={color=black},
xticklabels={-1, 0, 5, 10, 15, 20}, y grid style={darkgray176},
    ymin=-2.75, ymax=-0.75,
    ytick style={color=black},
    yticklabel style={
        /pgf/number format/.cd,
        fixed,
        fixed zerofill,
        precision=2,
        /tikz/.cd
    },
    scaled y ticks=false,
    every axis y label/.append style={at={(ticklabel cs:1.05)}},
    ylabel near ticks
]
\addplot [line width=2pt, steelblue31119180]
table {
0.100 -1.7135416666666663
0.200 -1.7135416666666663
0.300 -1.7135416666666663
0.400 -1.7135416666666663
0.500 -1.7135416666666663
0.750 -1.7135416666666663
1.000 -1.7135416666666663
2.000 -1.7135416666666663
3.000 -1.7135416666666663
4.000 -1.7135416666666663
5.000 -1.7135416666666663
10.000 -1.7135416666666663
20.000 -1.7135416666666663
};
\end{axis}

\end{tikzpicture} \end{subfigure}
\vspace{-10pt}
    \caption{\small \sf Unmet demand, cost regret, and payment regret corresponding to Algorithm~\ref{alg:fixed-demand-constant-reg} for a setting with a fixed demand and one supplier with a fixed cost function described in Instance 1. }
    \label{fig:regret-fixed_cost_demand_inst1}
\end{figure}

\begin{figure}
    \centering \begin{subfigure}[t] {0.3\columnwidth}
    \begin{tikzpicture}
\definecolor{darkgray176}{RGB}{176,176,176}
\definecolor{steelblue31119180}{RGB}{31,119,180}

\begin{axis}[
width=2in,
height=2in,
tick align=outside,
tick pos=left,
ylabel={Log of Unmet Demand},
xlabel = {Log of Time Periods ($\log(T)$)},
x grid style={darkgray176},
xmin=5.5, xmax=11,
xtick style={color=black},
y grid style={darkgray176},
ymin=-2.5, ymax=0.5,
ytick style={color=black},
legend style={
  fill opacity=0.8,
  draw opacity=1,
  text opacity=1,
  at={(0.35,0.25)},
  anchor=north west,
  draw=white!80!black,
  inner xsep=-0.2pt, inner ysep=-2pt,
  font = {\tiny\arraycolsep=2pt}
}
]
\addplot [line width = 2pt, steelblue31119180]
table {5.7037824746562 -2.3462824
5.99146454710798 -2.33835227
6.21460809842219 -1.71206333
6.62007320653036 -1.72278028
6.90775527898214 -1.71088748
7.60090245954208 -1.1304407
8.00636756765025 -0.58809527
8.29404964010203 -0.58575593
8.51719319141624 -0.5869404
9.21034037197618 -0.08618985
9.90348755253613 -0.08454975
10.8197782844103 0.37138437
};
\addlegendentry{Algorithm~\ref{alg:time-varying-demand-new-sqrt}}
\addplot [semithick, red, draw=red, fill=red, mark=*]
table {5.7 -2.1
10.8 0.45
};
\addlegendentry{Slope = 0.5}
\end{axis}

\end{tikzpicture} \end{subfigure} \begin{subfigure}[t] {0.3\columnwidth} 
    \begin{tikzpicture}

\definecolor{darkgray176}{RGB}{176,176,176}
\definecolor{steelblue31119180}{RGB}{31,119,180}

\begin{axis}[
width=2in,
height=2in,
tick align=outside,
tick pos=left,
ylabel={Cost Regret},
xlabel = {Time (thousands of steps)},
x grid style={darkgray176},
xmin=-0.8, xmax=50.8,
xtick style={color=black},
y grid style={darkgray176},
ymin=-300, ymax=0,
ytick style={color=black}
]
\addplot [line width = 2pt, steelblue31119180]
table {0.300 -11.89782260273171
0.400 -12.299256379279567
0.500 -25.32928212330074
0.750 -24.95151571752401
1.000 -26.320461608395803
2.000 -46.11000207117626
3.000 -81.31864476473156
4.000 -82.34173283978687
5.000 -81.31673242771112
10.000 -132.51047314056623
20.000 -133.41933351868366
50.000 -206.49208678999506
};
\end{axis}

\end{tikzpicture} \end{subfigure} \hspace{10pt}
\begin{subfigure}[t] {0.3\columnwidth}
    \begin{tikzpicture}

\definecolor{darkgray176}{RGB}{176,176,176}
\definecolor{steelblue31119180}{RGB}{31,119,180}

\begin{axis}[
width=2in,
height=2in,
tick align=outside,
tick pos=left,
ylabel={Log of Payment Regret},
xlabel = {$\log(T)$},
x grid style={darkgray176},
xmin=5.5, xmax=11,
xtick style={color=black},
y grid style={darkgray176},
ymin=-7, ymax=-4,
ytick style={color=black}
]
\addplot [line width = 2pt, steelblue31119180]
table {5.7037824746562 -6.95145258
5.99146454710798 -6.94352246
6.21460809842219 -6.31723351
6.62007320653036 -6.32795046
6.90775527898214 -6.31605767
7.60090245954208 -5.73561088
8.00636756765025 -5.19326546
8.29404964010203 -5.19092612
8.51719319141624 -5.19211059
9.21034037197618 -4.69136004
9.90348755253613 -4.68971993
10.8197782844103 -4.23378581
};
\addplot [semithick, red, draw=red, fill=red, mark=*]
table {5.7 -6.75
10.8 -4.2
};
\end{axis}

\end{tikzpicture} \end{subfigure}
\vspace{-10pt}
    \caption{\small \sf Unmet demand, cost regret, and payment regret corresponding to Algorithm~\ref{alg:time-varying-demand-new-sqrt} for a setting with a supplier with fixed cost function described in Instance 1 and a variable demand uniformly drawn from $[0.1, 4]$ at each period. }
    \label{fig:regret-fixed_cost_variable_demand_inst1}
\end{figure}

\subsection{Validating Proposition~\ref{prop:time-varying-cost-countereg} for Online Dual Sub-gradient Descent} \label{sec:gd-validation}

In this section, we use numerical experiments to validate our impossibility result (Proposition~\ref{prop:time-varying-cost-countereg}) in the time-varying cost setting. In particular, we show that applying a widely studied primal-dual method that has been shown to achieve good performance (i.e., low regret) for a wide range of problem settings will result in linear regret on at least one of the three regret metrics studied in this work, thereby aligning with Proposition~\ref{prop:time-varying-cost-countereg}.

\paragraph{Online Sub-gradient Descent Algorithm:}

A widely used method in the online learning literature is online dual sub-gradient descent, wherein a dual multiplier representing the price is maintained by the algorithm and this price is updated at each period using dual sub-gradient descent. The dual sub-gradient descent algorithm for our studied problem setting is presented formally in Algorithm~\ref{alg:GradientDescentEnergy}. In particular, Algorithm~\ref{alg:GradientDescentEnergy} proceeds in two phases at each period $t$. In the first phase, suppliers supply the optimal quantity of goods given the set prices for that period. In the second phase, prices for the next period are updated based on the discrepancy between the supply and the demand $d$. To update the price, we use a step-size $\gamma$, typically chosen to be $O(\frac{1}{\sqrt{T}})$, and ensure that the price is non-negative at each period. We note that the price update step follows from performing gradient descent on the $t'$th term of the dual of the central planner's cost minimization Problem~\eqref{eq:supObj2}-\eqref{eq:demand-con}.

\begin{figure}
    \centering \begin{subfigure}[t] {0.3\columnwidth}
    \begin{tikzpicture}

\definecolor{darkgray176}{RGB}{176,176,176}
\definecolor{steelblue31119180}{RGB}{31,119,180}

\begin{axis}[
width=2in,
height=2in,
tick align=outside,
tick pos=left,
ylabel={Unmet Demand},
xlabel = {Time (thousands of steps)},
x grid style={darkgray176},
xmin=-0.8, xmax=20.8,
xtick style={color=black},
xticklabels={-1, 0, 5, 10, 15, 20}, y grid style={darkgray176},
ymin=-0.5, ymax=1.5,
ytick style={color=black}
]
\addplot [line width = 2pt, steelblue31119180]
table {0.100 0.5103108416139558
0.200 0.5103108416139669
0.300 0.510310841613978
0.400 0.5103108416139891
0.500 0.5103108416140002
0.750 0.510310841614028
1.000 0.5103108416140557
2.000 0.5103108416141667
3.000 0.5103108416142778
4.000 0.5103108416143888
5.000 0.5103108416144998
10.000 0.5103108416150549
20.000 0.5103108416161651
};
\end{axis}

\end{tikzpicture} \end{subfigure} \hspace{10pt} 
\begin{subfigure}[t] {0.3\columnwidth} 
    \begin{tikzpicture}

\definecolor{darkgray176}{RGB}{176,176,176}
\definecolor{steelblue31119180}{RGB}{31,119,180}

\begin{axis}[
width=2in,
height=2in,
tick align=outside,
tick pos=left,
ylabel={Cost Regret},
xlabel = {Time (thousands of steps)},
x grid style={darkgray176},
xmin=-0.8, xmax=20.8,
xtick style={color=black},
xticklabels={-1, 0, 5, 10, 15, 20}, y grid style={darkgray176},
ymin=-0.08, ymax=0.03,
ytick style={color=black}
]
\addplot [line width = 2pt, steelblue31119180]
table {0.100 -0.021994985056753118
0.200 -0.021994985056756587
0.300 -0.021994985056760057
0.400 -0.021994985056763526
0.500 -0.021994985056766996
0.750 -0.02199498505677567
1.000 -0.021994985056784343
2.000 -0.021994985056819037
3.000 -0.02199498505685373
4.000 -0.021994985056888426
5.000 -0.02199498505692312
10.000 -0.021994985057096593
20.000 -0.021994985057443538
};
\end{axis}

\end{tikzpicture} \end{subfigure} \begin{subfigure}[t] {0.3\columnwidth}
    \begin{tikzpicture}

\definecolor{darkgray176}{RGB}{176,176,176}
\definecolor{steelblue31119180}{RGB}{31,119,180}

\begin{axis}[
    width=2in,
    height=2in,
    tick align=outside,
    tick pos=left,
    ylabel={Payment Regret},
    xlabel={Time (thousands of steps)},
    x grid style={darkgray176},
xmin=-0.8, xmax=20.8,
xtick style={color=black},
xticklabels={-1, 0, 5, 10, 15, 20}, y grid style={darkgray176},
    ymin=-0.25, ymax=0.75,
    ytick style={color=black},
    yticklabel style={
        /pgf/number format/.cd,
        fixed,
        fixed zerofill,
        precision=2,
        /tikz/.cd
    },
    scaled y ticks=false,
    every axis y label/.append style={at={(ticklabel cs:1.05)}},
    ylabel near ticks
]
\addplot [line width=2pt, steelblue31119180]
table {
0.100 0.25658576796223226
0.200 0.2565857679622267
0.300 0.25658576796222116
0.400 0.2565857679622156
0.500 0.25658576796221005
0.750 0.2565857679621962
1.000 0.2565857679621823
2.000 0.2565857679621268
3.000 0.2565857679620713
4.000 0.25658576796201576
5.000 0.25658576796196025
10.000 0.2565857679616827
20.000 0.2565857679611276
};
\end{axis}

\end{tikzpicture} \end{subfigure}
\vspace{-10pt}
    \caption{\small \sf Unmet demand, cost regret, and payment regret corresponding to Algorithm~\ref{alg:fixed-demand-constant-reg} for a setting with a fixed demand and three suppliers with fixed cost functions described in Instance 2. }
    \label{fig:regret-fixed_cost_demand}
\end{figure}
\begin{figure}
    \centering \begin{subfigure}[t] {0.3\columnwidth}
    \begin{tikzpicture}
\definecolor{darkgray176}{RGB}{176,176,176}
\definecolor{steelblue31119180}{RGB}{31,119,180}

\begin{axis}[
width=2in,
height=2in,
tick align=outside,
tick pos=left,
ylabel={Log of Unmet Demand},
xlabel = {Log of Time Periods ($\log(T)$)},
x grid style={darkgray176},
xmin=5.5, xmax=11,
xtick style={color=black},
y grid style={darkgray176},
ymin=-2.5, ymax=0.5,
ytick style={color=black},
legend style={
  fill opacity=0.8,
  draw opacity=1,
  text opacity=1,
  at={(0.35,0.25)},
  anchor=north west,
  draw=white!80!black,
  inner xsep=-0.2pt, inner ysep=-2pt,
  font = {\tiny\arraycolsep=2pt}
}
]
\addplot [line width = 2pt, steelblue31119180]
table {5.7037824746562 -2.2464284
5.99146454710798 -2.24143233
6.21460809842219 -1.64029674
6.62007320653036 -1.64418369
6.90775527898214 -1.64615977
7.60090245954208 -1.07403269
8.00636756765025 -0.53211072
8.29404964010203 -0.53117528
8.51719319141624 -0.53572088
9.21034037197618 -0.04017009
9.90348755253613 -0.03931497
10.8197782844103 0.41612989
};
\addlegendentry{Algorithm~\ref{alg:time-varying-demand-new-sqrt}}
\addplot [semithick, red, draw=red, fill=red, mark=*]
table {5.7 -2.1
10.8 0.45
};
\addlegendentry{Slope = 0.5}
\end{axis}

\end{tikzpicture}
 \end{subfigure} \begin{subfigure}[t] {0.3\columnwidth} 
    \begin{tikzpicture}

\definecolor{darkgray176}{RGB}{176,176,176}
\definecolor{steelblue31119180}{RGB}{31,119,180}

\begin{axis}[
width=2in,
height=2in,
tick align=outside,
tick pos=left,
ylabel={Log of Cost Regret},
xlabel = {$\log(T)$},
x grid style={darkgray176},
xmin=5.5, xmax=11,
xtick style={color=black},
y grid style={darkgray176},
ymin=0.25, ymax=3.5,
ytick style={color=black}
]
\addplot [line width = 2pt, steelblue31119180]
table {5.7037824746562 0.46454021
5.99146454710798 0.69456467
6.21460809842219 1.2156263
6.62007320653036 1.32339675
6.90775527898214 1.20542171
7.60090245954208 1.87755725
8.00636756765025 2.47788053
8.29404964010203 2.46636923
8.51719319141624 2.44288202
9.21034037197618 3.01343841
9.90348755253613 3.02394327
10.8197782844103 3.45316406
};
\addplot [semithick, red, draw=red, fill=red, mark=*]
table {5.7 0.85
10.8 3.4
};
\end{axis}

\end{tikzpicture} \end{subfigure} \hspace{10pt}
\begin{subfigure}[t] {0.3\columnwidth}
    \begin{tikzpicture}

\definecolor{darkgray176}{RGB}{176,176,176}
\definecolor{steelblue31119180}{RGB}{31,119,180}

\begin{axis}[
width=2in,
height=2in,
tick align=outside,
tick pos=left,
ylabel={Log of Payment Regret},
xlabel = {$\log(T)$},
x grid style={darkgray176},
xmin=5.5, xmax=11,
xtick style={color=black},
y grid style={darkgray176},
ymin=-7, ymax=-4,
ytick style={color=black}
]
\addplot [line width = 2pt, steelblue31119180]
table {5.7037824746562 -6.85159859
5.99146454710798 -6.84660252
6.21460809842219 -6.24546693
6.62007320653036 -6.24935388
6.90775527898214 -6.25132996
7.60090245954208 -5.67920288
8.00636756765025 -5.1372809
8.29404964010203 -5.13634546
8.51719319141624 -5.14089106
9.21034037197618 -4.64534027
9.90348755253613 -4.64448515
10.8197782844103 -4.1890403
};
\addplot [semithick, red, draw=red, fill=red, mark=*]
table {5.7 -6.75
10.8 -4.2
};
\end{axis}

\end{tikzpicture} \end{subfigure}
\vspace{-10pt}
    \caption{\small \sf Unmet demand, cost regret, and payment regret corresponding to Algorithm~\ref{alg:time-varying-demand-new-sqrt} for a setting with three suppliers with fixed cost functions described in Instance 2 and a variable demand uniformly drawn from $[0.1, 4]$ at each period. }
    \label{fig:regret-fixed_cost_variable_demand}
\end{figure}

\vspace{-10pt}
\begin{algorithm} 
\SetAlgoLined
\SetKwInOut{Input}{Input}\SetKwInOut{Output}{Output}
Initialize $p_1 = 0$\;
\For{$t = 1, \ldots, T$}{
 \textbf{Phase I: Suppliers Produce Profit Maximizing Bundle} \\
 $\Tilde{x}_{it} = \argmax_{x_{it} \geq 0} p_t x_{it} - c_{it}(x_{it})$ for all suppliers $i \in [n]$ \;
 $ \Tilde{X}_t = \sum_{i = 1}^n \Tilde{x}_{it}$ \tcp*{Observe Total Production}
 \textbf{Phase II: Price Vector Update} \\
 $p_{(t+1)} \leftarrow (p_{(t)} - \gamma (\Tilde{X}_t - d))_{+}$ \;
 }
\caption{Dual Sub-gradient Descent}
\label{alg:GradientDescentEnergy}
\end{algorithm}
\vspace{-10pt}

\
\paragraph{Experiment Setup:} We consider a setting with one supplier and a fixed customer demand of $d = 1$ at each period. Furthermore, we let the supplier's cost function be drawn from a probability distribution such that $c_1(x_1) = \frac{1}{6} x_1^2$ with probability $0.5$ and $c_1(x_1) = \frac{1}{12} x_1^2 + \frac{1}{4} x_1$ with probability $0.5$ at each period.

\paragraph{Results:} 

Figure~\ref{fig:regret-gradient-descent} depicts the unmet demand, cost regret, and payment regret corresponding to Algorithm~\ref{alg:GradientDescentEnergy} for the above-defined problem instance when the step-size $\gamma = \frac{1}{\sqrt{T}}$. In particular, we observe that while the payment regret is negative, the cost regret and unmet demand corresponding to Algorithm~\ref{alg:GradientDescentEnergy} increase linearly in the time horizon. Such a result validates Proposition~\ref{prop:time-varying-cost-countereg}, which states that no algorithm can achieve sub-linear regret on all three regret metrics in the setting when the costs vary over time; thus, in particular, Algorithm~\ref{alg:GradientDescentEnergy} cannot achieve sub-linear regret on all three regret metrics in the time-varying cost setting. We note that this result is in contrast to prior online learning literature~\citep{balseiro2022best}, where, in the setting when the cost functions are drawn i.i.d. from some distribution, online dual sub-gradient descent methods achieve sub-linear regret. 
\

\begin{figure}
    \centering \hspace{-5pt}
\begin{subfigure}[t] {0.3\columnwidth}
    \begin{tikzpicture}

\definecolor{darkgray176}{RGB}{176,176,176}
\definecolor{steelblue31119180}{RGB}{31,119,180}

\begin{axis}[
width=2in,
height=2in,
tick align=outside,
tick pos=left,
ylabel={Unmet Demand},
xlabel = {Time (thousands of steps)},
x grid style={darkgray176},
xmin=-0.8, xmax=20.8,
xtick style={color=black},
xticklabels={-1, 0, 5, 10, 15, 20}, y grid style={darkgray176},
ymin=-200, ymax=2000,
ytick style={color=black}
]
\addplot [line width = 2pt, steelblue31119180]
table {0.100 10.596610385386214
0.200 20.350717176475644
0.300 31.430366471573745
0.400 40.743322728756596
0.500 48.70229829459443
0.750 72.68937934016162
1.000 92.76291085182733
2.000 183.14479396876416
3.000 269.93116567631216
4.000 358.7638198673388
5.000 449.091692498243
10.000 885.3757420852353
20.000 1753.6527396464521
};
\end{axis}

\end{tikzpicture} \end{subfigure} \hspace{5pt} 
\begin{subfigure}[t] {0.3\columnwidth} 
    \begin{tikzpicture}

\definecolor{darkgray176}{RGB}{176,176,176}
\definecolor{steelblue31119180}{RGB}{31,119,180}

\begin{axis}[
width=2in,
height=2in,
tick align=outside,
tick pos=left,
ylabel={Cost Regret},
xlabel = {Time (thousands of steps)},
x grid style={darkgray176},
xmin=-0.8, xmax=20.8,
xtick style={color=black},
xticklabels={-1, 0, 5, 10, 15, 20}, y grid style={darkgray176},
ymin=-10, ymax=80,
ytick style={color=black}
]
\addplot [line width = 2pt, steelblue31119180]
table {0.100 0.13439331502886498
0.200 0.3470852030661555
0.300 0.670978241769131
0.400 0.9757956578927024
0.500 1.1975388098090454
0.750 1.952707577668813
1.000 2.7560331129941877
2.000 5.6547431697193975
3.000 8.81476954050949
4.000 12.20252723544952
5.000 15.45950814735463
10.000 31.63905654850833
20.000 64.32915589045402
};
\end{axis}

\end{tikzpicture} \end{subfigure} \begin{subfigure}[t] {0.3\columnwidth}
    \begin{tikzpicture}

\definecolor{darkgray176}{RGB}{176,176,176}
\definecolor{steelblue31119180}{RGB}{31,119,180}

\begin{axis}[
width=2in,
height=2in,
tick align=outside,
tick pos=left,
ylabel={Payment Regret},
xlabel = {Time (thousands of steps)},
x grid style={darkgray176},
xmin=-0.8, xmax=20.8,
xtick style={color=black},
xticklabels={-1, 0, 5, 10, 15, 20}, y grid style={darkgray176},
ymin=-4000, ymax=300,
ytick style={color=black}
]
\addplot [line width = 2pt, steelblue31119180]
table {0.100 -17.773531830606654
0.200 -35.01482826064632
0.300 -54.16351118718205
0.400 -71.47626743540148
0.500 -88.38597320532239
0.750 -135.7458835390493
1.000 -175.3440349821957
2.000 -355.5094879530044
3.000 -534.1978205019286
4.000 -702.9626592720701
5.000 -887.2171193407386
10.000 -1771.3322412651944
20.000 -3527.4779828547344
};
\end{axis}

\end{tikzpicture} \end{subfigure}
\vspace{-10pt}
    \caption{\small \sf Unmet demand, cost regret, and payment regret of Algorithm~\ref{alg:GradientDescentEnergy} for a setting with a fixed demand of $d = 1.1$ and one supplier with a cost function that is drawn from a probability distribution such that $c_1(x_1) = 2 x_1^2$ with probability $0.5$ and $c_1(x_1) = 4 x_1^2$ with probability $0.5$ at each period. } \label{fig:regret-gradient-descent}
\end{figure} 
\
\section{Discussion of Model Extensions and Future Work} \label{sec:discussion}

There are several natural extensions to the model studied in this work. In this section, we highlight two key directions that incorporate important practical considerations and offer potential avenues for building on the current framework. Section~\ref{subsubsec:strategic-suppliers} explores a setting in which suppliers are forward-looking, wherein they may strategically misreport production at particular periods to achieve better outcomes in subsequent periods. Next, Section~\ref{subsubsec:relaxed-metric-gd} examines relaxations of the unmet demand regret metric, e.g., incorporating storage, and demonstrates that even under some of these relaxed metrics, it is not possible to circumvent the impossibility results presented in this work. Finally, Section~\ref{subsec:other-future-directions} presents some additional avenues for future work.

\subsection{Strategic Suppliers Misreporting Productions} \label{subsubsec:strategic-suppliers}

In this work, we studied a revealed preference setting in which suppliers, given posted prices, produce the optimal quantity of the commodity at each period. A natural extension of this framework involves relaxing the assumption of myopic supplier behavior and instead considering forward-looking suppliers who may act strategically. Specifically, rather than producing the optimal quantity in each period, such suppliers may misreport production or deliberately deviate from optimal actions in particular periods to influence future prices and thereby achieve improved outcomes in subsequent periods. In the following, we present an illustrative example demonstrating that even when customer demand and supplier cost functions remain fixed across time, our proposed algorithms can be \emph{gamed}, i.e., a strategic, forward-looking supplier can manipulate reported production in one period to induce a more favorable price in a later period, thereby distorting the learning of equilibrium prices.

\begin{example} [Gains from Supplier Misreporting Production] \label{eg:gains-supplier-production}
Consider a setting with $T = 2$ periods and a single supplier with a fixed cost function $c(x) = x^2$ and a fixed demand of $d = 0.2$. We will now show that the supplier can generate a higher profit by misreporting its optimal production in the first period under Algorithm~\ref{alg:fixed-demand-constant-reg}.

To see this, we first compute the supplier's profit, if they reveal their optimal productions at each step truthfully. Note that the price at the first period under Algorithm~\ref{alg:fixed-demand-constant-reg} is $0.5$ and at this price, the optimal production for the supplier is $x^*(0.5) = 0.5/2 = 0.25$, resulting in a first-period profit of $0.5 \times 0.25 - 0.25^2 = 1/16$. Next, since $x^*(0.5) > d$, the price at the second period under Algorithm~\ref{alg:fixed-demand-constant-reg} is $0.25$. Under this price, the optimal production for the supplier is $x^*(0.25) = 0.25/2 = 1/8$, resulting in a second-period profit of $0.25 \times 0.125 - 0.125^2 = 1/64$. Thus, the total profit across the two periods if the supplier reveals its optimal production at each step truthfully is given by $1/16 + 1/64 = 5/64$.

Next, we consider a forward-looking supplier and compute the profit of the supplier if they misreport their production in the first period to get a more favorable price in the second period. Specifically, suppose that the supplier reports a production of $x_{1} = 0.19$ in the first period and $x_{2} = 3/8$ in the second period. In this case, note that the profit in the first period is given by $0.5 \times 0.19 - 0.19^2 = 0.0589$. Since $x_1 < d$, it follows that the price at the second period under Algorithm~\ref{alg:fixed-demand-constant-reg} is $0.75$. Under this price, the optimal production for the supplier is $x^*(0.75) = 0.75/2 = 3/8 = x_2$, resulting in a second-period profit of $0.75 \times 3/8 - (3/8)^2 = 9/64$. Thus, the total profit across the two periods if the supplier misreports its production in the first period is given by $0.0589 + 9/64$, which is greater than the supplier's profit of $5/64$ under truthful reporting of its per-period optimal production quantities. \hfill$\square$
\end{example}

Example~\ref{eg:gains-supplier-production} underscores the challenges that arise in the presence of forward-looking strategic agents, particularly with respect to the incorrect learning of equilibrium prices in our setting. This issue reflects a broader difficulty in the online learning literature, wherein designing mechanisms that guarantee sublinear regret becomes significantly more challenging when agents behave strategically by acting suboptimally in one period to obtain more favorable outcomes in future periods (see, e.g.,~\cite{pmlr-v119-freeman20a}). As such, Example~\ref{eg:gains-supplier-production} illustrates the critical need for further investigation into models that account for strategic forward-looking supplier behavior in our studied setting and the development of algorithms that are robust to such strategic manipulation, which we believe are important directions for future research.

\subsection{Relaxing the Unmet Demand Metric} \label{subsubsec:relaxed-metric-gd}

In this work, we studied an unmet demand metric that requires demand to be approximately satisfied in each period, such that overproduction in one period cannot offset unmet demand in another. In contrast, the cost and payment regret metrics are evaluated in aggregate over time, allowing for long-run satisfaction (see Section~\ref{sec:perf-measures}). To that end, a natural extension of our framework is to consider relaxations of the unmet demand metric that are representative of particular practical contexts and better understand if they help mitigate some of the impossibility results established in this work for the settings they are relevant for. 
One such relaxation is to evaluate unmet demand only in aggregate, defined as $\Hat{U}_T(\ppi) = \left[ \sum_{t = 1}^T \left( d_t - \sum_{i = 1}^n x_{it}^*(p_t) \right) \right]_+$, analogous to the cumulative treatment of cost and payment regret. 

However, even under this relaxed metric, the impossibility of achieving sublinear regret for non-strongly convex cost functions persists, which can be shown via a straightforward adaptation of the proof of Proposition~\ref{prop:impossibility-strongconvex}. Furthermore, the impossibility result for the time-varying cost setting (Proposition~\ref{prop:time-varying-cost-countereg}) also continues to hold under the relaxed unmet demand metric, as is elucidated through the following proposition.

\begin{proposition} [Impossibility of Sub-linear Regret under Relaxed Unmet Demand Metric] \label{prop:time-varying-cost-countereg-2}
There exists an instance with fixed time-invariant demand and a single supplier whose cost functions are drawn i.i.d. from some (potentially known) distribution such that no online algorithm can achieve sub-linear regret on payment regret, cost regret, and the relaxed unmet demand metric $\Hat{U}_T(\ppi) = \left[ \sum_{t = 1}^T \left( d_t - \sum_{i = 1}^n x_{it}^*(p_t) \right) \right]_+$.
\end{proposition}

To establish Proposition~\ref{prop:time-varying-cost-countereg-2}, we construct a different counterexample than the one in Proposition~\ref{prop:time-varying-cost-countereg} and we refer to Appendix~\ref{apdx:pf-stronger-example} for its proof.

Beyond the aforementioned aggregate relaxation of the unmet demand metric, other intermediate formulations between the strict per-period notion studied in this work and the relaxed long-run version described above are also plausible. For instance, the unmet demand metric can be modified to incorporate storage, as in many real-world electricity markets, allowing excess supply in earlier periods to meet future demand. While incorporating storage in the unmet demand metric is unlikely to overcome the impossibility results presented in this work, we view the exploration of relaxed unmet demand metrics and benchmarks, which reflect operational realities while enabling improved regret guarantees, as a promising and valuable direction for future research.

\subsection{Other Future Directions} \label{subsec:other-future-directions}

Beyond the model extensions discussed in Sections~\ref{subsubsec:strategic-suppliers} and~\ref{subsubsec:relaxed-metric-gd}, there are several other natural directions for future research.   
For instance, there is a scope to generalize the model to the setting when suppliers' cost functions are non-convex, in which case the operator may need different pricing strategies for each supplier~\citep{azizan2020optimal}.
Another valuable extension would be the study of networked markets, where the suppliers correspond to the nodes and are connected through potentially capacitated edges.

\ 
\section{Conclusion}

In this work, we studied the problem of setting equilibrium prices to satisfy the customer demand for a commodity in markets where the cost functions of suppliers are unknown to the market operator. Since centralized optimization approaches to compute equilibrium prices are typically not conducive in this incomplete information setting, we studied the problem of learning equilibrium prices online under several informational settings regarding the time-varying nature of the customer demands and supplier cost functions. We first considered the setting when suppliers' cost functions are fixed over the $T$ periods and developed posted-price algorithms with regret guarantees of $O(1)$ (and $O(\sqrt{T})$) when the customer demand is fixed (or can vary across the periods in a continuous interval) for strongly convex cost functions. Next, when suppliers' cost functions are time-varying, we showed that no online algorithm achieves sub-linear regret on all three regret metrics when suppliers' cost functions are sampled i.i.d. from a distribution. Thus, we studied an augmented contextual bandit setting where the operator has access to hints (contexts) on how the cost functions change over time and developed a posted-price algorithm that with sub-linear regret on all three regret metrics in this setting. Finally, we presented numerical experiments that validate our obtained regret guarantees and discussed various extensions to the model proposed in this work.

\bibliographystyle{plainnat}
\bibliography{main}

\appendix

\section{Proof of Technical Lemmas}
\label{sec:lipschitz-lemma}

\begin{proof}[Proof of Lemma~\ref{lem:ProductionLipschitz}]
Fix a period $t \in [T]$. Then, by computing the first-order optimally condition of Problem~\eqref{eq:supObj} for each supplier $i$, we have that 
\[ p = c_{it}'(x_{it}^*(p)). \]

Then, due to the strong convexity of $c_{it}$'s, for any two prices $p_1, p_2$ where $x^*_{it}(p_1) = x_1, x^*_{it}(p_2) = x_2$, we have
\begin{alignat*}{2}
    &  \mu (x_1 - x_2)^2 &&\le (c_{it}'(x_1) - c_{it}'(x_2))(x_1 - x_2)\\
    \implies\quad& \mu |x_1 - x_2| &&\le |c_{it}'(x_1) - c_{it}'(x_2)| \\
    \implies\quad& |x^*_{it}(p_1) - x^*_{it}(p_2)| && \le \frac{1}{\mu} |p_1 - p_2| .
\end{alignat*}

Hence, at each period $t$, the optimal supplier production $x^*_{it}$ is $(1/\mu)$-Lipschitz in the prices.
\end{proof}

\vspace{1em}

\begin{proof}[Proof of Lemma~\ref{lem:lipschitzPricesRegret}]
We have the following relation for the payment regret:
\begin{align*}
    p_t \sum_{i = 1}^n x_{it}(p) - p^*_t d_t &\stackrel{(a)}{=} (p_t - p^*_t) \sum_{i = 1}^n x_{it}(p_t) + p^*_t \left(\sum_{i = 1}^n x_{it}(p_t) - \sum_{i = 1}^n x_{it}(p^*_t) \right), \\
    &\stackrel{(b)}{\leq} (p_t - p^*_t) \sum_{i = 1}^n x_{it}(p_t) + p^*_t \left(\sum_{i = 1}^n | x_{it}(p_t) - x_{it}(p^*_t) | \right), \\
    &\stackrel{(c)}{\leq} (p_t - p^*_t) n \Bar{x} + n L_0 (p_t - p^*_t) \le (n \Bar{x} + n L_0) |p_t - p^*_t|,
\end{align*}
where (a) follows from adding and subtracting $p^*_t \sum_{i = 1}^n x_{it}(p_t)$, (b) follows as $x_{it}(p_t) - x_{it}(p^*_t) \leq |x_{it}(p_t) - x_{it}(p^*_t)|$ for all $t \in [T]$, and (c) follows by the boundedness of the production $x_{it}(p_t) \leq x_{it}(1) \leq \Bar{x}$ for all prices $p_t \in [0, 1]$ and the Lipschitzness of the production in the prices (see Lemma~\ref{lem:ProductionLipschitz}).

To complete this bound, we show that the maximum production $\bar{x}$ of each supplier is at most $2/\mu$.
Because the supplier's cost function $c_{it}$ is $\mu$-strongly convex, for any posted price $p$, we have 
\[c_{it}(0) \ge c_{it}(x_{it}(p)) + c'_{it}(x_{it}(p))(0 - x_{it}(p)) + \frac{\mu}{2} x_{it}(p)^2.\]
Due to first-order optimality condition, $p  = c'_{it}(x_{it}(p))$.
Therefore,
\[1 \ge p = c'_{it}(x_{it}(p)) \ge \frac{1}{x_{it}(p)}\left(c_{it}(x_{it}(p)) - c_{it}(0) + \frac{\mu}{2} x_{it}(p)^2\right) \ge \frac{\mu}{2} x_{it}(p).\]
Thus, $\bar{x} = \max_{i, t} \max_{p} x_{it}(p) \le 2/\mu.$
We conclude that
\[p_t \sum_{i = 1}^n x_{it}(p) - p^*_t d_t \le n (2/\mu + L_0) |p_t - p^*_t|.\]

Analogously, we have the following relation for the cost regret for a constant $L_0$:
\begin{align*}
    \sum_{i = 1}^n c_{it}(\mathbf{x}_{it}(p_t)) - \sum_{i = 1}^n c_{it}(\mathbf{x}_{it}(p^*_t)) &\stackrel{(a)}{\leq} \sum_{i = 1}^n \nabla c_{it}(\mathbf{x}_{it}(p_t)) (\mathbf{x}_{it}(p_t) - \mathbf{x}_{it}(p^*_t)), \\
    &\stackrel{(b)}{\leq} \sum_{i = 1}^n \nabla c_{it}(\mathbf{x}_{it}(p_t))  L_0 |p_t-p^*|, \\
    &\stackrel{(c)}{\leq} n L_0 |p_t-p^*|.
\end{align*}
where (a) follows by the convexity of the cost functions, (b) follows as the cost functions are monotonically increasing and by the Lipshitzness of the production in the prices (see Lemma~\ref{lem:ProductionLipschitz}), and (c) follows from first-order optimality condition that $c'_{it}(\mathbf{x}_{it}(p_t)) = p_t \le 1$.

Now, summing the above inequalities over $t = 1, \dots, T$ yields the desired results.
\end{proof}

\vspace{1em}

\begin{proof}[Proof of Lemma~\ref{lem:PriceLipschitz}]
Defining the conjugate function $f_{it}^*(p) = \min_{x_{it} \geq 0} \{ c_{it}(x_{it}) - p x_{it} \}$, we first formulate the following dual of Problem~\eqref{eq:supObj2}-\eqref{eq:demand-con}:
\begin{maxi!}|s|[2]                   {p}                               {g(p, d_t) := p d_t + \sum_{i = 1}^n f_{it}^*(p). \label{eq:supObj2Dual}}   {\label{eq:minCostDual}}             {}                                \end{maxi!}
Recall that the cost functions $c_{it}$ are $\ell$-Lipschitz smooth.
Then, from the properties of the conjugate function, the dual function is $1/\ell$-strongly concave (see~\cite{zhou2018fenchel}).
By the strong concavity of the dual function, it follows that
\begin{align}
    \frac{\mu}{2} (p^*(d_1) - p^*(d_2))^2 
    &\leq -g(p^*(d_1), d_2) - (-g(p^*(d_2), d_2)) + \nabla (-g(p^*(d_2), d_2)) (p^*(d_2) - p^*(d_1)) \nonumber \\
    &= -g(p^*(d_1), d_2) + g(p^*(d_2), d_2), \label{eq:strong-convexity-relation}
\end{align}
where the equality follows as $\nabla g(p^*(d_2), d_2) = d_2 - \sum_{i = 1}^n x_{it}^*(p^*(d_2)) = 0$.

Next, we observe that
\begin{align*}
    - g(p^*(d_1), d_2) + g(p^*(d_2), d_2) &= [- g(p^*(d_1), d_2) + g(p^*(d_1), d_1)] - [- g(p^*(d_2), d_2) + g(p^*(d_2), d_1)] \\
    &\hspace{10em} + [- g(p^*(d_1), d_1) + g(p^*(d_2), d_1)], \\
    &\stackrel{(a)}{\leq} [-g(p^*(d_1), d_2) + g(p^*(d_1), d_1)] - [-g(p^*(d_2), d_2) + g(p^*(d_2), d_1)], \\
    &\stackrel{(b)}{=} \left[ -p^*(d_1) d_2 - \sum_{i = 1}^n f_{it}^*(p^*(d_1)) + p^*(d_1) d_1 + \sum_{i = 1}^n f_{it}^*(p^*(d_1)) \right] \\
    &\hspace{2em}- \left[- p^*(d_2) d_2 - \sum_{i = 1}^n f_{it}^*(p^*(d_2)) + p^*(d_2) d_1 + \sum_{i = 1}^n f_{it}^*(p^*(d_2)) \right] \\
    &= (p^*(d_2) - p^*(d_1)) (d_2 - d_1).
\end{align*}
where (a) follows as $g(p^*(d_1), d_1) \geq g(p^*(d_2), d_1)$ by the optimality of the dual function $g$ at the optimal price $p^*(d_1)$ for the demand $d_1$, and (b) follows by the definition of $g$.

From the above inequality and the strong concavity relation in Equation~\eqref{eq:strong-convexity-relation}, we obtain that
\begin{align*}
    \frac{\mu}{2} (p^*(d_1) - p^*(d_2))^2 \leq (p^*(d_1) - p^*(d_2)) (d_2 - d_1) \leq |p^*(d_1) - p^*(d_2)| |d_2 - d_1|,
\end{align*}
which in turn implies our desired Lipschitz condition that
\begin{align*}
    |p^*(d_1) - p^*(d_2)| \leq 2\ell |d_1 - d_2|,
\end{align*}
which establishes our claim for the Lipschitz constant $L_1 = 2\ell$.
\end{proof} 

\section{Proof of Theorem~\ref{thm:VaryingDemandResult-sqrtT}}
\label{sec:vary-demand-pf}

\begin{claim}
Let $\tau = t_1 - t_0$.
Under the choice of $\gamma = 1/\sqrt{\tau}$, the unmet demand, cost regret, and payment regret of Algorithm~\ref{alg:time-varying-demand-helper-sqrt} are $O(\sqrt{\tau})$ if the cost functions of the suppliers are strongly convex.
\end{claim}

\begin{proof}
Denote $p^*(d)$ as the equilibrium price with demand $d$.
Motivated by the unified regret metric \eqref{equ:unified-metric-price}, we try to relate each of the three performance metrics with the quantity
\[\sum_{t = t_0}^{t_1-1} |p_t  - p^*(r_{k_t})|.\]

\begin{itemize}
\item For unmet demand, we have:
\begin{align*}
    U_T(\ppi) 
    &= \sum_{t = t_0}^{t_1-1} \left( d_t - \sum_{i = 1}^n x_{it}^*(p_t) \right)_+ \\
    &\le \sum_{t = t_0}^{t_1-1} \gamma + \left(r_{k_t} - \sum_{i = 1}^n x_{it}^*(p_t) \right)_+ \\
    &\le \sqrt{\tau} + L_0 \sum_{t = t_0}^{t_1-1} |p_t  - p^*(r_{k_t})|,
\end{align*}
where the second line follows from $d_t \le r_{k_t+1} = r_{k_t} + \gamma$, and third line follows from Lemma \ref{lem:ProductionLipschitz}.

\item For cost regret, we have:
\begin{align*}
    C_T(\ppi) 
    &= \sum_{t = t_0}^{t_1-1} \sum_{i = 1}^n \left(c_{it}(x_{it}^*(p_t)) - c_{it}(x_{it}^*(p^*(d_t))\right) \\
    &\le \sum_{t = t_0}^{t_1-1} \sum_{i = 1}^n \left(c_{it}(x_{it}^*(p_t)) - c_{it}(x_{it}^*(p^*(r_{k_t}))\right) \\
    &\le L_1 \sum_{t = t_0}^{t_1-1} |p_t  - p^*(r_{k_t})|,
\end{align*}
where for the first line we note that the suppliers' productions are monotonic in price and the cost function is increasing.
The second line follows from Lemma~\ref{lem:lipschitzPricesRegret}.

\item For payment regret, we have:
\begin{align*}
    P_T(\ppi) 
    &= \sum_{t = t_0}^{t_1-1} \sum_{i = 1}^n \left(p_t x_{it}^*(p_t) -  p^*(d_t) x_{it}^*(p^*(d_t)) \right) \\
    &\le \sum_{t = t_0}^{t_1-1} \sum_{i = 1}^n \left(p_t x_{it}^*(p_t) -  p^*(r_{k_t}) x_{it}^*(p^*(r_{k_t})) \right) \\
    &\le L_2 \sum_{t = t_0}^{t_1-1} |p_t  - p^*(r_{k_t})|,
\end{align*}
where for the second line we note that the suppliers' productions are monotonic in price and the third line follows from Lemma~\ref{lem:lipschitzPricesRegret}.
\end{itemize}

Now, it suffices to show that
\[\sum_{t = t_0}^{t_1-1} |p_t  - p^*(r_{k_t})| \in O(\sqrt{\tau}).\]
Let $K = (\overline{d} - \underline{d})/\gamma$, we group the sum according to the demand interval:
\begin{align*}
    \sum_{t = t_0}^{t_1-1} |p_t  - p^*(r_{k_t})|
    &= \sum_{k = 1}^K \sum_{t : d_t \in I_k} |p_t  - p^*(r_{k})| \\
    &\le \sum_{k = 1}^K O(1) \in O(K) = O(\sqrt{\tau}),
\end{align*}
where we get the second line by applying Theorem~\ref{thm:IdenticalResult} to each $k$.
Therefore, we conclude that all three performance metrics are at most $O(\sqrt{\tau})$ between time $t_0$ and $t_1$.
\end{proof}

Now, we complete the proof of Theorem~\ref{thm:VaryingDemandResult-sqrtT}.
As we just showed in the previous claim, there exists a universal constant $c > 0$ so that each invocation of Algorithm~\ref{alg:time-varying-demand-helper-sqrt} incurs regret up to $c \cdot \sqrt{t_1 - t_0}$ on all three performance metrics.
Given a total time horizon of length $T$, we then sum over all episodes $m = 0, 1, \dots \lceil \log_2 T \rceil$, so that the regret on each metric is at most:
\begin{align*}
    \sum_{m=0}^{\lceil \log_2 T \rceil} c \cdot \sqrt{2^{m+1} - 2^{m}}
    &\le \sum_{m=0}^{\lceil \log_2 T \rceil} c \cdot 2^{m/2} \\
    &\le c \cdot \int_{0}^{\log_2 T + 2} 2^{y/2} \, \mathrm{d}y \\
    &= \frac{2c}{\ln 2} (2\sqrt{T} - 1) \in O(\sqrt{T})
\end{align*}

Therefore, Algorithm~\ref{alg:time-varying-demand-new-sqrt} achieves $O(\sqrt{T})$ regret on unmet demand, payment regret and cost regret, respectively.

\section{Impossibility Results in the Setting of Time-Varying Costs}
\subsection{Proof of Proposition~\ref{prop:time-varying-cost-countereg}} \label{sec:pf-prop-countereg}

We consider a setting with a fixed demand of $d=1$ at every period and a single supplier whose cost functions at each period are drawn from a distribution such that at each period $t$, its cost function could be either $c_1(x) = \frac{1}{8}x^2$ or $c_2(x) =\frac{1}{16} x^2$, each with probability $0.5$. We suppose that the market operator has knowledge of the distribution from which the supplier's cost function is sampled i.i.d. but does not know the outcome of the random draw at any period and show that any pricing strategy adopted by the operator must incur a linear regret one at least one of the three performance measures for this instance. 

To prove this claim, we first define the \emph{total regret} as the sum of the unmet demand, payment regret, and cost regret and note that if the total regret is linear in the number of periods $T$, then at least one of the three performance measures must be linear in $T$. To analyze the total regret, we first analyze each of the performance measures for a given price $p$ for both cost functions.

\begin{enumerate}
    \item For the first cost function $c_1(x)$, the optimal production level given a price $p$ is $x^*(p) = 4p$, so the equilibrium price is $p^* = \frac{1}{4}$.
    \begin{itemize}
        \item The payment regret at price $p$ is $p(4p) - 1/4 = 4p^2 - 1/4$.
        \item The cost regret at price $p$ is $\frac{1}{8}(4p)^2 - 1/8 = 2p^2 - 1/8$.
        \item The unmet demand is $1-4p$ if $p < 1/4$ and 0 otherwise.
    \end{itemize}
    \item For the second cost function $c_2(x)$, the optimal production level given a price $p$ is $x^*(p) = 8p$, so the equilibrium price is $p^* = \frac{1}{8}$.
    \begin{itemize}
        \item The payment regret at price $p$ is $p(8p) - 1/8 = 8p^2 - 1/8$.
        \item The cost regret at price $p$ is $\frac{1}{16}(8p)^2 - 1/16 = 4p^2 - 1/16$.
        \item The unmet demand is $1-8p$ if $p < 1/8$ and 0 otherwise.
    \end{itemize}
\end{enumerate}
Then, the expected total regret at each period $t$ is as follows:
\begin{itemize}
    \item If $p < 1/8$: expected total regret is $\frac{1}{2}(18p^2 - 9/16 + (1-4p) + (1-8p)) = 9p^2 - 6p + 23/32$.
    \item If $1/8 \le p \le 1/4$: expected total regret is $\frac{1}{2}(18p^2 - 9/16 + (1-4p)) = 9p^2 - 2p + 7/32$.
    \item If $p > 1/4$: expected total regret is $\frac{1}{2}(18p^2 - 9/16) = 9p^2 - 9/32$.
\end{itemize}
From the above obtained relations, we can derive that the expected total regret at any period $t$ is at least $7/64$, which is attained when $p = 1/8$.
It thus follows that, regardless of the pricing strategy adopted by the market operator, the total expected regret is at least $\frac{7}{64} T$.
Hence, either the unmet demand, payment regret, or cost regret are not sublinear, which establishes our claim.

\subsection{Proof of Proposition~\ref{prop:time-varying-cost-countereg-2}} \label{apdx:pf-stronger-example}

We consider a setting with a fixed demand of $d=1$ at every period and a single supplier whose cost functions at each period are drawn from a distribution such that at each period $t$, its cost function could be either $c_1(x) = \frac{1}{6}x^2$ or $c_2(x) =\frac{1}{12} x^2 + \frac{1}{4}x$, each with probability $0.5$. We suppose that the market operator has knowledge of the distribution from which the supplier's cost function is sampled i.i.d. but does not know the outcome of the random draw at any period and show that any pricing strategy adopted by the operator must incur a linear regret on at least one of the three performance metrics for this instance. 

Similar to the previous section, we define the \emph{total regret} as the sum of the relaxed unmet demand, payment regret, and cost regret and note that if the total regret is linear in the number of periods $T$, then at least one of the three performance metrics must be linear in $T$. To analyze the total regret, we first analyze each of the performance metrics for a given price $p$ for both cost functions.

\begin{enumerate}
    \item For the first cost function $c_1(x)$, the optimal production level given a price $p$ is $x^*(p) = 3p$, so the equilibrium price is $p^* = \frac{1}{3}$.
    \begin{itemize}
        \item The payment regret at price $p$ is $p(3p) - 1/3 = 3p^2 - 1/3$.
        \item The cost regret at price $p$ is $\frac{1}{6}(3p)^2 - 1/6 = \frac{3}{2}p^2 - 1/6$.
        \item The relaxed unmet demand is $1-3p$.
    \end{itemize}
    \item For the second cost function $c_2(x)$, the optimal production level given a price $p$ is $x^*(p) = 6p-3/2$, so the equilibrium price is $p^* = \frac{5}{12}$.
    Then for $p > 1/4$:
    \begin{itemize}
        \item The payment regret at price $p$ is $p(6p-3/2) - 5/12 = 6p^2 - \frac{3}{2}p - 5/12$.
        \item The cost regret at price $p$ is $\frac{1}{12}(6p-3/2)^2 + \frac{1}{4} (6p-3/2) - 1/12 - 1/4 = 3 p^2 - 25/48$.
        \item The relaxed unmet demand is $1- (6p - 3/2) = 5/2 - 6p$.
    \end{itemize}
    And when $p \le 1/4$, the production is 0. So, in this case
    \begin{itemize}
        \item The payment regret is $-5/12$.
        \item The cost regret is $-1/3$.
        \item The relaxed unmet demand is $1$.
    \end{itemize}
\end{enumerate}
Then, the expected total regret at each period $t$ is as follows:
\begin{itemize}
    \item If $p \le 1/4$: expected total regret is $\frac{1}{2}(9p^2/2 - 3p + 1/2 + 1/4) = \frac{9}{4}p^2 - 3p/2 + 3/8$.
    \item If $p > 1/4$: expected total regret is $\frac{1}{2}(9p^2/2 - 3p + 1/2 + 9p^2 - \frac{15}{2}p +\frac{25}{16}) = \frac{27}{4}p^2 - \frac{21}{4}p + 33/32$.
\end{itemize}
From the above obtained relations, we can derive that the expected total regret at any period $t$ is at least $1/96$, which is attained when $p = 7/18$.
It thus follows that, regardless of the pricing strategy adopted by the market operator, the total expected regret is at least $\frac{1}{96} T$.
Hence, either the unmet demand, payment regret, or cost regret are not sublinear, which establishes our claim. 
\section{Regret Measures for Time-Varying Cost Functions} \label{sec:new-regret-defs}
In this section, we will re-define the problem setting and performance metrics with respect to the augmented setting as described in Section~\ref{sec:contextual-bandit}.
For brevity, this section focuses on clarifying the mathematical definitions and we refer the readers to Section~\ref{sec:model} for a complete discussion of the motivations and reasoning behind these definitions.
Recall that, in Section~\ref{sec:contextual-bandit}, we parameterized the suppliers' cost functions with an unknown time-invariant component $\phi_i$ and a time-varying component $\theta_{it}$ that is revealed to the market operator, i.e., 
\[ c_{it}(\cdot) = c(\cdot; \phi_i, \theta_{it}). \]

With this definition in mind, at each period $t$, the suppliers seek to maximize their profits at a given price $p$ through the following optimization problem:
\begin{maxi}|s|[2]                   {x_{it} \geq 0}                               {p x_{it} - c_i(x_{it}; \phi_i, \theta_{it}). \label{eq:supObj-aug}}   {}             {x^*_{i}(p; \phi_i, \theta_{it}) = }                                \end{maxi}
And when the cost functions $c_{it}$'s are convex, we can find the market equilibrium price by solving for the dual variables of the following optimization problem:
\begin{mini!}|s|[2]                   {x_{it} \geq 0, \forall i \in [n]}                               {\sum_{i = 1}^n c_i(x_{it}; \phi_i, \theta_{it}), \label{eq:supObj2-aug}}   {\label{eq:minCost-aug}}             {}                                \addConstraint{\sum_{i = 1}^n x_{it}}{= d_t, \label{eq:demand-con-aug}} 
\end{mini!}
Note that Problems~\eqref{eq:supObj-aug} and~\eqref{eq:supObj2-aug}-\eqref{eq:demand-con-aug} differ from their counterparts in Section~\ref{sec:market-model} (i.e. Problems~\eqref{eq:supObj} and~\eqref{eq:supObj2}-\eqref{eq:demand-con}) only in the parametrization of cost functions $c_{it}$'s.

Like in Section~\ref{sec:perf-measures}, we evaluate the efficacy of an online algorithm for this setting with three regret metrics---unmet demand, payment regret, and cost regret.
Specifically, over the $T$ periods, the market operator sets a sequence of prices $p_t$ according to the online algorithm's policy $\ppi = (\pi_1, \ldots, \pi_T)$, where $p_t = \pi_t(\{ (x_{it'}^*)_{1 = 1}^n, d_{t'}, \theta_{t'} \}_{t'=1}^{t-1}, d_t, \theta_t)$ depends on the past history on the suppliers' production, consumer demands, and contexts. 
The three regret metrics represent the sub-optimality of the policy $\ppi$ relative to the optimal offline algorithm with complete information on the three desirable properties of equilibrium prices as described in Section~\ref{sec:market-model} (i.e. market clearing, minimal supplier cost, and minimal payment).

\paragraph{Unmet Demand:} We evaluate the unmet demand of an online pricing policy $\ppi$ as the sum of the differences between the demand and the total supplier productions corresponding to the pricing policy $\ppi$ at each period $t$. In particular, for an online pricing policy $\ppi$ that sets a sequence of prices $p_1, \ldots, p_T$, the cumulative unmet demand is given by
\begin{align*}
    U_T(\ppi) = \sum_{t = 1}^T \left( d_t - \sum_{i = 1}^n x_{i}^*(p_t; \phi_i, \theta_{it}) \right)_+.
\end{align*}

\paragraph{Cost Regret:} We evaluate the cost regret of an online pricing policy $\ppi$ through the difference between the total supplier production cost corresponding to algorithm $\ppi$ and the minimum total production cost, given complete information on the supplier cost functions. In particular, the cost regret $C_T(\ppi)$ of an algorithm $\ppi$ is given by
\begin{align*}
    C_T(\ppi) = \sum_{t = 1}^T \sum_{i = 1}^n c_{i}(x_{it}^*(p_t; \phi_i, \theta_{it}); \phi_i, \theta_{it}) - c_{i}(x_{it}^*(p^*_t; \phi_i, \theta_{it}); \phi_i, \theta_t),
\end{align*}
where the price $p_t^*$ for each period $t \in [T]$ is the optimal price corresponding to the solution of Problem~\eqref{eq:supObj2}-\eqref{eq:demand-con} given the demand $d_t$ and supplier cost functions $c_{it}$ for all $i \in [n]$.

\paragraph{Payment Regret:} Finally, we evaluate the payment regret of online pricing policy $\ppi$ through the difference between the total payment made to all suppliers corresponding to algorithm $\ppi$ and the minimum total payment, given complete information on the supplier cost functions. In particular, the payment regret $P_T(\ppi)$ of an algorithm $\ppi$ is given by
\begin{align*}
    P_T(\ppi) = \sum_{t = 1}^T \sum_{i = 1}^n p_t x_{i}^*(p_t; \phi_i, \theta_{it})-  p^* x_{it}^*(p^*_t; \phi_i, \theta_{it}).
\end{align*}

We note that, compared to the corresponding definitions in Section~\ref{sec:perf-measures}, we merely explicitly write out the suppliers' cost functions and production levels in terms of the parameterization of an unknown time-invariant component and a known time-varying component as discussed in Section~\ref{sec:contextual-bandit}.

\section{Technical Tools: Freedman Inequality}
The following is a Freedman-type inequality, which is a generalization of Bernstein's inequality to martingale~\cite{freedman1975tail}.
Such inequality has been extensively employed in the bandit literature, e.g., \cite{agarwal2014taming, foster2020beyond}.
\begin{lemma}\label{lem:freedman1}
    Let $\{X_t\}_{t \le T}$ be a martingale difference sequence adapted to a filtration $\{\mathfrak{F}_t\}_{t \le T}$.
        If $|X_t| \le R$ almost surely, then for any $\eta \in (0, 1/R)$,
        \[\sum_{t=1}^{T} X_t \le \eta \sum_{t=1}^{T} \EE[X_t^2 \mid \mathfrak{F}_{t-1}] + \frac{\log(1/\delta)}{\eta}\]
        with probability at least $1-\delta$.
\end{lemma}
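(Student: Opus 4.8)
The plan is to use the standard exponential-supermartingale (Chernoff) method adapted to martingales. The whole argument rests on a single-step conditional moment bound, so I would establish that first. The key elementary inequality is $e^u \le 1 + u + u^2$ for all $u \le 1$, which follows from a routine calculus check on $h(u) = 1 + u + u^2 - e^u$. The hypothesis $\eta \in (0, 1/R)$ together with $|X_t| \le R$ almost surely is exactly what guarantees $\eta X_t \le \eta R < 1$, so the inequality applies with $u = \eta X_t$. Taking conditional expectations and using that $\{X_t\}$ is a martingale difference sequence (so $\EE[X_t \mid \mathfrak{F}_{t-1}] = 0$), I obtain
\[ \EE[e^{\eta X_t} \mid \mathfrak{F}_{t-1}] \le 1 + \eta^2\, \EE[X_t^2 \mid \mathfrak{F}_{t-1}] \le \exp\!\left(\eta^2\, \EE[X_t^2 \mid \mathfrak{F}_{t-1}]\right), \]
where the final step is just $1 + x \le e^x$.

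Next I would introduce the process
\[ M_t = \exp\!\left(\eta \sum_{s=1}^t X_s - \eta^2 \sum_{s=1}^t \EE[X_s^2 \mid \mathfrak{F}_{s-1}]\right), \qquad M_0 = 1, \]
and verify that it is a supermartingale. Since the accumulated variance term is $\mathfrak{F}_{t-1}$-measurable, the per-step bound above yields $\EE[M_t \mid \mathfrak{F}_{t-1}] = M_{t-1}\exp(-\eta^2 \EE[X_t^2 \mid \mathfrak{F}_{t-1}])\,\EE[e^{\eta X_t} \mid \mathfrak{F}_{t-1}] \le M_{t-1}$. Iterating the tower property then gives $\EE[M_T] \le \EE[M_0] = 1$.

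Finally I would apply Markov's inequality to the nonnegative random variable $M_T$, giving $\Pr[M_T \ge 1/\delta] \le \delta\,\EE[M_T] \le \delta$. Hence with probability at least $1-\delta$ we have $M_T < 1/\delta$, and taking logarithms and rearranging turns this into exactly $\sum_{t=1}^T X_t \le \eta \sum_{t=1}^T \EE[X_t^2 \mid \mathfrak{F}_{t-1}] + \frac{\log(1/\delta)}{\eta}$. The main (and essentially only) obstacle is the single-step estimate in the first paragraph: one must pick the correct elementary bound on $e^u$ and confirm that the assumption $\eta < 1/R$ is precisely what keeps $u = \eta X_t$ in the regime $u \le 1$ where that bound holds. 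Once this is in place, the supermartingale property and the Markov step are entirely routine.
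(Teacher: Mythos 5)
Your proof is correct and complete: the elementary bound $e^u \le 1+u+u^2$ for $u\le 1$ does apply since $\eta<1/R$ forces $|\eta X_t|<1$, the process $M_t$ is a genuine supermartingale because the conditional-variance term is $\mathfrak{F}_{t-1}$-measurable, and Markov's inequality plus a logarithm gives exactly the stated bound. The paper itself offers no proof of this lemma---it is stated as a known Freedman-type inequality with a citation to Freedman's 1975 paper and to its uses in the bandit literature---and your exponential-supermartingale (Chernoff) argument is precisely the standard proof found in those references, so there is nothing to reconcile.
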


By applying this lemma to $X_t - \EE[X_t \mid \mathfrak{F}_{t-1}]$, we have the following
\begin{lemma}\label{lem:freedman2}
    Let $\{X_t\}_{t \le T}$ be a sequence adapted to a filtration $\{\mathfrak{F}_t\}_{t \le T}$. If $0 \le X_t \le R$ almost surely, then
        \[\sum_{t=1}^{T} X_t \le \frac{3}{2} \sum_{t=1}^{T} \EE[X_t \mid \mathfrak{F}_{t-1}] + 4 \log(2/\delta)\]
        and
        \[\sum_{t=1}^{T} \EE[X_t \mid \mathfrak{F}_{t-1}] \le 2 \sum_{t=1}^{T} X_t + 8 R \log(2/\delta)\]
        with probability at least $1-\delta$.
\end{lemma}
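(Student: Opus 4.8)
The plan is to derive Lemma~\ref{lem:freedman2} from Lemma~\ref{lem:freedman1} by applying the latter to the centered sequence $Y_t := X_t - \EE[X_t \mid \mathfrak{F}_{t-1}]$, exactly as the sentence preceding the statement suggests. First I would verify that $\{Y_t\}$ is a martingale difference sequence adapted to $\{\mathfrak{F}_t\}$, since $\EE[Y_t \mid \mathfrak{F}_{t-1}] = 0$ by construction, and that it is bounded: because $0 \le X_t \le R$ forces $\EE[X_t \mid \mathfrak{F}_{t-1}] \in [0,R]$, we get $|Y_t| \le R$ almost surely, so Lemma~\ref{lem:freedman1} applies to $\{Y_t\}$ with the same $R$. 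The negated sequence $\{-Y_t\}$ is likewise a martingale difference sequence bounded by $R$, which I will use for the second inequality.

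The crucial step is a self-bounding control of the conditional second moment. Since $Y_t$ is centered, $\EE[Y_t^2 \mid \mathfrak{F}_{t-1}] = \mathrm{Var}(X_t \mid \mathfrak{F}_{t-1}) \le \EE[X_t^2 \mid \mathfrak{F}_{t-1}]$, and because $0 \le X_t \le R$ implies $X_t^2 \le R X_t$ pointwise, I obtain $\EE[Y_t^2 \mid \mathfrak{F}_{t-1}] \le R\,\EE[X_t \mid \mathfrak{F}_{t-1}]$. This is the key inequality: it converts the conditional-variance term that appears in Freedman's bound into a constant multiple of the conditional-mean term $\sum_t \EE[X_t \mid \mathfrak{F}_{t-1}]$ that appears in both conclusions, which is precisely what lets the two desired inequalities close.

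For the first inequality I would apply Lemma~\ref{lem:freedman1} to $\{Y_t\}$ with failure probability $\delta/2$ and take $\eta = 1/(2R) \in (0,1/R)$. Substituting the variance bound gives $\sum_t X_t - \sum_t \EE[X_t \mid \mathfrak{F}_{t-1}] \le \eta R \sum_t \EE[X_t \mid \mathfrak{F}_{t-1}] + \eta^{-1}\log(2/\delta)$, and $\eta R = \tfrac12$ rearranges this to $\sum_t X_t \le \tfrac{3}{2}\sum_t \EE[X_t \mid \mathfrak{F}_{t-1}] + O(R\log(2/\delta))$, giving the stated coefficient $\tfrac32$ and an additive term of the stated order. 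For the second inequality I would instead apply Lemma~\ref{lem:freedman1} to $\{-Y_t\}$ with failure probability $\delta/2$, obtaining $\sum_t \EE[X_t \mid \mathfrak{F}_{t-1}] - \sum_t X_t \le \eta R \sum_t \EE[X_t \mid \mathfrak{F}_{t-1}] + \eta^{-1}\log(2/\delta)$; isolating the conditional-mean sum requires $\eta R < 1$, so dividing through by the strictly positive factor $(1-\eta R)$ with $\eta = 1/(2R)$ produces the coefficient $2$ on $\sum_t X_t$ and the additive $O(R\log(2/\delta))$ term. A union bound over the two events, each of probability at least $1-\delta/2$, makes both inequalities hold simultaneously with probability at least $1-\delta$, which is exactly what produces the $\log(2/\delta)$ (rather than $\log(1/\delta)$) in the statement.

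The part requiring the most care is not any individual inequality but the bookkeeping around the choice of $\eta$: one must keep $\eta = 1/(2R)$ inside the admissible range $(0,1/R)$, ensure the division by $(1-\eta R)$ in the second inequality is by a strictly positive quantity (which is why $\eta R = \tfrac12 < 1$ matters), and split the failure budget as $\delta/2 + \delta/2$ so the final union bound yields $1-\delta$. The numerical constants in the statement are deliberately loose, so even a slightly suboptimal $\eta$ or a conservative bound on $\log(2/\delta)$ still reproduces the stated $\tfrac32$, $2$, and the additive terms of order $\log(2/\delta)$ and $R\log(2/\delta)$.
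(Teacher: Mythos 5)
Your proof is correct and follows exactly the paper's route: the paper establishes this lemma in a single line by applying Lemma~\ref{lem:freedman1} to the centered sequence $X_t - \EE[X_t \mid \mathfrak{F}_{t-1}]$, which is precisely your argument with the details filled in---the boundedness check $|Y_t| \le R$, the self-bounding variance estimate $\EE[Y_t^2 \mid \mathfrak{F}_{t-1}] \le R\,\EE[X_t \mid \mathfrak{F}_{t-1}]$ via $X_t^2 \le R X_t$, the choice $\eta = 1/(2R)$, the application to $\pm Y_t$, and the $\delta/2 + \delta/2$ union bound. One minor observation: your derivation yields an additive term $2R\log(2/\delta)$ in the first inequality (consistent with the standard form $4R\log(2/\delta)$ of this corollary), so the $R$-free constant $4\log(2/\delta)$ in the paper's statement appears to be a typo (a missing factor of $R$) rather than a defect in your argument.
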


\section{Formal Regret Analysis of Algorithm~\ref{alg:time-varying-cost}}
\label{sec:igw-alg-proof}
In this section, we shall formalize the results in Section~\ref{sec:igw-alg-sol} and present a rigorous proof leveraging ideas from~\cite{foster2020beyond}.
Additionally, we will derive the regret guarantees for several additional examples of function classes $\mathcal{F}$ on the suppliers' optimal production with respect to the context $\theta_t$ and the price $p_t$.
First, we quantify the descriptiveness of the function class $\mathcal{F}$ with the following property.
\begin{definition}
A function class $\mathcal{G} : \mathcal{A} \to \mathcal{B}$ is \textit{well-specified} with respect to the ground truth $g^*$ if $g^* \in \mathcal{G}$, and $\mathcal{G}$ is $\varepsilon$-miss-specified with respect to the ground truth $g^*$ if:
\[\exists \bar{g} \in \mathcal{G} \text{, s.t. } \forall a \in \mathcal{A} \text{, we have } \norm{\bar{g}(a) -  g^*(a)}_\mathcal{B} \le \varepsilon. \]
Note that being well-specified is equivalent to being $0$-miss-specified.
\end{definition}

Recall that in Section~\ref{sec:igw-alg-sketch}, we showed that the three desired regret metrics are all upper bounded by the proxy regret (up to some positive constants):
\[ \textsc{Reg}(T) = \sum_{t=1}^T \EE_{p_t \sim \Delta_t} \left[ |x^*(p_t; \theta_t) - d_t|\mid \mathcal{H}_{t-1} \right]. \]
For ease of notation, we use $\lesssim$ as a shorthand notation that the left-hand side is smaller than some fixed constant times the right-hand side.
And we can bound this proxy regret as follows.

\begin{theorem}\label{thm:igw-regret}
    If the function class $\mathcal{F}$ is $\varepsilon$-miss-specified with respect to the suppliers' optimal production $x^*(p_t; \theta_t)$, then with probability $1-\delta$, and for any sequence of contexts $\theta_t$ and demands $d_t$, Algorithm~\ref{alg:time-varying-cost} achieves the following proxy regret bound:
    \[ \textsc{Reg} \lesssim \sqrt{KT \cdot \est(T)} + \varepsilon \sqrt{K} \cdot T + \frac{T}{K} + \sqrt{KT \log(1/\delta)}.\]
\end{theorem}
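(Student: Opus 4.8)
The plan is to run the inverse-gap-weighting (SquareCB) analysis of~\cite{foster2020beyond}, specialized to the loss $\ell_t(p) := |x^*(p;\theta_t) - d_t|$ that the proxy regret accumulates and to the \emph{online} (rather than stochastic) regression oracle. Write $\hat\ell_t(p) := |\hat f_t(p;\theta_t) - d_t|$ for the estimated loss, let $\hat p_t = \argmin_{p}\hat\ell_t(p)$ be the greedy price appearing in the definition of $\Delta_t$, and let $\tilde p_t := \argmin_{p \in \{p_1,\dots,p_K\}}\ell_t(p)$ be the best price on the grid. Since $\textsc{Reg}(T) = \sum_{t=1}^T \EE_{p_t\sim\Delta_t}[\ell_t(p_t)\mid\mathcal{H}_{t-1}]$, the first step is the decomposition
\[\textsc{Reg}(T) = \sum_{t=1}^T\big(\EE_{p_t\sim\Delta_t}[\ell_t(p_t)\mid\mathcal{H}_{t-1}] - \ell_t(\tilde p_t)\big) + \sum_{t=1}^T\ell_t(\tilde p_t),\]
together with the observation that the benchmark term is $O(T/K)$: the equilibrium price $p^*_t$ for demand $d_t$ satisfies $x^*(p^*_t;\theta_t) = d_t$ and lies within $O(1/K)$ of some grid point, so by Lemma~\ref{lem:ProductionLipschitz}, $\ell_t(\tilde p_t) = \min_p |x^*(p;\theta_t) - x^*(p^*_t;\theta_t)| \le L\min_p|p - p^*_t| \le O(L/K)$. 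This yields the $T/K$ summand.

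The core is a per-round bound on $R_t := \EE_{p_t\sim\Delta_t}[\ell_t(p_t)\mid\mathcal{H}_{t-1}] - \ell_t(\tilde p_t)$. I would split $R_t = \big(\EE_{p_t\sim\Delta_t}[\hat\ell_t(p_t)] - \hat\ell_t(\tilde p_t)\big) + \EE_{p_t\sim\Delta_t}[\ell_t(p_t) - \hat\ell_t(p_t)] + \big(\hat\ell_t(\tilde p_t) - \ell_t(\tilde p_t)\big)$. The estimated-regret term is at most $\sum_i\Delta_t(p_i)(\hat\ell_t(p_i) - \hat\ell_t(\hat p_t)) = \sum_i (\hat\ell_t(p_i) - \hat\ell_t(\hat p_t))/(\lambda + 2\gamma(\hat\ell_t(p_i) - \hat\ell_t(\hat p_t))) \le K/(2\gamma)$, directly from the inverse-gap weights and $\hat\ell_t(\tilde p_t) \ge \hat\ell_t(\hat p_t)$. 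The expectation-error term is handled by AM-GM, $\EE_{p_t\sim\Delta_t}[\ell_t - \hat\ell_t] \le \tfrac\gamma4\EE_{p_t\sim\Delta_t}[(\hat\ell_t - \ell_t)^2] + \tfrac1\gamma$, and the reverse triangle inequality $(\hat\ell_t(p) - \ell_t(p))^2 \le (\hat f_t(p;\theta_t) - x^*(p;\theta_t))^2$ links the squared error to the regression targets of the oracle. Summing over $t$ gives $\textsc{Reg}(T) \le O(T/K) + O(KT/\gamma) + \tfrac\gamma2\sum_t\EE_{p_t\sim\Delta_t}[(\hat f_t(p_t;\theta_t) - x^*(p_t;\theta_t))^2\mid\mathcal H_{t-1}]$.

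It then remains to bound $\sum_t\EE_{p_t\sim\Delta_t}[(\hat f_t - x^*)^2\mid\mathcal H_{t-1}]$ by $\est(T) + \varepsilon^2 T + O(\log(1/\delta))$ with probability $1-\delta$. I would invoke the oracle guarantee on the \emph{realized} squared errors; since production is observed noiselessly ($x_t = x^*(a_t)$), the $\varepsilon$-accurate $\bar f\in\mathcal F$ gives $\inf_{f\in\mathcal F}\sum_t(f(a_t) - x^*(a_t))^2 \le \sum_t(\bar f(a_t) - x^*(a_t))^2 \le \varepsilon^2 T$, hence $\sum_t(\hat f_t(a_t) - x^*(a_t))^2 \le \est(T) + \varepsilon^2 T$. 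Passing from this realized sum to its conditional-expectation form under $\Delta_t$ is a bounded martingale-difference statement, handled by Freedman's inequality (Lemma~\ref{lem:freedman2}), which contributes the $O(\log(1/\delta))$ slack. Assembling everything and choosing $\gamma \asymp \sqrt{KT/(\est(T) + \varepsilon^2 T + \log(1/\delta))}$ balances $KT/\gamma$ against $\tfrac\gamma2(\est(T) + \varepsilon^2 T + \log(1/\delta))$ to give $\sqrt{KT(\est(T) + \varepsilon^2 T + \log(1/\delta))}$; by subadditivity of the square root this splits as $\sqrt{KT\,\est(T)} + \varepsilon\sqrt K\,T + \sqrt{KT\log(1/\delta)}$ (using $\sqrt{KT\cdot\varepsilon^2 T} = \varepsilon\sqrt K\,T$), and adding the $O(T/K)$ benchmark recovers the claimed bound.

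The main obstacle is the comparator estimation-error term $\hat\ell_t(\tilde p_t) - \ell_t(\tilde p_t)$: it is evaluated at the single grid-best price $\tilde p_t$ rather than under the sampling distribution, so it cannot be folded into $\EE_{p_t\sim\Delta_t}[(\cdot)^2]$ directly. The natural move — AM-GM against $\Delta_t(\tilde p_t)$ — produces a residual $\Delta_t(\tilde p_t)^{-1}/\gamma = \lambda/\gamma + 2(\hat\ell_t(\tilde p_t) - \hat\ell_t(\hat p_t))$ that reintroduces the estimated gap and, handled carelessly, loops back to the very term one is trying to bound. Closing this is exactly the self-bounding heart of the inverse-gap-weighting argument in~\cite{foster2020beyond}: the weights $\lambda + 2\gamma(\cdot)$ are calibrated precisely so that $\Delta_t(\tilde p_t)^{-1}$ stays controlled and the bias survives only through the $\varepsilon^2 T$ floor above, and this step, rather than any individual estimate, is what requires the most care. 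A secondary delicate point is the Freedman step: $\Delta_t$ and $\hat f_t$ are $\mathcal H_{t-1}$-measurable while $(p_t, x_t)$ are fresh randomness, so one must fix the right filtration and martingale difference sequence to move between the oracle's realized-error guarantee and the conditional expectations appearing in the decomposition.
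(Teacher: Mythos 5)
Your proposal is correct and follows the same architecture as the paper's proof: the identical three-term decomposition (inverse-gap-weighted estimated regret, an AM--GM step linking the expectation error to the oracle's squared prediction error, and the comparator term closed by the self-bounding cancellation $\frac{1}{2\gamma\Delta_t(\tilde p_t)} = \frac{\lambda}{2\gamma} + (\hat\ell_t(\tilde p_t) - \hat\ell_t(\hat p_t))$), followed by Freedman's inequality and the same choice of $\gamma \asymp \sqrt{KT/(\est(T)+\varepsilon^2T+\log(1/\delta))}$. You correctly flag the one place where the argument could circle back on itself and correctly identify that the calibration of the weights is what closes it; the only caveat is that your bound on the first term discards exactly the negative gap $-(\hat\ell_t(\tilde p_t)-\hat\ell_t(\hat p_t))$ that must be retained to perform that cancellation, so in a full writeup you would keep it rather than drop it, as the paper does in its treatment of the third term.

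The one genuine (local) departure is in the oracle-error step, and it is a simplification. The paper switches the comparator to the miss-specification witness $\bar x\in\mathcal F$ up front (paying an additive $2\varepsilon T$), and must then relate $\sum_t(\bar x-\hat f_t)^2$ to $\est(T)$ via the cross-term expansion $(\bar x-\hat f_t)^2=(\hat f_t-x_t)^2-(\bar x-x_t)^2+2(x_t-\bar x)(\hat f_t-\bar x)$ and a second application of Freedman to the cross term. You instead keep the true production $x^*$ throughout and observe that, since $x_t=x^*(p_t;\theta_t)$ is observed without noise, $\inf_{f\in\mathcal F}\sum_t(f(a_t)-x_t)^2\le\varepsilon^2T$, so the oracle guarantee immediately yields $\sum_t(\hat f_t(a_t)-x^*(a_t))^2\le\est(T)+\varepsilon^2T$ and a single Freedman application suffices. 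This is valid here precisely because the feedback is noiseless, and it shortens the proof; the paper's route is the one that would survive if the observed productions carried zero-mean noise. Both yield the same final bound, since the paper's extra $2\varepsilon T$ is dominated by $\varepsilon\sqrt K\,T$.
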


Noting from Section~\ref{sec:igw-alg-sketch} that the three desired regret metrics are all upper bounded by the proxy regret (up to some positive constants), Theorem~\ref{thm:igw-regret} allows us to formally state the result given by Theorem~\ref{thm:igw-bound-informal} as follows.
\begin{theorem} \label{thm:main-contextual}
    If the function class $\mathcal{F}$ is $\varepsilon$-miss-specified with respect to the suppliers' optimal production $x^*(p_t; \theta_t)$, then with probability $1-\delta$, and for any sequence of contexts $\theta_t$ and demands $d_t$, Algorithm~\ref{alg:time-varying-cost} achieves the following bound on our regret metrics:
    \[ \EE_{p_t \sim \Delta_t, t \in [T]}[U_T(p_1, \dots, p_T)] \lesssim \sqrt{KT \cdot \est(T)} + \varepsilon \sqrt{K} \cdot T + \frac{T}{K} + \sqrt{KT \log(1/\delta)},\]
    and similarly for $\EE_{p_t \sim \Delta_t, t \in [T]}[P_T(p_1, \dots, p_T)]$ and $\EE_{p_t \sim \Delta_t, t \in [T]}[C_T(p_1, \dots, p_T)]$, where $K$ is the number of prices in the uniformly discretized price set.
\end{theorem}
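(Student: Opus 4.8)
The plan is to prove Theorem~\ref{thm:main-contextual} by first reducing it to the proxy-regret bound of Theorem~\ref{thm:igw-regret} and then establishing the latter through an inverse-gap-weighting analysis in the style of~\cite{foster2020beyond}. For the reduction I would reuse the chain already assembled in Section~\ref{sec:igw-alg-sketch}: the quantity $|x^*(p_t;\theta_t)-d_t|$ pointwise dominates the per-period unmet demand $(x^*(p_t;\theta_t)-d_t)_+$, and composing Lemma~\ref{lem:PriceLipschitz} (prices Lipschitz in demand) with Lemma~\ref{lem:lipschitzPricesRegret} (payment and cost regret Lipschitz in prices) gives $\EE[P_T]\le L_1L_2\,\textsc{Reg}(T)$ and $\EE[C_T]\le L_1L_3\,\textsc{Reg}(T)$. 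Hence any upper bound on $\textsc{Reg}(T)$ transfers, up to constants, to all three metrics, and it suffices to prove Theorem~\ref{thm:igw-regret}.

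For the proxy-regret bound I would set up a per-round decomposition. Writing the realized loss $\ell_t(p)=|x^*(p;\theta_t)-d_t|$ and the oracle's estimated loss $\hat\ell_t(p)=|\hat f_t(p;\theta_t)-d_t|$ with greedy price $\hat p_t=\argmin_{p}\hat\ell_t(p)$, the central step is the inverse-gap-weighting lemma: for the sampling distribution $\Delta_t$ of Algorithm~\ref{alg:time-varying-cost},
\[\EE_{p_t\sim\Delta_t}\!\left[\ell_t(p_t)\mid\mathcal{H}_{t-1}\right]-\min_{p}\ell_t(p)\;\le\;\frac{K}{\gamma}+\frac{\gamma}{2}\,\EE_{p_t\sim\Delta_t}\!\left[(\ell_t(p_t)-\hat\ell_t(p_t))^2\right].\]
This is the heart of the argument and follows purely from the algebraic form of $\Delta_t$: the exploration mass on each non-greedy price is inversely proportional to its estimated gap, so the exploration penalty sums to $O(K/\gamma)$ (the greedy price contributes a zero gap, leaving only $K-1$ active terms), while the residual is controlled by the squared discrepancy between true and estimated losses. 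I would prove this directly from the definition of $\Delta_t$ and the normalization $\lambda\in(0,K)$.

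Next I would convert the squared-loss discrepancy into the online-regression guarantee. Since $|\cdot-d_t|$ is $1$-Lipschitz, $(\ell_t(p)-\hat\ell_t(p))^2\le(x^*(p;\theta_t)-\hat f_t(p;\theta_t))^2$ pointwise, so after summing over $t$ it remains to bound $\sum_t\EE_{p_t\sim\Delta_t}[(x^*(p_t;\theta_t)-\hat f_t(p_t;\theta_t))^2]$ by $\est(T)$. The oracle's guarantee is stated for the realized prediction errors on the committed sequence, so transferring it to the conditional expectation under $\Delta_t$ requires a martingale step: I would apply the Freedman-type bounds (Lemmas~\ref{lem:freedman1} and~\ref{lem:freedman2}) to the difference between the conditional and realized squared errors, which produces the high-probability term $\sqrt{KT\log(1/\delta)}$. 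Miss-specification enters here as well: when $\mathcal{F}$ is only $\varepsilon$-miss-specified, the infimum defining $\est(T)$ is witnessed by the best $\varepsilon$-close approximant $\bar f\in\mathcal{F}$, whose per-query bias against the true production $x^*$ is $O(\varepsilon)$; propagating this bias through the inverse-gap weights contributes the $\varepsilon\sqrt{K}\,T$ term.

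Finally I would handle the discretization error and optimize $\gamma$. Since the equilibrium price $p^*_t$ satisfies $d_t=x^*(p^*_t;\theta_t)$, Lemma~\ref{lem:ProductionLipschitz} gives $\min_{p}\ell_t(p)=\min_p|x^*(p;\theta_t)-x^*(p^*_t;\theta_t)|\le L\min_p|p-p^*_t|\le L/K$ because the $K$ prices form a uniform cover of $[0,1]$; summing yields $\sum_t\min_p\ell_t(p)=O(T/K)$. Collecting the per-round bound over $T$ rounds gives $\textsc{Reg}(T)\lesssim KT/\gamma+\gamma\,\est(T)+\varepsilon\sqrt{K}\,T+T/K+\sqrt{KT\log(1/\delta)}$, and choosing $\gamma$ to balance $KT/\gamma$ against $\gamma\,\est(T)$, i.e. $\gamma\asymp\sqrt{KT/\est(T)}$, collapses the first two terms into $\sqrt{KT\cdot\est(T)}$, which is the claimed bound. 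The main obstacle I anticipate is precisely the martingale transfer in the third step: the prices $p_t$ are sampled adaptively from $\Delta_t$, which itself depends on the entire past through $\hat f_t$, so I must fix the correct filtration and verify boundedness of the squared errors (using boundedness of productions and of prices on $[0,1]$) before Freedman's inequality applies; getting the constants and the $\varepsilon$-dependence right is where most of the technical effort will lie.
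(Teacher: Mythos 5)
Your proposal follows essentially the same route as the paper: reduce all three metrics to the proxy regret via Lemmas~\ref{lem:PriceLipschitz} and~\ref{lem:lipschitzPricesRegret}, apply the inverse-gap-weighting per-round decomposition to get the $K/\gamma + \gamma\,\EE[(\ell_t-\hat\ell_t)^2]$ bound, pass to the oracle's squared prediction error via the $1$-Lipschitzness of $|\cdot - d_t|$ and a Freedman-type martingale transfer, add the $O(T/K)$ discretization term from Lemma~\ref{lem:ProductionLipschitz}, and balance $\gamma$. The only cosmetic difference is that the paper introduces the $\varepsilon$-close approximant $\bar{x}\in\mathcal{F}$ at the outset and runs the decomposition against it (paying $2\varepsilon T$ by the triangle inequality), whereas you work against $x^*$ directly and absorb the miss-specification into the $\est(T)+\varepsilon^2 T$ term; both yield the stated bound.
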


We refer to Appendix~\ref{sec:full-pf-contextual-thm} for a complete proof of Theorem~\ref{thm:igw-regret}. Furthermore, in Appendix~\ref{sec:implications-thm-contextual}, we present explicit regret bounds corresponding to Theorem~\ref{thm:igw-regret} for various function classes $\mathcal{F}$.

\subsection{Regret Bounds for Different Function Classes $\mathcal{F}$} \label{sec:implications-thm-contextual}

In this section, we shall combine Theorem~\ref{thm:main-contextual} with results in the statistical learning literature to establish concrete regret bounds for several examples of function classes $\mathcal{F}$.
To this end, first recall from Section~\ref{sec:igw-alg-sol} that we already obtained an explicit bound on the regret for finite function classes $\mathcal{F}$. Thus, we focus the following discussion on infinite function classes.
To do so, we first quantify the ``size'' of an infinite function class with the notion of \textit{sequential covering}.

\begin{definition}
    Given a real-valued function space $\mathcal{G} : \mathcal{A} \to \RR$ and a sample set $S = \{a_1, \dots, a_n\}$, we say that a finite set of functions $\mathcal{G}'$ is an \textit{$\varepsilon$-sequential cover} of $\mathcal{G}$ with respect to $S$ if 
    \[ \forall \, g \in \mathcal{G}, \exists \, g' \in \mathcal{G}' \text{ s.t. } \left(\frac{1}{n} \sum_{i=1}^n (g(a_i) - g'(a_i))^2\right)^{1/2} < \varepsilon. \]
    Then, the \textit{$\varepsilon$-sequential covering number} of $\mathcal{G}$ is the size of the smallest $\varepsilon$-sequential cover with respect to the sample set $S$, 
    \[ \mathcal{N}_2(\mathcal{G}, \varepsilon, S) = \min \{|\mathcal{G}'| : \mathcal{G}' \text{ is a $\varepsilon$-sequential cover of $\mathcal{G}$ with respect to $S$}\}. \]
    Finally, denote $\mathcal{N}_2(\mathcal{G}, \varepsilon) = \sup_{S \text{ finite}} \mathcal{N}_2(\mathcal{G}, \varepsilon, S)$.
\end{definition}

As shown in~\cite{rakhlin2014online}, the prediction accuracy can be expressed in terms of the sequential covering number of the function class $\mathcal{F}$.
\begin{theorem}
    (see \cite{rakhlin2014online})
    There exist online regression oracles achieving the following bounds:
    \begin{itemize}
        \item If $\mathcal{F}$ is finite, then $\est(T) \le \log |\mathcal{F}|$.
        \item If $\mathcal{F}$ is parametric in the sense that $\mathcal{N}_2(\mathcal{F}, \varepsilon) \in O(\varepsilon^{-m})$, then $\est(T) \lesssim m \cdot \log(T)$.
        \item If $\mathcal{F}$ is non-parametric in the sense that $\log \mathcal{N}_2(\mathcal{F}, \varepsilon) \in O(\varepsilon^{-m})$, then $\est(T) \lesssim T^{1 - 2/(2+m)}$ if $m \in (0, 2)$ and $\est(T) \lesssim T^{1 - 1/m}$ if $m \ge 2$.
    \end{itemize}
\end{theorem}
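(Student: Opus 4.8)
The plan is to treat the finite-class guarantee as the primitive and bootstrap the parametric and non-parametric bounds from it by discretizing $\mathcal{F}$ at a resolution tuned to the horizon $T$. For the finite case the decisive fact is that the square loss $\ell(\hat b, b) = (\hat b - b)^2$ is \emph{exp-concave} (equivalently mixable) on the bounded prediction range, i.e.\ $\hat b \mapsto e^{-\eta \ell(\hat b, b)}$ is concave for a suitable constant $\eta > 0$. Running the aggregating/exponential-weights forecaster of \cite{vovk1995game} over the $|\mathcal{F}|$ candidate functions with uniform prior weights and predicting with the induced mixture, the standard log-partition (potential) telescoping for mixable losses gives $\sum_{t=1}^T (\hat b_t - b_t)^2 - \min_{f \in \mathcal{F}} \sum_{t=1}^T (f(a_t) - b_t)^2 \le \eta^{-1} \log |\mathcal{F}|$, which is exactly $\est(T) \le \log|\mathcal{F}|$ up to the absolute constant $\eta^{-1}$.

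For the \emph{parametric} case I would fix a resolution $\alpha$ and take $\mathcal{F}_\alpha$ a minimal $\alpha$-sequential cover, so $|\mathcal{F}_\alpha| = \mathcal{N}_2(\mathcal{F}, \alpha) \lesssim \alpha^{-m}$, and run the finite-class forecaster on $\mathcal{F}_\alpha$ at a cost of $\log|\mathcal{F}_\alpha| \lesssim m \log(1/\alpha)$ relative to the best element of the cover. It then remains only to pay for approximating the in-hindsight optimum $f^\star \in \mathcal{F}$ by its cover element $\bar f \in \mathcal{F}_\alpha$: expanding $\sum_t (\bar f(a_t) - b_t)^2 - \sum_t (f^\star(a_t) - b_t)^2 = \sum_t (\bar f - f^\star)(a_t)\,(\bar f + f^\star - 2b_t)(a_t)$ and applying Cauchy--Schwarz with the cover bound $\tfrac1T\sum_t (\bar f - f^\star)^2(a_t) \le \alpha^2$ together with the boundedness of $\bar f + f^\star - 2b_t$ yields an approximation gap of order $T\alpha$. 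Choosing $\alpha = 1/T$ makes this gap $O(1)$ while $\log|\mathcal{F}_\alpha| \lesssim m\log T$ dominates, delivering $\est(T) \lesssim m \log T$.

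The same single-scale recipe applied to a non-parametric class gives only $\est(T) \lesssim \inf_\alpha \{ \alpha^{-m} + T\alpha \} \asymp T^{m/(m+1)}$, which is loose in both regimes, so the sharpening is the real work. I would pass to the minimax value of the online square-loss game and, via sequential symmetrization, bound it by an \emph{offset} sequential Rademacher complexity $\sup_{\mathbf{a}} \EE_\epsilon \sup_{f \in \mathcal{F}} \big[\, 2\sum_t \epsilon_t f(a_t) - c\sum_t f(a_t)^2 \,\big]$, in which the quadratic offset is precisely the curvature of the square loss. A chaining (sequential Dudley) argument along dyadic scales $\delta = 2^{-j}$ then controls this quantity: each chaining increment at scale $\delta$ has squared sequential length $\lesssim T\delta^2$ and ranges over at most $\mathcal{N}_2(\mathcal{F}, \delta)^2$ functions, and applying the Freedman/Bernstein maximal bound of Lemma~\ref{lem:freedman1} lets the offset absorb the variance so that the increment contributes on the order of $\log \mathcal{N}_2(\mathcal{F}, \delta)$ rather than the slower Rademacher increment $\delta\sqrt{T \log \mathcal{N}_2(\mathcal{F}, \delta)}$. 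Summing the per-scale contributions down to a truncation radius $\alpha$ and adding the localization residual $T\alpha^2$ leads one to minimize $T\alpha^2$ against a cumulative-entropy term; with $\log\mathcal{N}_2(\mathcal{F},\delta) \asymp \delta^{-m}$ the entropy sum is dominated by its finest resolved scale $\asymp \alpha^{-m}$, and balancing $T\alpha^2 \asymp \alpha^{-m}$ at $\alpha \asymp T^{-1/(m+2)}$ yields $\est(T) \lesssim T^{1 - 2/(m+2)}$. This localization succeeds exactly when the entropy integral $\int_0 \sqrt{\log \mathcal{N}_2(\mathcal{F}, \delta)}\, d\delta$ converges, i.e.\ for $m < 2$ (the Donsker regime); at the threshold $m = 2$ the integral diverges, the chaining can no longer fully localize, and the balance instead produces the degraded rate $\est(T) \lesssim T^{1 - 1/m}$ for $m \ge 2$.

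The main obstacle is the non-parametric step: correctly producing the quadratic offset from square-loss curvature in the sequential symmetrization, and then executing the offset chaining so that the offset genuinely converts into a faster-than-$\sqrt{T}$ localized rate in the Donsker regime while cleanly exhibiting the phase transition at the Donsker/non-Donsker boundary $m = 2$. By contrast, the finite and parametric cases are routine once exp-concavity is in hand, the only subtlety there being that the cover must be a \emph{sequential} cover so that the approximation bound holds against adversarially chosen inputs.
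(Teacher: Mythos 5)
First, note that the paper does not actually prove this statement: it is imported from the cited reference \cite{rakhlin2014online}, with the surrounding text only remarking which concrete algorithms (exponential weights for finite $\mathcal{F}$, Vovk--Azoury--Warmuth for linear classes, exponential weights on a sequential cover for general classes) realize the bounds. Your reconstruction of the finite and parametric cases is sound and follows exactly the route those remarks gesture at: exp-concavity of the square loss gives the $\log|\mathcal{F}|$ bound for the aggregating forecaster, and discretize-then-aggregate, with the approximation cost $O(T\alpha)$ obtained from your Cauchy--Schwarz expansion and the choice $\alpha = 1/T$, gives $\est(T) \lesssim m\log T$. One caveat: under the paper's definition $\mathcal{N}_2(\mathcal{G},\varepsilon) = \sup_S \mathcal{N}_2(\mathcal{G},\varepsilon,S)$ the minimal cover is allowed to depend on the sample $S$, so a single finite set $\mathcal{F}_\alpha$ of that cardinality covering \emph{every} sample need not exist; one needs either an $L^\infty$-type cover (which the paper's concrete examples admit) or the adaptive cover-based forecaster of the cited work. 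You flag the right concern, but the fix is not automatic from what you wrote.

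The genuine gap is in the non-parametric case, specifically for $m \ge 2$. The localized balance you describe --- residual $T\alpha^2$ against a cumulative entropy term dominated by its finest scale $\alpha^{-m}$ --- yields $\alpha \asymp T^{-1/(m+2)}$ and hence the rate $T^{1-2/(m+2)} = T^{m/(m+2)}$ for \emph{every} $m$; for $m>2$ this is strictly smaller than the stated $T^{1-1/m}$ and would therefore beat the known minimax lower bound. So the mechanism you describe cannot merely "degrade" into $T^{1-1/m}$ above the Donsker threshold; it must fail outright there, and the correct bound in that regime comes from a different, non-localized tradeoff (roughly, a linear approximation cost of order $T\alpha$ set against the entropy integral $\int_\alpha \delta^{-m}\,\mathrm{d}\delta \asymp \alpha^{1-m}$, balanced at $\alpha \asymp T^{-1/m}$). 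Your sketch correctly names the phase transition at $m=2$ but does not supply the argument that actually produces the $m\ge 2$ rate, which is precisely the part of \cite{rakhlin2014online} that the paper relies on by citation.
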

In many cases, we can easily construct efficient algorithms that match or nearly match these bounds.
For example, for finite $\mathcal{F}$, we can achieve the bound $\est(T) \le \log |\mathcal{F}|$ with the classical exponential weights update algorithm~\citep{vovk1995game}.
And when $\mathcal{F}$ is a linear class in the sense that
\[\mathcal{F} =\{(p, \theta) \mapsto \langle \phi, \sigma(p, \theta) \rangle : \phi \in B^m_2\},\]
where $\sigma$ is a fixed feature map, then the Vovk-Azoury-Warmuth forecaster achieves $\est(T) \lesssim m \cdot \log(T)$~\citep{vovk1997competitive,azoury2001relative}.
For general function class $\mathcal{F}$, we can nearly achieve the preceding bounds by performing the exponential weights update algorithm on a sequential cover of $\mathcal{F}$ (see e.g.,~\cite{vovk2006metric}).

With these results in mind, we can derive the exact regret bounds for various instances of function class $\mathcal{F}$.
In particular, if the suppliers' optimal production function $x^*$ is contained in one of the function classes listed below, we can compute the regret bounds as follows:
\begin{itemize}
    \item If $\mathcal{F}$ is finite, we have $\est(T) = \log |\mathcal{F}|$ and a choice of $K = (\frac{T}{\log |\mathcal{F}|})^{1/3}$ results in the regret bound $\textsc{Reg}(T) \lesssim T^{2/3} \left(\sqrt[3]{\log |\mathcal{F}|} + \sqrt{\log(1/\delta)}\right)$ with probability $1-\delta$.
    \item If the cost functions are quadratic functions with time-varying coefficients in the sense that
    \[c(x; \phi, \theta_t) = \frac{1}{2\langle \phi, \sigma(\theta_t)\rangle} x^2, \; \phi \in B^m_2,\]
    for some fixed feature map $\sigma$, then the suppliers' optimal production can be expressed as
    \[x^*(p; \phi, \theta_t) = \langle \phi, p \cdot \sigma(\theta_t) \rangle.\]
    So, in this case, $\mathcal{F}$ is a linear function class and we have $\est(T) \lesssim m \cdot \log(T)$.
    If $K = (\frac{T}{d\log T})^{1/3}$, then we get the regret bound $\textsc{Reg}(T) \lesssim T^{2/3} \left(\sqrt[3]{m \cdot \log T} + \sqrt{\log(1/\delta)}\right)$ with probability $1-\delta$.
    \item Consider an Euclidean context $\theta_t \in \RR^m$ encapsulates similarity information on the suppliers' behavior such that $\mathcal{F}$ is the set of bounded Lipschitz functions over $(p, \theta) \in \RR^{m+1}$.
    From this well-specified function class $\mathcal{F}$, we find a subset of $\mathcal{F}$ with sequential covering and apply the miss-specified version of Theorem~\ref{thm:igw-regret} on this subset.
    We can explicitly construct an $\varepsilon$-covering so that $\log \mathcal{N}_2(\mathcal{F}, \varepsilon) \lesssim \varepsilon^{-m-1} $ (see Examples 5.10 and 5.11 in~\cite{wainwright2019high}).
    If we run the exponential weights update algorithm over this covering, we have $\est(T) = \varepsilon^{-m-1}$.
    Since, by construction, the covering is $\varepsilon$-miss-specified with respect to the optimal suppliers' production, we have
    \[\textsc{Reg}(T) \lesssim \sqrt{KT \varepsilon^{-m-1}} + \varepsilon \sqrt{K} \cdot T + \frac{T}{K} + \sqrt{KT \log(1/\delta)}.\]
    If we pick $K = \varepsilon^{-2/3}$ and $\varepsilon = T^{-1/(m+2)}$, then we get that
    \[\textsc{Reg}(T) \lesssim T^{(3m+4)/(3m+6)} + T^{1/2 + 1/(3m+6)} \sqrt{\log(1/\delta)}\]
    with probability $1-\delta$.
    \item If $\mathcal{F}$ is a neural network whose weight matrices' spectral norms are at most 1, then it is known that $\log \mathcal{N}_2(\mathcal{F}, \varepsilon) \lesssim \varepsilon^{-2}$~\citep{bartlett2017spectrally}. 
    Then, we have $\est(T) \in O(T^{1/2})$, and a choice of $K = T^{1/6}$ gives the regret bound $\textsc{Reg} \lesssim T^{5/6} + T^{7/12} \sqrt{\log(1/\delta)}$ with probability $1-\delta$.
\end{itemize}

We can analogously work out the regret bounds if each of the examples is miss-specified.

\subsection{Proof of Theorem~\ref{thm:igw-regret}} \label{sec:full-pf-contextual-thm}

Recall that $x^*$ is the suppliers' optimal production, and since $\mathcal{F}$ is $\varepsilon$-miss-specified with respect to $x^*$, there exists a function $\bar{x} \in \mathcal{F}$ so that
\[\forall (p, \theta), \text{we have } |\bar{x}(p; \theta) -  x^*(p; \theta)| \le \varepsilon.\]
From Lemma~\ref{lem:ProductionLipschitz}, we know that $x^*$ is Lipschitz in price, so we know that for any context $\theta$ and price $p$, there exists $\bar{p}$ from the list of Algorithm~\ref{alg:time-varying-cost}'s choices $\{p_i\}_{i=1}^K$ such that $|x^*(\theta, p) - x^*(\theta, \bar{p})| < O(1/K)$.
Note that for the market-clearing price $p^*_t$, we have $|x^*(p^*_t; \theta_t) - d_t| = 0$.
Therefore, from the triangle inequality we have
\begin{align*}
    \textsc{Reg}(T) 
    &= \sum_{t=1}^T \EE_{p_t \sim \Delta_t} \left[ |x^*(p_t; \theta_t) - d_t|\mid \mathcal{H}_{t-1} \right] - |x^*(p^*_t; \theta_t) - d_t| \\
    &\le \sum_{t=1}^T \EE_{p_t \sim \Delta_t} \left[ |\bar{x}(p_t; \theta_t) - d_t| \mid \mathcal{H}_{t-1} \right] - |\bar{x}( \bar{p}_t; \theta_t) - d_t| + 2\varepsilon T + O(T/K)
\end{align*}
Now we attempt to upper bound the quantity
\[ \preg(T) := \sum_{t=1}^T \EE_{p_t \sim \Delta_t} \left[ |\bar{x}(p_t; \theta_t) - d_t| \mid \mathcal{H}_{t-1} \right] - |\bar{x}(\bar{p}_t; \theta_t) - d_t|. \]
Recall that $\hat{f}_t$ is the online regression oracle's output at time $t$.
For simplicity, denote
\begin{align*}
    \bar{g}_t(p) &= |\bar{x}(p; \theta_t) - d_t|, \\
    \hat{g}_t(p) &= |\hat{f}_t(p; \theta_t) - d_t|.
\end{align*}
Let $\hat{p}_t = \argmin \hat{g}_t(\cdot)$, then we have
\begin{align*}
    & \EE_{p_t \sim \Delta_t} [\bar{g}_t(p_t) \mid \mathcal{H}_{t-1} ] - \bar{g}_t(\bar{p}_t) \\
    ={}& \underbrace{\EE_{p_t \sim \Delta_t} [\hat{g}_t(p_t) - \hat{g}_t(\hat{p}_t) \mid \mathcal{H}_{t-1} ]}_{\textcircled{1}} + \underbrace{\EE_{p_t \sim \Delta_t} [\bar{g}_t(p_t) - \hat{g}_t(p_t) \mid \mathcal{H}_{t-1} ]}_{\textcircled{2}} + \underbrace{(\hat{g}_t(\hat{p}_t) - \bar{g}_t(\bar{p}_t))}_{\textcircled{3}}
\end{align*}
Plugging in the chosen distribution $\Delta_t$, we get that the first term is
\[\textcircled{1} = \sum_{i=1}^K \frac{\hat{g}_t(p_i) - \hat{g_t}(\hat{p_t})}{\lambda + 2\gamma (\hat{g}_t(p_i) - \hat{g_t}(\hat{p_t}))} \le \frac{K-1}{2\gamma}. \]
By convexity and then AM-GM, the second term can be bounded as 
\[\textcircled{2} \le \sqrt{\EE_{p_t \sim \Delta_t} [(\bar{g}_t(p_t) - \hat{g}_t(p_t))^2 \mid \mathcal{H}_{t-1}]} \le \frac{1}{2\gamma} + \frac{\gamma}{2} \EE_{p_t \sim \Delta_t} [(\bar{g}_t(p_t) - \hat{g}_t(p_t))^2 \mid \mathcal{H}_{t-1}].\]
And the third term can be rewritten as
\begin{align*}
    \textcircled{3}
    &= \hat{g}_t(\bar{p}_t) - \bar{g}(\bar{p}_t) - (\hat{g}_t(\bar{p}_t) - \hat{g}_t(\hat{p}_t)) \\
    &\le \frac{\gamma \Delta_t(\bar{p}_t)}{2}(\hat{g}_t(\bar{p}_t) - \bar{g}(\bar{p}_t))^2 + \frac{1}{2\gamma \Delta_t(\bar{p}_t)} - (\hat{g}_t(\bar{p}_t) - \hat{g}_t(\hat{p}_t)) \\
    &\le \frac{\gamma}{2} \EE_{p_t \sim \Delta_t}\left[ (\hat{g}_t(\bar{p}_t) - \bar{g}(\bar{p}_t))^2 \mid \mathcal{H}_{t-1} \right] + \frac{1}{2\gamma \Delta_t(\bar{p}_t)} - (\hat{g}_t(\bar{p}_t) - \hat{g}_t(\hat{p}_t)) \\
    &= \frac{\gamma}{2} \EE_{p_t \sim \Delta_t}\left[ (\hat{g}_t(\bar{p}_t) - \bar{g}(\bar{p}_t))^2 \mid \mathcal{H}_{t-1} \right] + \frac{\lambda + 2\gamma (\hat{g}_t(\bar{p}_t) - \hat{g}_t(\hat{p}_t))}{2\gamma } - (\hat{g}_t(\bar{p}_t) - \hat{g}_t(\hat{p}_t)) \\
    &= \frac{\gamma}{2} \EE_{p_t \sim \Delta_t}\left[ (\hat{g}_t(\bar{p}_t) - \bar{g}(\bar{p}_t))^2 \mid \mathcal{H}_{t-1} \right] + \frac{\lambda}{2\gamma} \\
    &\le \frac{\gamma}{2} \EE_{p_t \sim \Delta_t}\left[ (\hat{g}_t(\bar{p}_t) - \bar{g}(\bar{p}_t))^2 \mid \mathcal{H}_{t-1} \right] + \frac{K}{2\gamma}
\end{align*}
Now, summing these three terms yields that
\[ \EE_{p_t \sim \Delta_t} [\bar{g}_t(p_t) \mid \mathcal{H}_{t-1} ] - \bar{g}_t(\bar{p}_t) \le \frac{K}{\gamma} + \gamma \cdot \EE_{p_t \sim \Delta_t} [(\bar{g}_t(p_t) - \hat{g}_t(p_t))^2 \mid \mathcal{H}_{t-1}]. \]
After summing over $t =1, \dots, T$, we have
\[ \preg(T) = \sum_{t=1}^T \EE_{p_t \sim \Delta_t} [\bar{g}_t(p_t) \mid \mathcal{H}_{t-1} ] - \bar{g}_t(\bar{p}_t) \le \frac{KT}{\gamma} + \gamma \sum_{t=1}^T \EE_{p_t \sim \Delta_t} [(\bar{g}_t(p_t) - \hat{g}_t(p_t))^2 \mid \mathcal{H}_{t-1}]. \]
Next, we upper bound the RHS with some case work.
\begin{itemize}
    \item  $\bar{x}(p_t; \theta_t) \ge d_t \ge \hat{f}_t(p_t; \theta_t)$:
    \begin{align*}
        (|\bar{x}(p_t; \theta_t) - d_t| - |\hat{f}_t(p_t; \theta_t) - d_t|)^2 
        &\le (|\bar{x}(p_t; \theta_t) - d_t| + |\hat{f}_t(p_t; \theta_t) - d_t|)^2 \\
        &= (\bar{x}(p_t; \theta_t) - d_t + d_t - \hat{f}_t(p_t; \theta_t))^2 \\ 
        &= (\bar{x}(p_t; \theta_t) - \hat{f}_t(p_t; \theta_t))^2
    \end{align*}
    \item $\hat{f}_t(p_t; \theta_t) \ge d_t \ge \bar{x}(p_t; \theta_t)$, similar to the previous case, we have
    \[(|\bar{x}(p_t; \theta_t) - d_t| - |\hat{f}_t(p_t; \theta_t) - d_t|)^2 \le (\bar{x}(p_t; \theta_t) - \hat{f}_t(p_t; \theta_t))^2 \]
    \item $\hat{f}_t(p_t; \theta_t), \bar{x}(p_t; \theta_t) \ge d_t$ or $\bar{x}(p_t; \theta_t), \hat{f}_t(p_t; \theta_t) \le d_t$:
    \[(|\bar{x}(p_t; \theta_t) - d_t| - |\hat{f}_t(p_t; \theta_t) - d_t|)^2 = (\bar{x}(p_t; \theta_t) - d_t + d_t - \hat{f}_t(p_t; \theta_t))^2 = (\bar{x}(p_t; \theta_t) - \hat{f}_t(p_t; \theta_t))^2 \]
\end{itemize}
Therefore,
\begin{equation}\label{equ:reg-to-est}
    \preg(T) = \sum_{t=1}^T \EE_{p_t \sim \Delta_t} [\bar{g}_t(p_t) \mid \mathcal{H}_{t-1} ] - \bar{g}_t(\bar{p}_t) \le \frac{KT}{\gamma} + \gamma \sum_{t=1}^T \EE_{p_t \sim \Delta_t} [(\bar{x}(p_t; \theta_t) - \hat{f}_t(p_t; \theta_t))^2 \mid \mathcal{H}_{t-1}].
\end{equation}
We finish the proof by claiming that the final term can be bounded by the online regression oracle's prediction error guarantee with high probability.
Since $\bar{x}$ and $\hat{f}_t$ are bounded, we can apply Lemma~\ref{lem:freedman2} to get
\begin{equation}\label{equ:freedman1}
    \sum_{t=1}^T \EE_{p_t \sim \Delta_t} [(\bar{x}(p_t; \theta_t) - \hat{f}_t(p_t; \theta_t))^2 \mid \mathcal{H}_{t-1}] \le 2 \sum_{t=1}^T (\bar{x}(p_t) - \hat{f}_t(p_t))^2 + O(\log(1/\delta))
\end{equation}
Let $x_t$ be our observation on suppliers' production at time $t$.
We expand the RHS as follows:
\begin{align*}
    (\bar{x}(p_t; \theta_t) - \hat{f}_t(p_t; \theta_t))^2
    &= \bar{x}(p_t; \theta_t)^2 - 2\bar{x}(p_t; \theta_t)\hat{f}_t(p_t; \theta_t) + \hat{f}_t(p_t; \theta_t)^2 \\
    &= 2 \bar{x}(p_t; \theta_t)^2 - 2\bar{x}(p_t; \theta_t)\hat{f}_t(p_t; \theta_t) + 2 x_t (\hat{f}_t(p_t; \theta_t) - \bar{x}(p_t; \theta_t)) \\
    &\hspace{10em}+ \hat{f}_t(p_t; \theta_t)^2 - 2 x_t (\hat{f}_t(p_t; \theta_t) - \bar{x}(p_t; \theta_t)) - \bar{x}(p_t; \theta_t)^2 \\
    &= 2(x_t - \bar{x}(p_t; \theta_t))(\hat{f}_t(p_t; \theta_t)  - \bar{x}(p_t; \theta_t)) + (\hat{f}_t(p_t; \theta_t) - x_t)^2 - (\bar{x}(p_t; \theta_t) - x_t)^2 
\end{align*}
Therefore,
\begin{equation}\label{equ:diff-expand}
\begin{aligned}
    & \sum_{t=1}^T (\bar{x}(p_t; \theta_t) - \hat{f}_t(p_t; \theta_t))^2 \\
    ={}& \sum_{t=1}^T (\hat{f}_t(p_t; \theta_t) - x_t)^2 - \sum_{t=1}^T (\bar{x}(p_t; \theta_t) - x_t)^2 + \sum_{t=1}^T 2(x_t - \bar{x}(p_t; \theta_t))(\hat{f}_t(p_t; \theta_t)  - \bar{x}(p_t; \theta_t)) \\
    \le{}& \sum_{t=1}^T (\hat{f}_t(p_t; \theta_t) - x_t)^2 - \inf_{f \in \mathcal{F}}\sum_{t=1}^T (f(p_t; \theta_t) - x_t)^2 + \sum_{t=1}^T 2(x_t - \bar{x}(p_t; \theta_t))(\hat{f}_t(p_t; \theta_t)  - \bar{x}(p_t; \theta_t)) \\
    ={}& \est(T) + \sum_{t=1}^T 2(x_t - \bar{x}(p_t; \theta_t))(\hat{f}_t(p_t; \theta_t)  - \bar{x}(p_t; \theta_t))
\end{aligned}
\end{equation}
The last term can be expanded as
\begin{align*}
    &\sum_{t=1}^T (x_t - \bar{x}(p_t; \theta_t))(\hat{f}_t(p_t; \theta_t)  - \bar{x}(p_t; \theta_t)) \\
    ={}& \sum_{t=1}^T (x_t - x^*(p_t; \theta_t) + x^*(p_t; \theta_t) - \bar{x}(p_t; \theta_t))(\hat{f}_t(p_t; \theta_t)  - \bar{x}(p_t; \theta_t)) \\
    \le{}& \sum_{t=1}^T (x_t - x^*(p_t; \theta_t))(\hat{f}_t(p_t; \theta_t)  - \bar{x}(p_t; \theta_t)) + \sum_{t=1}^T \varepsilon\cdot |\hat{f}_t(p_t; \theta_t)  - \bar{x}(p_t; \theta_t)| \\
    \le{}& \sum_{t=1}^T (x_t - x^*(p_t; \theta_t))(\hat{f}_t(p_t; \theta_t)  - \bar{x}(p_t; \theta_t)) + \sum_{t=1}^T 6 \varepsilon^2 + \frac{1}{24}(\hat{f}_t(p_t; \theta_t)  - \bar{x}(p_t; \theta_t))^2
\end{align*}
Since $\EE[x_t - x^*(p_t; \theta_t) \mid \mathcal{H}_{t-1}] = 0$ and both $x_t - x^*(p_t; \theta_t)$ and $\hat{f}_t(p_t; \theta_t)  - \bar{x}(p_t; \theta_t)$ are bounded above by $2\bar{d}$, we can apply Lemma~\ref{lem:freedman1} to the first sum with $\eta = 1/(64\bar{d}^2)$.
And for the second sum, we can apply Lemma~\ref{lem:freedman2}.
Then, with probability at least $1-\delta$, we have
\begin{equation}\label{equ:freedman2}
\begin{aligned}
    &\sum_{t=1}^T (x_t - \bar{x}(p_t; \theta_t))(\hat{f}_t(p_t; \theta_t)  - \bar{x}(p_t; \theta_t)) \\
    \le{}& \sum_{t=1}^T \left(\frac{1}{64 \bar{d}^2} \EE[(x_t - f^*(p_t; \theta_t))^2(\hat{f}_t(p_t; \theta_t)  - \bar{x}(p_t; \theta_t))^2 \mid \mathcal{H}_{t-1}] + \frac{1}{16}\EE[(\hat{f}_t(p_t; \theta_t)  - \bar{x}(p_t; \theta_t))^2 \mid \mathcal{H}_{t-1}]\right) \\
    &\hspace{26em}+ O(\varepsilon^2 T + \log(1/\delta)) \\
    \le{}& \frac{1}{8} \sum_{t=1}^T  \EE[(\hat{f}_t(p_t; \theta_t)  - \bar{x}(p_t; \theta_t))^2 \mid \mathcal{H}_{t-1}] + O(\varepsilon^2 T + \log(1/\delta))
\end{aligned}
\end{equation}
Putting \eqref{equ:freedman1}, \eqref{equ:diff-expand} and \eqref{equ:freedman2} together gives us
\begin{align*}
    & \sum_{t=1}^T \EE_{p_t \sim \Delta_t} [(\bar{x}(p_t; \theta_t) - \hat{f}_t(p_t; \theta_t))^2 \mid \mathcal{H}_{t-1}] \\
    \le{}& 2 \est(T) + \frac{1}{2} \sum_{t=1}^T \EE_{p_t \sim \Delta_t} [(\bar{x}(p_t; \theta_t) - \hat{f}_t(p_t; \theta_t))^2 \mid \mathcal{H}_{t-1}] + O(\varepsilon^2 T + \log(1/\delta))
\end{align*}
Therefore, 
\begin{equation}\label{equ:est-to-oracle}
    \sum_{t=1}^T \EE_{p_t \sim \Delta_t} [(\bar{x}(p_t; \theta_t) - \hat{f}_t(p_t; \theta_t))^2 \mid \mathcal{H}_{t-1}] \le 4 \est(T) + O(\varepsilon^2 T + \log(1/\delta))
\end{equation}
Combining \eqref{equ:reg-to-est} and \eqref{equ:est-to-oracle}, and taking
\[\gamma = \sqrt{\frac{KT}{\est(T) + \varepsilon^2 T + \log(1/\delta)}},\]
we conclude the following regret bound with probability $1-\delta$:
\begin{align*}
    \preg(T)
    &\lesssim \frac{KT}{\gamma} + \gamma \cdot \left(\est(T) + \varepsilon^2 T + \log(1/\delta)\right) \\
    &\lesssim \sqrt{KT(\est(T) + \varepsilon^2 T + \log(1/\delta))} \\
    &\lesssim \sqrt{KT \cdot\est(T)} + \varepsilon\sqrt{K} \cdot T + \sqrt{KT \log(1/\delta)}
\end{align*}
Finally, we have
\begin{align*}
    \textsc{Reg}(T) 
    &\lesssim \sqrt{KT \cdot\est(T)} + \varepsilon\sqrt{K} \cdot T + \sqrt{KT \log(1/\delta)} + \varepsilon T + \frac{T}{K} \\
    &\lesssim \sqrt{KT \cdot\est(T)} + \frac{T}{K} + \varepsilon\sqrt{K} \cdot T + \sqrt{KT \log(1/\delta)},
\end{align*}
with probability $1-\delta$.
so we are done.

\end{document}